\providecommand{\U}[1]{\protect\rule{.1in}{.1in}}
\newtheorem{thm}{Theorem}[section]
\newtheorem{Assumption}{\bf Assumption}
\newtheorem{corollary}{Corollary}[section]
\newtheorem{lem}{Lemma}[section]
\newtheorem{prop}{Proposition}[section]
\newtheorem{rem}{Remark}[section]
\newenvironment{proof}[1][Proof]{\noindent\textbf{#1.} }{\ \rule{0.5em}{0.5em}}
\numberwithin{equation}{section}
\definecolor{linkcolor}{rgb}{0,0,0.7}
\begin{document}

\title{Optimal Portfolio with Ratio Type Periodic Evaluation under Short-Selling Prohibition}
\author{Wenyuan Wang\thanks{School of Mathematics and Statistics, Fujian Normal University, Fuzhou, 350007, China; and School of Mathematical Sciences, Xiamen University, Xiamen, 361005, China. Email: wwywang@xmu.edu.cn}
\and
Kaixin Yan\thanks{School of Mathematical Sciences, Xiamen University, Xiamen, 361005, China. Email: kaixinyan@stu.xmu.edu.cn}
\and
Xiang Yu\thanks{Department of Applied Mathematics, The Hong Kong Polytechnic University, Kowloon, Hong Kong. E-mail: xiang.yu@polyu.edu.hk}
}

\date{\ }

\maketitle
\vspace{-0.4in}

\begin{abstract}
This paper studies some unconventional utility maximization problems when the ratio type relative portfolio performance is periodically evaluated over an infinite horizon. Meanwhile, the agent is prohibited from short-selling stocks. Our goal is to understand the impact of the periodic reward structure on the long-run constrained portfolio strategy. For power and logarithmic utilities, we can reformulate the original problem into an auxiliary one-period optimization problem. To cope with the auxiliary problem with no short-selling, the dual control problem is introduced and studied, which gives the characterization of the candidate optimal portfolio within one period. With the help of the results from the auxiliary problem, the value function and the optimal constrained portfolio for the original problem with periodic evaluation can be derived and verified, allowing us to discuss some financial implications under the new performance paradigm.    
 
\ \\
\textbf{Keywords}:\quad Optimal portfolio, periodic evaluation, relative performance, short-selling constraint, duality approach.
 
\end{abstract}

\vspace{0.2in}
\section{Introduction}

The classical portfolio management via utility maximization has two essential ingredients: the predetermined time horizon $T$ and the prescribed utility function capturing the agent's risk aversion. Since the pioneer studies in \cite{Mer69} and \cite{Mer71}, the finite-horizon utility maximization problems on terminal wealth have been extensively studied in the vast literature by featuring market incompleteness, trading constraints, the mixture with other decision making such as intermediate consumption or optimal stopping, among others. 

On the other hand, for the portfolio management by institutional managers such as mutual funds, the performance is often evaluated in a long run fashion instead of a short-dated horizon, giving rise to various long-term horizon problem formulations on investment decision making. For example, the Kelly's criterion, also called the optimal long-run growth rate with expected logarithmic utility, has been a well-known performance measure for the infinite horizon portfolio management. To better account for diverse risk attitudes by different agents, \cite{BPK99} and \cite{FlemingS99} independently introduced and studied the optimal long-run growth rate of expected utility as risk-sensitive stochastic control problems. Later, \cite{Pham03} incorporated the stochastic benchmark and studied the long-run outperformance criterion as a large deviation probability control problem. From a different perspective, \cite{MZ08}, \cite{MZ09}, \cite{MZ10} and many subsequent studies investigated the possibility of rolling forward the utility function in a time-consistent manner. In their studies, to overcome the limits in the static choice of the risk preference and the time horizon, it has been proposed to replace the conventional utility function by the forward performance measure, which can give guidance in the portfolio management in a possibly infinite horizon situation. 

Recently, \cite{TZ23} proposed another new criterion for the infinite horizon portfolio management featuring the periodic evaluation, where their utility is S-shaped and is generated by the difference between the current wealth level and the self-generated benchmark defined as the wealth level at the previous evaluation date. In \cite{TZ23}, the periodic evaluation is conducted over an infinite horizon and the goal of the agent is to maximize the expected total sum of all discounted utilities. This unconventional performance criterion not only can provide the optimal portfolio to meet the long run purpose, but also can capture the needs of the benchmark-based periodic performance evaluation that have been exercised in practice, such as in the annual appraisal review in the fund industry. However, due to the structure of the S-shaped utility in \cite{TZ23}, the issue of bankruptcy may become inevitable in some circumstances and the portfolio decision has to be ceased once the bankruptcy occurs at some future time. In the present paper, we are interested in extending the periodic evaluation in \cite{TZ23} by considering its variation when the relative performance is generated by the ratio of wealth processes at two evaluation dates. To simplify the mathematical analysis, we only focus on power and logarithmic utilities. Our ratio type evaluation inherently rules out the chance of bankruptcy from the admissible portfolio processes. Therefore, in contrast to \cite{TZ23}, our main results provide the portfolio strategies for the entire infinite horizon.

On the other hand, short-selling prohibition has attracted a lot of attention in academic studies, which has been widely used as a regulatory way to stabilize stock prices in the financial market and also as an effective method to minify the portfolio risk for the fund management. See, for instance, some empirical studies on the short-selling constraint and the economic influence in the financial market among \cite{AMP93}, \cite{CHS02}, \cite{ABCC04}, \cite{GS06}, \cite{BP13}, \cite{BJ13}, \cite{GM15}, and et al. See also some relevant studies on portfolio optimization with no short-selling such as the utility maximization (\cite{XS92(a)}, \cite{XS92(b)}, \cite{SH94}); the mean-variance portfolio optimization (\cite{LZ02}); the arbitrage theory in general market models (\cite{PS14}), and the references therein. 

The present paper aims to account for the short-selling constraint into the periodic evaluation of the relative performance. Inspired by \cite{TZ23}, we first reformulate the infinite horizon optimization problem with a periodic structure into an auxiliary one-period portfolio optimization problem under the short-selling constraint. To tackle the challenge of portfolio constraint, we employ the martingale duality approach developed in \cite{XS92(a)} and \cite{XS92(b)} and modify some technical arguments to fit into our formulation.  

We separate the discussions on the power utility and the logarithmic utility. The case of power utility is more interesting and technically involved as the optimal portfolio process differs fundamentally from its counterpart in the classical Merton problem. The heuristic characterization of the value function turns into a fixed point problem associated to a one-period optimization problem \eqref{A*.def} with a moderated utility function \eqref{2.8}. To tackle the auxiliary optimization problem with portfolio constraint and show the existence of the unique fixed point, we consider a proper dual control problem similar to \cite{XS92(a)} and \cite{XS92(b)}. Using some duality representation results and technical estimations, we can first prove the existence of the fixed point in the auxiliary problem. Then, we can take advantage of some results in the auxiliary one-period optimization problem and extend them in a periodic manner to obtain and verify the optimal constrained portfolio process in the original infinite horizon problem (see Theorem \ref{thm4.1}).

The case of logarithmic utility does not satisfy the sufficient conditions on utility functions in \cite{XS92(a)} and \cite{XS92(b)}, but we can similarly consider the dual control problem and manage to establish the duality result for the auxiliary optimization problem \eqref{A*.dual} within one period. More importantly, thanks to the property of the logarithmic utility, we can characterize the value function and the optimal feedback portfolio control explicitly for the original problem (see Theorem \ref{thm3.1}), where the feedback function actually is independent of the periodic structure and takes a unified form for the entire infinite horizon. It is also interesting to observe that under the logarithmic utility, the feedback function of the optimal constrained portfolio coincides with the result in the classical Merton problems under the short-selling constraint. In other words, under our new periodic evaluation formation with logarithmic utility, the optimal portfolio is to roll over the feedback control from the Merton solution forward in time.

Building upon the theoretical results, we also make some initial attempts in discussing the optimal choice of the period length tailor-made for the periodic evaluation. Indeed, contrary to the conventional utility maximization on terminal wealth, the effect of the time horizon vanishes in our framework. Instead, the value function and the optimal portfolio strategy evidently depend on the preset period length, especially in the case of power utility. Some other numerical illustrations on the sensitivity results with respect to model parameters are also presented.

The remainder of this paper is organized as follows. Section \ref{sec-formulation} introduces the market model and the problem formulation under the ratio type relative portfolio performance over an infinite horizon with the short-selling constraint. Section \ref{section4} investigates the case of power utility, where the characterization of the value function is equivalent to the existence of a fixed point to a nonlinear operator. The periodic characterization of the optimal portfolio is obtained therein together with some numerical illustrations. Section \ref{section3} further focuses on the optimization problem under logarithmic utility. With the help of the dual control problem and properties of the logarithmic function, the value function and the optimal portfolio process can be obtained explicitly over the entire horizon. Section \ref{sec-proof} collects all proofs of the results presented in the previous sections.

\section{Market Model and Problem Formulation}\label{sec-formulation}
Let $(\Omega,\mathcal{F},\{\mathcal{F}\}_{t\geq0},\mathbb{P})$ be a standard filtered probability space supporting a $n$-dimensional Brownian motion $W=(W^1_{t},...,W^n_{t})_{t\geq0}$. We consider a complete market model with $n$ risky assets and one risk-free asset. The risk-free asset has a constant interest rate $r\geq 0$ with the price process $B=(B_t)_{t\geq0}$. The price process of the $i$-th risky asset $S^i=(S^i_t)_{t\geq0}$ is governed by 
$$\frac{dS^i_t}{S^i_t}=\mu_i dt+\sum_{j=1}^n\sigma_{ij} dW^i_t,\quad t\geq 0.$$
We denote $\mu:=(\mu_1,...,\mu_n)^{\mathsf{T}}$ with $\mu_i\in\mathbb{R}$ as the mean return, and denote $\sigma:=(\sigma_{ij})$ with $\sigma_{ij}\in\mathbb{R}$ as the $n\times n$ covariance matrix. It is assumed that the strong non-degeneracy condition holds that
$$a^{\mathsf{T}}\sigma\sigma^{\mathsf{T}}a\geq \kappa_0\|a\|^2,\quad a\in\mathbb{R}^n,$$ for some $\kappa_0>0.$ Let us denote the Sharpe ratio
\begin{eqnarray}\label{xi}
\xi:=\sigma^{-1}{(\mu-r\mathbf{1})},
\end{eqnarray}
where $\mathbf{1}$ denotes the identity vector with each component equal to 1.

A trading strategy $\pi=(\pi^1_t,...,\pi^n_t)_{t\geq0}$ is a predictable process with $\pi^i_t$ representing the wealth invested in the $i$-th risky asset at time $t$. Under the given trading strategy $\pi$, we denote by $X=(X_t)_{t\geq0}$ the self-financing wealth process with the initial capital $X_0=x>0$ that satisfies the dynamics that 
\begin{eqnarray}
X_t^{\pi}=x+\sum_{i=1}^n\int_0^t{\pi^i_u}\frac{dS^i_u}{S^i_u}+\int_0^t(X^{\pi}_u-\mathbf{1}^{\mathsf{T}}\pi_u)\frac{dB_u}{B_u},\quad t\in[0,\infty).\nonumber
\end{eqnarray}

In the present paper, we are particularly interested in the portfolio constraint of no short-selling, i.e., it is required that $\pi^i_t\geq 0$ for $t\geq 0$, $i=1,\ldots,n$.

The dynamics of the controlled wealth $X^{\pi}$ (denoted by $X$ for short) can be rewritten as
\begin{eqnarray}
dX_t=
[rX_t+\pi_t^{\mathsf{T}}(\mu-r\mathbf{1})]dt+\pi_t^{\mathsf{T}}\sigma dW_t.\nonumber
\end{eqnarray}

Motivated by the recent study in \cite{TZ23}, we are interested in a new type of infinite horizon utility maximization problem on wealth $(X_t)_{t\geq0}$ when the evaluation of the wealth performance is conducted periodically at a sequence of prescribed future dates $(T_i)_{i\geq0}$ with $T_0:=0$ and $T_i\rightarrow\infty$ as $i\rightarrow\infty$. For simplicity, we shall consider the equal period $T_i=i\tau$ for $i\geq0$ with some $\tau>0$ such that the portfolio is evaluated every $\tau$ unit of time (e.g. monthly or annually). However, in contrast to the problem formulation in \cite{TZ23} where the portfolio performance is measured by the difference $X_{T_i}-\gamma X_{T_{i-1}}$ periodically between each time epoch $T_i$ and the previous date $T_{i-1}$, we consider the relative wealth performance in the present paper as periodic evaluation in the ratio type that 
\begin{eqnarray}
\frac{X_{T_i}}{(X_{T_{i-1}})^{\gamma}},\quad i\geq 1,\nonumber
\end{eqnarray}
for some relative performance parameter $\gamma\in(0,1]$. That is, when $\gamma\rightarrow 1$, the agent cares more about the relative ratio comparing with the wealth level at the previous evaluation date $T_{i-1}$; while $\gamma\rightarrow 0$ indicates that the agent values more on the absolute value of the current wealth.

As a direct consequence, while the optimal wealth $X_t$ is allowed to hit $0$, i.e., the bankruptcy $X_t=0$ may occur at some time $t$ and will remain at the $0$ wealth level afterwards in \cite{TZ23}, our formulation in the ratio type enforces the positive wealth constraint $X_t>0$ as an inherent nature by the problem formulation and the resulting optimal wealth will retain positive for all time $t\geq 0$, which makes more sense if we are interested in the periodic evaluation in the infinite horizon manner.

Mathematically speaking, we aim to solve the optimal portfolio problem by periodically evaluating the ratio type of relative performance in the infinite horizon fashion, and our optimization problem is formulate as
\begin{align}\label{utilitymax}
\sup_{\pi\in\mathcal{U}_0(x)}\ \ \sum_{i=1}^{\infty}e^{-\delta T_i}\mathbb{E}\left[U\left(\frac{X_{T_i}}{\left(X_{T_{i-1}}\right)^{\gamma}}\right)\right],
\end{align}
where $\delta>0$ is the agent's subjective discount factor, and for technical convenience, we shall only consider two types of utility functions in the present paper, namely the power utility function $U(x)=\frac{1}{\alpha} x^{\alpha}$ with $\alpha\in(-\infty,0)\cup(0,1)$ and the logarithmic utility function $U(x)=\log x$. 

Moreover, the set of admissible portfolio processes in problem \eqref{utilitymax} is defined by 
\begin{eqnarray}\label{set.Ut}
\mathcal{U}_0(x)
\hspace{-0.3cm}&:=&\hspace{-0.3cm}
\bigg\{\pi:\pi=(\pi^1_t,...,\pi^n_t)_{t\geq0}\text{ is a predictable and locally square-integral process such that}
\nonumber\\
\hspace{-0.3cm}&&\hspace{-0.3cm}
\pi^i_t\geq0\text{ for }1\leq i\leq n,\text{ }X_t=x+\sum_{i=1}^n\int_0^t{\pi^i_u}\frac{dS^i_u}{S^i_u}+\int_0^t(X_u-\mathbf{1}^{\mathsf{T}}\pi_u)\frac{dB_u}{B_u}>0 \text{ for } t\geq 0,
\nonumber\\
\hspace{-0.3cm}&&\hspace{-0.3cm}
\color{black}\text{and } 
\sum_{i=1}^{\infty}e^{-\delta T_i}\mathbb{E}\left[\left(U\left(\frac{X_{T_i}}{\left(X_{T_{i-1}}\right)^{\gamma}}\right)\right)_{-}\right]<\infty
\bigg\},\quad x\in\mathbb{R}_+,
\end{eqnarray}
with $a_{-}=\max\{-a,0\}$. As stated earlier, the condition that $\pi^i_t\geq 0$ rules out the short-selling strategy in the $i$-th risky asset. On the other hand, $X_t-\mathbf{1}^{\mathsf{T}}\pi_t$ is allowed to be negative, i.e., borrowing from the money market is permitted in the present paper. We also note that the last condition is of the integrability type, which is needed to ensure the well-posedness of the periodic portfolio evaluation and optimization problems in Section \ref{section4} and \ref{section3}.

\section{Periodic Evaluation under Power Utility}\label{section4}
In this section, we first consider the case of power utility function
$$U(x):=\frac{1}{\alpha}x^{\alpha},\quad x\in\mathbb{R}_+,$$ with $\alpha\in (-\infty, 0)\cup(0,1)$ respresenting the risk aversion of the agent. The value function of the optimal portfolio under periodic evaluation is written by
\begin{eqnarray}\label{problem}
V(x):=\sup_{\pi\in\mathcal{U}_0(x)}\mathbb{E}\left[\sum_{i=1}^{\infty}e^{-\delta T_i}\frac{1}{\alpha}\left(\frac{X_{T_i}}{\left(X_{T_{i-1}}\right)^{\gamma}}\right)^{\alpha}\right],
\end{eqnarray}
where $\mathcal{U}_0$ is the set of admissible portfolio processes defined by \eqref{set.Ut}. Again, using the dynamic programming principle, we can heuristically derive that
\begin{eqnarray}\label{ddp}
V(x)=\sup_{\pi\in\mathcal{U}_0(x)}\mathbb{E}\left[e^{-\delta T_1}\frac{1}{\alpha}\frac{X^{\alpha}_{T_1}}{x^{\alpha\gamma}}+e^{-\delta T_1}V(X_{T_1})\right].
\end{eqnarray}

For the wellposedness of the problem, the following standing assumption is imposed throughout this section.
\begin{Assumption}
\label{ass1}
The model parameters satisfy that
\begin{eqnarray}
\delta>\zeta(\alpha(1-\gamma))\vee0,\nonumber
\end{eqnarray}
where the function $$\zeta(x):=rx+\frac{x\|\Tilde{\xi}\|^2}{2(1-x)},\quad x\in(-\infty,1),$$
and $\Tilde{\xi}=\xi+\sigma^{-1}\Tilde{\pi}^*$, with $\Tilde{\pi}^*$ being the minimizer of $\|\xi+\sigma^{-1}\Tilde{\pi}\|$ in $\Tilde{\pi}\in[0,\infty)^n.$
\end{Assumption}

In view of the scaling property of the utility function $U(x)=x^{\alpha}U(1)$, we heuristically conjecture our value function in the form of $V(x)=\frac{1}{\alpha}A^*x^{\alpha(1-\gamma)}$ for some non-negative constant $A^*$. Substituting this expression of $V$ in \eqref{ddp} and dividing both sides by $\frac{1}{\alpha}x^{\alpha(1-\gamma)}$, we obtain that
\begin{eqnarray}\label{A*.def}
A^*
\hspace{-0.3cm}&=&\hspace{-0.3cm}
{\alpha}\sup_{\pi\in\mathcal{U}_0(x)}\mathbb{E}\left[e^{-\delta T_1}U\left(\frac{X_{T_1}}{x}\right)+e^{-\delta T_1}\frac{1}{\alpha}A^*\left(\frac{X_{T_1}}{x}\right)^{\alpha(1-\gamma)}\right]
\nonumber\\
\hspace{-0.3cm}&=&\hspace{-0.3cm}
{\alpha}\sup_{\pi\in\mathcal{U}_0(1)}\mathbb{E}\left[e^{-\delta \tau}U\left(X_{\tau}\right)+e^{-\delta \tau}\frac{1}{\alpha}A^*X_{\tau}^{\alpha(1-\gamma)}\right].
\end{eqnarray}
Consequently, in the case of power utility, the characterization of the value function $V$ can be equivalently formulated as the characterization of the unknown non-negative constant $A^*$ as a fixed-point of the operator defined in a one-dimensional Euclidean space. To this end, we will first establish the existence of the optimizer to the auxiliary problem \eqref{A*.def} with a given $A^*$, and then show the existence of the fixed point $A^*$ to the operator. 

To cope with the short-selling constraint,
we follow \cite{XS92(a)} to employ the martingale duality approach.
To this purpose, let us introduce $h_a:\mathbb{R}_+\rightarrow\mathbb{R}$ by
\begin{eqnarray}
\label{2.8}
h_a(x):=\frac{1}{\alpha}x^{\alpha}+\frac{1}{\alpha}ax^{\alpha(1-\gamma)},
\end{eqnarray}
where $a\in\mathbb{R}_+$ is regarded as a parameter of the function $h_a$.

As a preparation for the main result, we first derive some preliminary  properties of the function $h_a$, which plays a pivotal role in later proofs.
\begin{lem}\label{lem2.1}
For a fixed $a\in\mathbb{R}_+$, the function $h_a(x)$ defined by \eqref{2.8} is strictly increasing and strictly concave on $(0,\infty)$, and it holds that $h_a^{\prime}(0+)=\infty$ and $h_a^{\prime}(\infty)=0$. In addition, there exist some constants $\vartheta\in(0,1)$ and $\varrho\in(1,\infty)$ such that 
\begin{eqnarray}\label{h_a'>h_a'}
\vartheta h_a^{\prime}(x)\geq h_a^{\prime}(\varrho x),\quad x\in\mathbb{R}_+.
\end{eqnarray} 
Furthermore, when $\alpha\in(0,1)$, there exist some constants $\kappa_1\in(0,\infty)$ and $\rho_1\in(0,1)$ such that
\begin{eqnarray}\label{h_a<kappa}
0< h_a(x)\leq \kappa_1(1+x^{\rho_1}),\quad x\in\mathbb{R}_+.
\end{eqnarray}
\end{lem}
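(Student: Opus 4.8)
The plan is to verify each assertion in Lemma~\ref{lem2.1} by elementary calculus on the explicit expression \eqref{2.8}, treating the cases $\alpha\in(0,1)$ and $\alpha<0$ in parallel where they differ only by sign bookkeeping. First I would compute
\[
h_a'(x)=x^{\alpha-1}+a(1-\gamma)x^{\alpha(1-\gamma)-1},\qquad
h_a''(x)=(\alpha-1)x^{\alpha-2}+a(1-\gamma)\bigl(\alpha(1-\gamma)-1\bigr)x^{\alpha(1-\gamma)-2}.
\]
Since $\gamma\in(0,1]$ we have $\alpha(1-\gamma)\le\alpha<1$, so both exponents $\alpha-1$ and $\alpha(1-\gamma)-1$ are negative, and $a(1-\gamma)\ge0$; hence $h_a'(x)>0$ on $(0,\infty)$, giving strict monotonicity. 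Likewise $\alpha-1<0$ and $\alpha(1-\gamma)-1<0$ force both coefficients in $h_a''$ to be strictly negative (the product of the two negative factors $(1-\gamma)$ and $\alpha(1-\gamma)-1$ times the sign of $a$—here I note $a(1-\gamma)\ge0$ and $\alpha(1-\gamma)-1<0$, so that term is $\le0$, while $(\alpha-1)x^{\alpha-2}<0$ strictly), so $h_a''<0$ and $h_a$ is strictly concave. The boundary behaviour $h_a'(0+)=\infty$ and $h_a'(\infty)=0$ is immediate from the negative exponents: as $x\to0+$ the leading term $x^{\alpha-1}\to\infty$, and as $x\to\infty$ both terms vanish.

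For the inequality \eqref{h_a'>h_a'}, the idea is that $h_a'$ is a positive linear combination of two pure power functions $x^{p}$ and $x^{q}$ with $p=\alpha-1<0$ and $q=\alpha(1-\gamma)-1<0$. For any single power $x^{p}$ with $p<0$ we have $(\varrho x)^{p}=\varrho^{p}x^{p}$ with $\varrho^{p}<1$ whenever $\varrho>1$; choosing $\vartheta:=\varrho^{\max\{p,q\}}\in(0,1)$ (the larger, i.e.\ less negative, exponent gives the worse ratio) yields $\varrho^{p}\le\vartheta$ and $\varrho^{q}\le\vartheta$ simultaneously, hence $h_a'(\varrho x)=\varrho^{p}x^{p}+a(1-\gamma)\varrho^{q}x^{q}\le\vartheta h_a'(x)$. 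Fixing any $\varrho\in(1,\infty)$, say $\varrho=2$, and the corresponding $\vartheta$ works for all $x\in\mathbb{R}_+$. Note this argument is uniform in the parameter $a\ge0$, which is what the lemma needs downstream.

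For the growth bound \eqref{h_a<kappa} in the case $\alpha\in(0,1)$, positivity $h_a(x)>0$ is clear since both terms in \eqref{2.8} are positive when $\alpha>0$. For the upper bound, set $\rho_1:=\alpha\in(0,1)$; then $x^{\alpha}\le 1+x^{\alpha}=1+x^{\rho_1}$, and since $\alpha(1-\gamma)\le\alpha=\rho_1$ we have $x^{\alpha(1-\gamma)}\le 1+x^{\alpha(1-\gamma)}\le 1+x^{\rho_1}$ (using $x^{s}\le 1+x^{t}$ for $0\le s\le t$, splitting at $x=1$). Collecting the constants gives $h_a(x)\le\frac{1}{\alpha}(1+a)(1+x^{\rho_1})$. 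Strictly speaking the lemma wants $\kappa_1$ independent of the running variable only, so $\kappa_1=\frac{1}{\alpha}(1+a)$ suffices for fixed $a$; if uniformity in $a$ over a bounded range is desired one absorbs the bound on $a$ into $\kappa_1$.

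None of the steps is genuinely hard; the only point requiring a little care is bookkeeping the sign of $h_a''$ and of the exponents across the two regimes $\alpha>0$ and $\alpha<0$—in particular making sure that $\alpha(1-\gamma)-1<0$ holds in both cases (it does, since $\alpha(1-\gamma)\le\alpha<1$ when $\alpha>0$, and $\alpha(1-\gamma)<0<1$ when $\alpha<0$) so that strict concavity and the limits $h_a'(0+)=\infty$, $h_a'(\infty)=0$ are valid uniformly. The extraction of explicit constants $\vartheta,\varrho,\kappa_1,\rho_1$ that are \emph{uniform in $a$} (to the extent the subsequent proofs exploit this) is the one place where one should be deliberate rather than cavalier.
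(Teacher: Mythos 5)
Your proof is correct and follows the same elementary route as the paper: differentiate \eqref{2.8} twice, read off signs from the negative exponents, and compare power functions. The one genuine point of divergence is the choice of $\vartheta$ in \eqref{h_a'>h_a'}. The paper fixes $\vartheta=\varrho^{\alpha-1}$ for an arbitrary $\varrho>1$ and justifies the resulting inequality by invoking $\varrho^{\alpha\gamma}>1$, which holds only for $\alpha\in(0,1)$; for $\alpha<0$ (with $a>0$, $\gamma<1$) one computes
\begin{equation*}
\vartheta h_a'(x)-h_a'(\varrho x)=a(1-\gamma)(\varrho x)^{\alpha(1-\gamma)-1}\left(\varrho^{\alpha\gamma}-1\right)<0,
\end{equation*}
so that particular choice of $\vartheta$ does not work in the negative-power regime. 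Your choice $\vartheta=\varrho^{\max\{\alpha-1,\,\alpha(1-\gamma)-1\}}$ dominates both exponents simultaneously and is valid uniformly for $\alpha\in(0,1)$ and $\alpha<0$; this is the cleaner argument and it closes the gap in the other case. Two cosmetic remarks: your parenthetical about the ``two negative factors $(1-\gamma)$ and $\alpha(1-\gamma)-1$'' misstates the sign of $1-\gamma\geq0$, although the sentence that follows argues $h_a''<0$ correctly; and your constant $\kappa_1=\frac{1}{\alpha}(1+a)$ in \eqref{h_a<kappa} is perfectly adequate since $a$ is fixed in the statement of the lemma, so no uniformity in $a$ is needed there.
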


We can now rewrite the fixed point problem \eqref{A*.def} equivalently as the fixed point problem by
\begin{eqnarray}\label{problem3}
A^*=e^{-\delta\tau}H(A^*),
\end{eqnarray}
where the function $\mathbb{R}_+\ni a\mapsto H(a)\in \mathbb{R}_+$ is defined by
\begin{eqnarray}\label{poweraux}
H(a):=\alpha\sup_{\pi\in\mathcal{U}_{0}(1)}\mathbb{E}[h_a(X_{\tau})]=\alpha\sup_{\pi\in\mathcal{U}_{0}(1)}\mathbb{E}\left[\frac{1}{\alpha}X_{\tau}^{\alpha}+\frac{1}{\alpha}a X_{\tau}^{\alpha(1-\gamma)}\right].
\end{eqnarray}

Note that, for any fixed $a$, the auxiliary problem in \eqref{poweraux} is a one-period portfolio optimization problem under short-selling constraint, however, with the moderated utility function $h_a(x)$ instead of the standard power utility.


\subsection{Duality Method and Main Results}
We will first formulate a dual problem and establish the existence of the dual optimizer. Then, we close the duality gap between the auxiliary primal problem \eqref{poweraux} and the dual problem and characterize the primal optimizer satisfying the short-selling constraint using the dual optimal solution.


Let us first introduce some notations as follows. Recalling the Sharpe ratio $\xi$ in \eqref{xi}, let us define the martingale state price density process $Z=(Z_t)_{t\in[0,\tau]}$ by
\begin{eqnarray}\label{z.process}
Z_t=\exp\left(-{\xi}^{\mathsf{T}}W_t-\frac{1}{2}\|\xi\|^2 t\right),\quad t\in[0,\tau].
\end{eqnarray}
For a given function $f$, let us denote its Legendre-Fenchel transform by
\begin{eqnarray}
\Phi_f(y):=\sup_{x\geq0}(f(x)-yx),\quad y\in\mathbb{R}_+.\nonumber
\end{eqnarray}
Provided that $f$ is continuous and concave with $f^{\prime}(\infty)=0$, the maximizer attaining the supremum always exists (although not necessarily unique), which is denoted by $x_f^*(y):=\arg\max_{x\geq0}(f(x)-yx)$. It follows that $\Phi_f(y)=f(x_f^*(y))-yx_f^*(y)$.

By Lemma \ref{lem2.1}, one can define the inverse function of $h^{\prime}_a$ by $\mathbb{R}_+\ni y\mapsto I(y)\in\mathbb{R}_+$, which is a strictly decreasing function.
Let 
$\Phi_{h_a}(y)=\sup_{x\geq0}\{h_a(x)-xy\}$ be the Legendre-Fenchel transform of the concave function $h_a$. It is well known that
\begin{eqnarray}\label{phi(y)=h}
h_a(x^*_{h_a}(y))-x^*_{h_a}(y)y=h_a(I(y))-I(y)y,
\quad y\in\mathbb{R}_+,
\end{eqnarray}
as well as when $\alpha\in(0,1)\, (\alpha\in(-\infty,0),\,resp.)$
\begin{eqnarray}
\label{phi(0)}
\Phi_{h_a}(0)
\hspace{-0.3cm}&:=&\hspace{-0.3cm}
\lim_{y\rightarrow0+}\Phi_{h_a}(y)=h_a(\infty)=\infty\, (0,\,resp.),\\
\label{phi(infty)}
\Phi_{h_a}(\infty)
\hspace{-0.3cm}&:=&\hspace{-0.3cm}
\lim_{y\rightarrow\infty}\Phi_{h_a}(y)=h_a(0)=0\, (-\infty,\,resp.),
\end{eqnarray}
and 
\begin{eqnarray}\label{Phi'}
\Phi_{h_a}^{\prime}(y)=-x^*_{h_a}(y),\quad \Phi_{h_a}^{\prime\prime}(y)=-x_{h_a}^{*\,\prime}(y),\quad y\in\mathbb{R}_+.
\end{eqnarray}
In particular, the function $\Phi_{h_a}$ is a strictly decreasing, strictly convex and twice differentiable function. 
Define the set of dual control processes by
\begin{eqnarray}\label{tilde.U}
\Tilde{\mathcal{U}}_{s,t}
\hspace{-0.3cm}&:=&\hspace{-0.3cm}
\left\{\Tilde{\pi}:\Tilde{\pi}=(\Tilde{\pi}^1_u,...,\Tilde{\pi}^n_u)^{\mathsf{T}}_{u\in[s,t]}\text{ is a predictable process satisfying}\right.
\nonumber\\
\hspace{-0.3cm}&&\hspace{0.3cm}
\left.\text{$\Tilde{\pi}^i_u\geq0$ for $1\leq i\leq n$, and
$\mathbb{E}\left[\int_s^t\|\Tilde{\pi}_u\|^2du\right]<\infty$}\right\}.
\end{eqnarray}
For each $\Tilde{\pi}\in\Tilde{\mathcal{U}}_{0,\tau}$, let us consider the positive local martingale
\begin{eqnarray}\label{Z.pi}
Z^{\Tilde{\pi}}_t:=\exp\left(-\int_0^t\left[\xi+{\sigma}^{-1}{\Tilde{\pi}_s}\right]^{\mathsf{T}}dW_s-\frac{1}{2}\int_0^t\left\|\xi+{\sigma}^{-1}{\Tilde{\pi}_s}\right\|^2ds\right),\quad t\in[0,\tau].
\end{eqnarray}

Similar to \cite{XS92(a)} and \cite{XS92(b)}, we consider the objective of the dual control problem associated to the primal problem \eqref{problem3} that
\begin{eqnarray}
\Tilde{J}_a(y,\Tilde{\pi}):=\mathbb{E}\left[\Phi_{h_a}(y\frac{Z^{\Tilde{\pi}}_{\tau}}{B_{\tau}})\right]=\mathbb{E}\left[h_a(x^*_{h_a}(y\frac{Z^{\Tilde{\pi}}_{\tau}}{B_{\tau}}))-x^*_{h_a}(y\frac{Z^{\Tilde{\pi}}_{\tau}}{B_{\tau}})y\frac{Z^{\Tilde{\pi}}_{\tau}}{B_{\tau}}\right].\nonumber
\end{eqnarray}
The dual value function $\Tilde{V}_a$ is defined by
\begin{eqnarray}\label{dual.pro}
\Tilde{V}_a(y)=\inf_{\Tilde{\pi}\in\Tilde{\mathcal{U}}_{0,\tau}}\Tilde{J}_a(y,\Tilde{\pi}),\quad y\in\mathbb{R}_+.
\end{eqnarray}

 The existence of the dual optimizer to problem \eqref{dual.pro} is guaranteed in the next result.

\begin{prop}\label{existence}
The unique constant vector minimizer of $\|\xi+\sigma^{-1}\Tilde{\pi}\|$ over $\Tilde{\pi}\in[0,\infty)^n$ exists and is denoted by $\Tilde{\pi}^*\in\Tilde{\mathcal{U}}_{0,\tau}$, which is also
the optimal solution to the dual problem \eqref{dual.pro}. More specifically, with
\begin{align}
    \label{tilde.xi}\Tilde{\xi}&:=\xi+{\sigma}^{-1}{\Tilde{\pi}^*},\\
    \label{Z.tilde}Z^{\Tilde{\pi}^*}_t&:=\exp\left\{-\Tilde{\xi}^{\mathsf{T}}W_t-\frac{1}{2}\|\Tilde{\xi}\|^2t\right\},\quad t\in[0,\tau],
\end{align}
the value function of dual problem \eqref{dual.pro} is given by
\begin{eqnarray}
\Tilde{V}_a(y)=\mathbb{E}\left[\Phi_{h_a}(y\frac{Z^{\Tilde{\pi}^*}_{\tau}}{B_{\tau}})\right],\quad y\in\mathbb{R}_+.\nonumber
\end{eqnarray}
In addition, for any fixed $a\in\mathbb{R}_+$, the dual function $\Tilde{V}_a$ is finite, 
continuous, non-increasing and convex on $\mathbb{R}_+$, and it holds that when $\alpha\in(0,1)\,(\alpha\in(-\infty,0),\,resp.)$
\begin{eqnarray}
\Tilde{V}_a(0)=\lim_{y\rightarrow0+}\mathbb{E}\left[\Phi_{h_a}(y\frac{Z^{\Tilde{\pi}^*}_{\tau}}{B_{\tau}})\right]=\infty\,(0,\,resp.),\quad \Tilde{V}_a(\infty)=\lim_{y\rightarrow\infty}\mathbb{E}\left[\Phi_{h_a}(y\frac{Z^{\Tilde{\pi}^*}_{\tau}}{B_{\tau}})\right]=0\,(-\infty,\,resp.).\nonumber
\end{eqnarray}  
\end{prop}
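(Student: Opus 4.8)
The plan is to establish Proposition \ref{existence} in three stages: first solving the deterministic minimization problem over $\tilde\pi\in[0,\infty)^n$, then showing this constant control is in fact optimal for the stochastic dual problem \eqref{dual.pro}, and finally deriving the qualitative properties of $\tilde V_a$ and its boundary behavior. For the first stage, I would note that $\tilde\pi\mapsto\|\xi+\sigma^{-1}\tilde\pi\|^2$ is a strictly convex quadratic on $\mathbb R^n$ (strict convexity comes from the non-degeneracy of $\sigma$, hence of $\sigma^{-1}$) and the constraint set $[0,\infty)^n$ is closed and convex; since the objective is coercive on this set (it grows like $\|\tilde\pi\|^2$ as $\|\tilde\pi\|\to\infty$), a minimizer $\tilde\pi^*$ exists, and strict convexity makes it unique. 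This minimizer is a constant vector, clearly predictable and square-integrable on $[0,\tau]$, so $\tilde\pi^*\in\tilde{\mathcal U}_{0,\tau}$, and $Z^{\tilde\pi^*}$ is then a genuine martingale (the integrand $\tilde\xi$ is constant, so Novikov's condition is trivially satisfied), which also gives \eqref{Z.tilde}.

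The heart of the argument is the second stage: showing $\tilde\pi^*$ minimizes $\tilde J_a(y,\cdot)$ over all of $\tilde{\mathcal U}_{0,\tau}$. The key structural fact is that $\Phi_{h_a}$ is strictly decreasing and strictly convex (established just before the statement, via \eqref{Phi'} and Lemma \ref{lem2.1}). Since $\Phi_{h_a}$ is decreasing, to make $\mathbb E[\Phi_{h_a}(yZ^{\tilde\pi}_\tau/B_\tau)]$ small one wants $Z^{\tilde\pi}_\tau$ to be ``as large as possible,'' but $\mathbb E[Z^{\tilde\pi}_\tau]\le 1$ always (supermartingale), so there is a tension; convexity of $\Phi_{h_a}$ turns this into a Jensen-type comparison. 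Concretely, I would condition on the terminal value: writing $Z^{\tilde\pi}_\tau = Z^{\tilde\pi^*}_\tau\cdot M_\tau$ where $M$ is the exponential of $-\int_0^\cdot(\sigma^{-1}(\tilde\pi_s-\tilde\pi^*))^{\mathsf T}dW_s - \tfrac12\int_0^\cdot\|\cdots\|^2 ds$ adjusted by a cross term, and using that $\tilde\pi^*$ is the argmin so that $\|\xi+\sigma^{-1}\tilde\pi_s\|^2\ge\|\tilde\xi\|^2$ pointwise with the first-order condition $\langle\tilde\xi,\sigma^{-1}(\tilde\pi_s-\tilde\pi^*)\rangle\ge 0$ on the constraint cone — this is the variational inequality characterizing $\tilde\pi^*$. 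This inequality is exactly what forces $\mathbb E[\Phi_{h_a}(yZ^{\tilde\pi}_\tau/B_\tau)\mid \mathcal F^{\tilde\pi^*}]\ge\Phi_{h_a}(yZ^{\tilde\pi^*}_\tau/B_\tau)$ after a change of measure, along the lines of \cite{XS92(a)}; I would adapt their Lemma/Theorem verbatim where the hypotheses match, and re-verify the pieces that use $h_a$ rather than a pure power (the relevant growth and convexity hypotheses are supplied by Lemma \ref{lem2.1}, in particular \eqref{h_a<kappa} for $\alpha\in(0,1)$).

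For the last stage, finiteness of $\tilde V_a(y)$ for $y>0$ follows from evaluating at $\tilde\pi^*$: $\Phi_{h_a}(yZ^{\tilde\pi^*}_\tau/B_\tau)$ is dominated using \eqref{phi(0)}–\eqref{phi(infty)} together with the explicit lognormal law of $Z^{\tilde\pi^*}_\tau$ and the growth bound \eqref{h_a<kappa} (for $\alpha\in(0,1)$; for $\alpha<0$ one has $\Phi_{h_a}\le 0$ directly). Monotonicity and convexity of $\tilde V_a$ are inherited from those of $\Phi_{h_a}$, since $\tilde V_a(y)=\mathbb E[\Phi_{h_a}(yZ^{\tilde\pi^*}_\tau/B_\tau)]$ is now an explicit expectation of a decreasing convex function composed with a linear map in $y$. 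The boundary values $\tilde V_a(0+)$ and $\tilde V_a(\infty)$ follow by monotone convergence: as $y\downarrow 0$, $\Phi_{h_a}(yZ^{\tilde\pi^*}_\tau/B_\tau)\uparrow\Phi_{h_a}(0)=h_a(\infty)$, which is $+\infty$ when $\alpha\in(0,1)$ and $0$ when $\alpha<0$ by \eqref{phi(0)}; symmetrically $y\to\infty$ gives $\Phi_{h_a}(\infty)=h_a(0)$, namely $0$ or $-\infty$. The main obstacle is the optimality of $\tilde\pi^*$ in the second stage — establishing the variational inequality rigorously and pushing it through the stochastic exponential and change of measure, controlling integrability so that the conditional Jensen step is valid; the rest is essentially bookkeeping against the already-cited duality machinery.
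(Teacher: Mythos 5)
Your overall route is the same as the paper's: existence and uniqueness of $\Tilde{\pi}^*$ by convexity and coercivity of $\Tilde{\pi}\mapsto\|\xi+\sigma^{-1}\Tilde{\pi}\|$ on the closed convex cone $[0,\infty)^n$; optimality of this constant control for \eqref{dual.pro} by invoking the Xu--Shreve duality machinery (the paper cites Lemma 5.1 and Theorems 2.1 and 5.2 of \cite{XS92(b)} rather than re-deriving the variational inequality, but the content is the one you describe); and monotone convergence together with \eqref{phi(0)}--\eqref{phi(infty)} for the boundary limits. Stages one and three are fine, and your identification of the second stage as the real work is accurate.

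There are, however, two places where your sketch papers over exactly the points that need care. First, for finiteness when $\alpha\in(-\infty,0)$ you write that ``one has $\Phi_{h_a}\le 0$ directly,'' but that only gives the upper bound $\Tilde{V}_a(y)\le 0$; it does not exclude $\Tilde{V}_a(y)=-\infty$, and $\Phi_{h_a}(y)\to-\infty$ as $y\to\infty$ in this regime, so the lower bound is not automatic. The paper closes this by Jensen's inequality applied to the convex, non-increasing $\Phi_{h_a}$ together with the supermartingale property $\mathbb{E}[Z^{\Tilde{\pi}^*}_{\tau}]\le 1$, yielding $\Tilde{V}_a(y)\ge\Phi_{h_a}(y\,\mathbb{E}[Z^{\Tilde{\pi}^*}_{\tau}]/B_{\tau})>-\infty$; alternatively one can estimate $\Phi_{h_a}(y)\ge -C(1+y^{\alpha/(\alpha-1)})$ and use the lognormal law, but some such argument must be supplied. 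Second, when you say you would ``adapt their Lemma/Theorem verbatim where the hypotheses match, and re-verify the pieces,'' the specific hypothesis that fails for $\alpha<0$ is the lower-boundedness of $\Phi_{h_a}$ required by Lemma 5.1 of \cite{XS92(b)} for the submartingale property of $\Phi_{h_a}(ye^{-r\tau}Z^{\Tilde{\pi}}_t)$; the paper repairs this with a truncation $\Phi_{\epsilon}$ (bounded, non-increasing, convex, agreeing with $\Phi_{h_a}$ near $0$) and a limiting argument. Your proposal does not identify this obstruction, and without the truncation (or an equivalent device) the conditional-Jensen/change-of-measure step you describe is not justified in the $\alpha<0$ case. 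For $\alpha\in(0,1)$ your use of \eqref{h_a<kappa} to dominate $\Phi_{h_a}$ and restrict to controls with finite dual functional matches the paper.
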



Recall that the constant $\Tilde{\pi}^*$ is the optimal control to the dual problem \eqref{dual.pro} for all fixed $a,y\in\mathbb{R}_+$. Let us define a function $\mathbb{R}_+\ni \lambda\mapsto\ell_{a,y}(\lambda)\in\mathbb{R}_+$ by
\begin{eqnarray}\label{ell}
\ell_{a,y}(\lambda):=\Tilde{J}_a(\lambda y,\Tilde{\pi}^*)=\mathbb{E}\left[\Phi_{h_a}(\lambda y\frac{Z^{\Tilde{\pi}^*}_{\tau}}{B_{\tau}})\right]=\mathbb{E}\left[h_a(x^*_{h_a}(\lambda y\frac{Z^{\Tilde{\pi}^*}_{\tau}}{B_{\tau}}))-x^*_{h_a}(\lambda y\frac{Z^{\Tilde{\pi}^*}_{\tau}}{B_{\tau}})\lambda y\frac{Z^{\Tilde{\pi}^*}_{\tau}}{B_{\tau}}\right],
\end{eqnarray}
where, in the last equality, we have used \eqref{phi(y)=h}. We have the next preparation result to characterize the minimizer of $\inf_{y\in\mathbb{R}_+}\{\Tilde{V}_a(y)+xy\}$.

\begin{prop}\label{unique}
For any fixed $a,x\in\mathbb{R}_+$, there exists a unique $y_x^*\in\mathbb{R}_+$ such that
\begin{eqnarray}\label{H(a)=inf}
\inf_{y\in\mathbb{R}_+}\{\Tilde{V}_a(y)+xy\}=\Tilde{V}_a(y_x^*)+xy_x^*.
\end{eqnarray}
Furthermore, $y_x^*\in\mathbb{R}_+$ is the minimizer of $$\inf_{y\in\mathbb{R}_+}\{\Tilde{V}_a(y)+xy\},$$ if and only if $y_x^*\in\mathbb{R}_+$ satisfies
\begin{eqnarray}
\label{ell(1).y*}
\ell^{\prime}_{a,y^*_x}(1)+xy_x^*=0.
\end{eqnarray}
\end{prop}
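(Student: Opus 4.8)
The plan is to recognize that \eqref{H(a)=inf} is a one–dimensional convex minimization over $(0,\infty)$ and to exhibit its minimizer through the first–order condition, which then coincides with \eqref{ell(1).y*}. The key preliminary observation is the identity
$\ell_{a,y}(\lambda)=\Tilde{V}_a(\lambda y)$: by Proposition \ref{existence} the constant control $\Tilde{\pi}^*$ is optimal for the dual problem \eqref{dual.pro} at \emph{every} level, so $\ell_{a,y}(\lambda)=\Tilde{J}_a(\lambda y,\Tilde{\pi}^*)=\Tilde{V}_a(\lambda y)$. Hence, once differentiability of $\Tilde{V}_a$ is available, $\ell'_{a,y}(1)=y\,\Tilde{V}_a'(y)$ and therefore, writing $g(y):=\Tilde{V}_a(y)+xy$ for the objective in \eqref{H(a)=inf},
$\ell'_{a,y}(1)+xy=y\bigl(\Tilde{V}_a'(y)+x\bigr)=y\,g'(y)$, so that for $y>0$ condition \eqref{ell(1).y*} is equivalent to $g'(y)=0$. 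Thus it suffices to prove that $g$ has a unique minimizer $y_x^*\in(0,\infty)$, that $g$ is differentiable, and that the minimizer is characterized by the vanishing of $g'$.

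First I would record that $g$ is continuous and convex on $(0,\infty)$ by Proposition \ref{existence}, and in fact \emph{strictly} convex: writing $\zeta:=Z^{\Tilde{\pi}^*}_{\tau}/B_{\tau}>0$ a.s., the map $y\mapsto\Phi_{h_a}(y\zeta)$ is strictly convex pathwise, because $\Phi_{h_a}$ is strictly convex (as noted right after \eqref{Phi'}) and $y\mapsto y\zeta$ is injective almost surely; integrating preserves the strict inequality, so $\Tilde{V}_a$, and hence $g$, is strictly convex. Strict convexity already gives uniqueness of any minimizer, so the content to be established is existence together with the first–order characterization.

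Next I would establish differentiability of $\Tilde{V}_a$ and the boundary behaviour of $\Tilde{V}_a'$. Since $\Phi_{h_a}$ is twice differentiable with $\Phi_{h_a}'(y)=-x^*_{h_a}(y)=-I(y)$ by \eqref{Phi'}, differentiating under the expectation — justified by dominating $I(y\zeta)\zeta$ uniformly for $y$ in compact subsets of $(0,\infty)$, using the power-type growth of $I=(h_a')^{-1}$ coming from Lemma \ref{lem2.1} (and the bound \eqref{h_a<kappa} when $\alpha\in(0,1)$) together with the finiteness of all positive and negative moments of the log-normal variable $\zeta$ — yields $\Tilde{V}_a'(y)=-\mathbb{E}[I(y\zeta)\zeta]$, which is continuous and, by strict convexity, strictly increasing. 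As $y\to\infty$, $I(y\zeta)\downarrow 0$ a.s., so dominated convergence gives $\Tilde{V}_a'(\infty)=0$ and $g'(\infty)=x>0$; as $y\to 0{+}$, $I(y\zeta)\uparrow\infty$ a.s., so monotone convergence gives $\Tilde{V}_a'(0{+})=-\infty$ and $g'(0{+})=-\infty$. By the intermediate value theorem there is a unique $y_x^*\in(0,\infty)$ with $g'(y_x^*)=0$; for the differentiable convex function $g$ this $y_x^*$ is the unique global minimizer ($g(y)\geq g(y_x^*)+g'(y_x^*)(y-y_x^*)=g(y_x^*)$), and since $g$ is strictly decreasing on $(0,y_x^*)$ and strictly increasing on $(y_x^*,\infty)$ with $g(y)\to\infty$ as $y\to\infty$, the infimum over $(0,\infty)$ is attained only at $y_x^*$. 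This proves \eqref{H(a)=inf}; combining with the first paragraph, $y_x^*$ minimizes the expression in \eqref{H(a)=inf} if and only if $g'(y_x^*)=0$, i.e. if and only if \eqref{ell(1).y*} holds.

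I expect the main obstacle to be the rigorous justification of differentiating $\Tilde{V}_a$ (equivalently $\ell_{a,y}$) under the expectation and of the two limits $\Tilde{V}_a'(0{+})=-\infty$, $\Tilde{V}_a'(\infty)=0$: this requires locally uniform, integrable envelopes for $I(y\zeta)\zeta$, which rest on the polynomial growth/decay estimates for $h_a$ and $h_a'$ from Lemma \ref{lem2.1} and on the exponential integrability of $Z^{\Tilde{\pi}^*}_{\tau}$ (and its reciprocal), and these estimates differ slightly in the cases $\alpha\in(0,1)$ and $\alpha\in(-\infty,0)$. Everything else — convexity, strict convexity, and the convex-analytic identification of the minimizer — is routine once these estimates are in place.
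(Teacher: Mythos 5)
Your proposal is correct and follows essentially the same route as the paper: both compute $g'(y)=-\mathbb{E}\bigl[\tfrac{Z^{\Tilde{\pi}^*}_{\tau}}{B_{\tau}}x^*_{h_a}(y\tfrac{Z^{\Tilde{\pi}^*}_{\tau}}{B_{\tau}})\bigr]+x$ by differentiating under the expectation, use the monotonicity of $g'$ together with the limits $g'(0{+})=-\infty$ and $g'(\infty)=x>0$ to obtain the unique minimizer, and identify \eqref{ell(1).y*} with $g'(y)=0$ via the relation $\tfrac{1}{y}\ell'_{a,y}(1)+x=g'(y)$. The only cosmetic difference is that the paper proves the forward implication of the equivalence by a variational argument in $\lambda$ at $\lambda=1$ rather than by your chain-rule identity, and your write-up is somewhat more explicit about the integrable-envelope justifications.
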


The next result establishes the relationship between the optimal solutions to the primal problem \eqref{problem3} and dual problem \eqref{dual.pro}. In addition, we can  characterize an optimal portfolio process $\pi^*\in{\mathcal{U}}_{0,\tau}$ with no-short selling and the resulting optimal wealth process $X^*$. 

\begin{prop}\label{relationship}
For any fixed $a\in\mathbb{R}_+$, the value function of the primal problem \eqref{problem3} satisfies
\begin{eqnarray}
H(a)=\alpha\sup_{\pi\in\mathcal{U}_{0}(1)}\mathbb{E}[h_a(X_{\tau})]={\alpha}\inf_{y\in\mathbb{R}_+}\{\Tilde{V}_a(y)+y\}
={\alpha}(\Tilde{V}_a(y^*)+y^*),\nonumber
\end{eqnarray}
where $y^*=\left.y^*_{x}\right|_{x=1}$ is characterized in Proposition \ref{unique}.
The optimal wealth process $X^*$ is given by $X^*_{\tau}=x^*_{h_a}\left(y^*\frac{Z^{\Tilde{\pi}^*}_{\tau}}{B_{\tau}}\right)$.
Moreover, denote by $(\eta_t)_{t\in[0,\tau]}$ the $\mathbb{R}^n$-valued and $\{\mathcal{F}_t\}$-progressively measurable process that is determined by \eqref{mart.repre.theo.} such that $\int_0^{\tau}\|\eta_t\|^2dt<\infty$. Then, the process $X^*$ admits a representation that
\begin{eqnarray}\label{hat.X}
\hat{X}_t=\frac{B_{t}}{Z^{\Tilde{\pi}^*}_{t}}\left(1+\int_0^t\eta^{\mathsf{T}}_sdW_s\right),\quad t\in[0,\tau].
\end{eqnarray}
The optimal portfolio process is given by
\begin{eqnarray}
\pi^*_t=\hat{X}_t(\sigma^{\mathsf{T}})^{-1}\left[\xi+{\sigma}^{-1}{\Tilde{\pi}_t^*}+\eta_t\left(1+\int_0^t\eta^{\mathsf{T}}_sdW_s\right)^{-1}\right],\quad t\in[0,\tau].\nonumber
\end{eqnarray}
\end{prop}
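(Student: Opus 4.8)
The plan is to carry out the constrained martingale--duality programme of \cite{XS92(a)} and \cite{XS92(b)} for the moderated utility $h_a$, relying on Propositions~\ref{existence} and \ref{unique}, which already provide the constant dual optimizer $\tilde\pi^*$ (with deflator $Z^{\tilde\pi^*}$ and $\tilde\xi=\xi+\sigma^{-1}\tilde\pi^*$) and the multiplier $y^*=\left.y^*_x\right|_{x=1}$. The first ingredient is weak duality. For any $\pi\in\mathcal U_0(1)$, It\^o's formula shows that $Z^{\tilde\pi^*}_tX_t/B_t$ has finite-variation part $-\int_0^t\frac{Z^{\tilde\pi^*}_s}{B_s}\pi_s^{\mathsf{T}}\tilde\pi^*\,ds$, which is nonincreasing since $\pi_s\ge0$ and $\tilde\pi^*\ge0$ componentwise; being a nonnegative local martingale plus a nonincreasing drift, it is a supermartingale, so $\mathbb E[Z^{\tilde\pi^*}_\tau X_\tau/B_\tau]\le1$. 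Combining this with the Fenchel inequality $h_a(x)\le\Phi_{h_a}(w)+xw$ at $x=X_\tau$, $w=yZ^{\tilde\pi^*}_\tau/B_\tau$, taking expectations, and using $\tilde V_a(y)=\mathbb E[\Phi_{h_a}(yZ^{\tilde\pi^*}_\tau/B_\tau)]<\infty$ from Proposition~\ref{existence}, yields $\mathbb E[h_a(X_\tau)]\le\tilde V_a(y)+y$. Taking $\sup_\pi$ and then $\inf_y$ gives $\sup_{\pi\in\mathcal U_0(1)}\mathbb E[h_a(X_\tau)]\le\inf_{y\in\mathbb R_+}\{\tilde V_a(y)+y\}$; multiplying by $\alpha$ (minding its sign) produces one half of the asserted chain and shows the primal value is finite.

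To close the gap, set $\hat X_\tau:=x^*_{h_a}(y^*Z^{\tilde\pi^*}_\tau/B_\tau)$, which is strictly positive a.s.\ since $h_a'(0+)=\infty$ by Lemma~\ref{lem2.1}. Differentiating \eqref{ell} and using $\Phi_{h_a}'=-x^*_{h_a}$ from \eqref{Phi'} gives $\ell_{a,y^*}'(1)=-y^*\,\mathbb E[\hat X_\tau Z^{\tilde\pi^*}_\tau/B_\tau]$, so the first-order condition \eqref{ell(1).y*} at $x=1$ is exactly the budget identity $\mathbb E[\hat X_\tau Z^{\tilde\pi^*}_\tau/B_\tau]=1$. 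Taking expectations in \eqref{phi(y)=h} and substituting this identity gives $\tilde V_a(y^*)=\mathbb E[h_a(\hat X_\tau)]-y^*$, i.e.\ $\mathbb E[h_a(\hat X_\tau)]=\tilde V_a(y^*)+y^*=\inf_y\{\tilde V_a(y)+y\}$, a finite quantity by Proposition~\ref{existence}. Consequently, as soon as $\hat X_\tau$ is realized as the time-$\tau$ value of some $\pi^*\in\mathcal U_0(1)$, weak duality forces equality everywhere, $\pi^*$ is optimal, and both the duality chain for $H(a)$ and the identity $X^*_\tau=\hat X_\tau$ follow.

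The main work is the reconstruction of $\pi^*$. Introduce the strictly positive martingale $M_t:=\mathbb E[\hat X_\tau Z^{\tilde\pi^*}_\tau/B_\tau\mid\mathcal F_t]$, so $M_0=1$, and by the Brownian martingale representation theorem write $M_t=1+\int_0^t\eta_s^{\mathsf{T}}dW_s$ with $\eta$ progressively measurable and $\int_0^\tau\|\eta_t\|^2dt<\infty$ (this is \eqref{mart.repre.theo.}). Put $\hat X_t:=\frac{B_t}{Z^{\tilde\pi^*}_t}M_t$; it is positive, reduces to $\hat X_\tau$ at $t=\tau$, and is the representation \eqref{hat.X}. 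It\^o's formula applied to $B_t(Z^{\tilde\pi^*}_t)^{-1}M_t$, with $dZ^{\tilde\pi^*}_t=-Z^{\tilde\pi^*}_t\tilde\xi^{\mathsf{T}}dW_t$, gives
\[
d\hat X_t=\hat X_t\bigl(\tilde\xi+\eta_t/M_t\bigr)^{\mathsf{T}}dW_t+\hat X_t\bigl(r+\|\tilde\xi\|^2+\tilde\xi^{\mathsf{T}}\eta_t/M_t\bigr)dt .
\]
Matching the diffusion coefficient with that of $dX_t=[rX_t+\pi_t^{\mathsf{T}}\sigma\xi]dt+\pi_t^{\mathsf{T}}\sigma\,dW_t$ forces $\pi^*_t=\hat X_t(\sigma^{\mathsf{T}})^{-1}[\tilde\xi+\eta_t/M_t]$, which is the stated formula once $\tilde\xi=\xi+\sigma^{-1}\tilde\pi^*$ and $M_t=1+\int_0^t\eta_s^{\mathsf{T}}dW_s$ are unwound. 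One must then verify that $\hat X$ is self-financing, i.e.\ that its drift agrees with $r\hat X_t+(\pi^*_t)^{\mathsf{T}}\sigma\xi$: the discrepancy equals $(\pi^*_t)^{\mathsf{T}}\tilde\pi^*=\hat X_t(\tilde\xi+\eta_t/M_t)^{\mathsf{T}}\sigma^{-1}\tilde\pi^*$, which vanishes because the stationarity and complementary-slackness conditions for $\tilde\pi^*$ as the minimizer of $\|\xi+\sigma^{-1}\tilde\pi\|$ over $[0,\infty)^n$ give $\tilde\xi^{\mathsf{T}}\sigma^{-1}\tilde\pi^*=0$ and $(\sigma^{\mathsf{T}})^{-1}\tilde\xi\ge0$, while $\hat X_\tau$ depends on $W_\tau$ only through $\tilde\xi^{\mathsf{T}}W_\tau$, so by the Markov property $\eta_t$ is a scalar multiple of $\tilde\xi$, whence $\eta_t^{\mathsf{T}}\sigma^{-1}\tilde\pi^*=0$ too; by uniqueness of the linear SDE this identifies $\hat X=X^{\pi^*}$.

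It remains to check $\pi^*\in\mathcal U_0(1)$. Since $\eta_t\parallel\tilde\xi$, $\pi^*_t$ is a scalar multiple of the nonnegative vector $(\sigma^{\mathsf{T}})^{-1}\tilde\xi$, so nonnegativity reduces to showing that scalar coefficient is nonnegative, which is controlled through the monotonicity of $x^*_{h_a}$ by the same argument as in the Xu--Shreve feasibility step; local square-integrability of $\pi^*$ and the integrability clause in \eqref{set.Ut} follow from $\int_0^\tau\|\eta_t\|^2dt<\infty$, boundedness of $\tilde\xi$, the growth estimate \eqref{h_a<kappa} when $\alpha\in(0,1)$, and the finiteness statements of Proposition~\ref{existence}. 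I expect the main obstacle to be precisely this sufficiency half of the constrained duality --- recovering $\pi^*$ and proving at once that it replicates $\hat X_\tau$ and honours the no-short-selling constraint --- where the two structural facts (complementary slackness for $\tilde\pi^*$, and the one-dimensionality of $\hat X_\tau$ in $\tilde\xi^{\mathsf{T}}W$) must be merged with integrability bookkeeping that differs between the regimes $\alpha\in(0,1)$ and $\alpha\in(-\infty,0)$.
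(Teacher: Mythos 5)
Your proposal is correct and follows the same architecture as the paper's proof: weak duality via the supermartingale budget constraint $\mathbb{E}[Z^{\Tilde{\pi}}_\tau X_\tau/B_\tau]\le 1$ plus the Fenchel inequality, closing the gap with the candidate $X^*_\tau=x^*_{h_a}(y^*Z^{\Tilde{\pi}^*}_\tau/B_\tau)$ whose budget identity comes from the first-order condition \eqref{ell(1).y*}, then the martingale $M_t$ with representation \eqref{mart.repre.theo.}, It\^o on $\hat X_t=B_tM_t/Z^{\Tilde{\pi}^*}_t$, and complementary slackness $(\pi^*_t)^{\mathsf{T}}\Tilde{\pi}^*=0$ to confirm self-financing. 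The one place you genuinely diverge is that last step: the paper disposes of both $(\pi^*_t)^{\mathsf{T}}\Tilde{\pi}^*=0$ and the feasibility $\pi^*_t\ge0$ by invoking the Strong Duality Theorem 4.8 of \cite{XS92(a)} (with the hypothesis \eqref{h_a'>h_a'} checked in Lemma \ref{lem2.1}), whereas you derive them directly from the KKT conditions of the finite-dimensional minimization of $\|\xi+\sigma^{-1}\Tilde{\pi}\|$ (giving $\Tilde{\xi}^{\mathsf{T}}\sigma^{-1}\Tilde{\pi}^*=0$ and $(\sigma^{\mathsf{T}})^{-1}\Tilde{\xi}\ge0$) combined with the observation that $\hat X_\tau$ is a function of $\Tilde{\xi}^{\mathsf{T}}W_\tau$ alone, so $\eta_t$ is a scalar multiple of $\Tilde{\xi}$. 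That alternative is more self-contained and transparent, but it exploits the constant-coefficient structure and leaves the sign of the scalar factor (hence $\pi^*_t\ge0$) only sketched; the paper's citation of Theorem 4.8 covers that point wholesale, and a complete write-up of your route would need to finish that monotonicity check explicitly.
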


With the help of the relationship between the optimal solutions of the primal problem \eqref{problem3} and dual problem \eqref{dual.pro} in Proposition \ref{relationship}, one can obtain that the function $H(\cdot)$ given by \eqref{problem3} is well-defined. In what follows, we show that there indeed exists a unique fixed point $A^*\in\mathbb{R}_+$ solving
equation \eqref{problem3}.

\begin{prop}\label{fixppower}
Define a function $\Psi: \mathbb{R}_{+}\rightarrow\mathbb{R}_{+}$ as
\begin{eqnarray}\label{Psi}
\Psi(a):=e^{-\delta \tau}H(a),
\end{eqnarray}
where $H(a)$ is the solution to problem \eqref{problem3}. Then $\Psi$ is a contraction mapping on the metric space $(\mathbb{R}_{+},|\cdot|)$, where $|\cdot|$ is the absolute value norm. Consequently, $\Psi$ admits a unique fixed-point $A^*$ such that $A^*=\Psi(A^*)$. 
Moreover, the unique fixed-point $A^*$ of $\Psi$ satisfies
\begin{eqnarray}
\frac{e^{(r\alpha-\delta)\tau}}{1-e^{-(\delta-r\alpha(1-\gamma))\tau}}\mathbf{1}_{\{\alpha\in(0,1)\}}\hspace{-0.3cm}&+&\hspace{-0.3cm}\frac{e^{(\zeta(\alpha)-\delta)\tau}}{1-e^{-(\delta-\zeta(\alpha(1-\gamma)))\tau}}\mathbf{1}_{\{\alpha<0\}}
\nonumber\\
\hspace{-0.3cm}&&\hspace{-0.3cm}
 \leq A^*
\leq\frac{e^{(\zeta(\alpha)-\delta)\tau}}{1-e^{-(\delta-\zeta(\alpha(1-\gamma)))\tau}}\mathbf{1}_{\{\alpha\in(0,1)\}}+\frac{e^{(r\alpha-\delta)\tau}}{1-e^{-(\delta-r\alpha(1-\gamma))\tau}}\mathbf{1}_{\{\alpha<0\}}.\nonumber
\end{eqnarray}
\end{prop}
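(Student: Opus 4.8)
\medskip\noindent The plan is to verify the hypotheses of the Banach fixed point theorem on the complete metric space $(\mathbb{R}_+,|\cdot|)$, and then read off the two-sided estimate by substituting the fixed point into $A^*=e^{-\delta\tau}H(A^*)$. The first step is to record the shape of $H$. Pushing the constant $\alpha$ inside the supremum in \eqref{poweraux} and using $\alpha h_a(x)=x^\alpha+a x^{\alpha(1-\gamma)}$, one has $H(a)=\sup_{\pi\in\mathcal{U}_0(1)}\mathbb{E}\big[X_\tau^\alpha+a X_\tau^{\alpha(1-\gamma)}\big]$ when $\alpha\in(0,1)$, and $H(a)=\inf_{\pi\in\mathcal{U}_0(1)}\mathbb{E}\big[X_\tau^\alpha+a X_\tau^{\alpha(1-\gamma)}\big]$ when $\alpha<0$. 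Since for each fixed $\pi$ the map $a\mapsto F_\pi(a):=\mathbb{E}[X_\tau^\alpha]+a\,\mathbb{E}[X_\tau^{\alpha(1-\gamma)}]$ is affine with non-negative slope, $H$ is non-decreasing, and it is convex when $\alpha\in(0,1)$ (a supremum of affine functions) and concave when $\alpha<0$ (an infimum of affine functions). Together with the well-posedness of $H$ noted after \eqref{problem3} and the trivial bound $H(a)\ge 0$ obtained at $\pi\equiv0$, this shows $\Psi=e^{-\delta\tau}H$ maps $\mathbb{R}_+$ into itself.

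\medskip\noindent The core of the proof is a uniform Lipschitz bound on $\Psi$; fix $a_1>a_2\ge 0$. When $\alpha\in(0,1)$, I would take the optimizer $\pi^*$ of $H(a_1)$ from Proposition~\ref{relationship}: as $\pi^*$ is admissible for $H(a_2)$ too, $0\le H(a_1)-H(a_2)\le F_{\pi^*}(a_1)-F_{\pi^*}(a_2)=(a_1-a_2)\,\mathbb{E}[(X^{\pi^*}_\tau)^{\alpha(1-\gamma)}]\le(a_1-a_2)\sup_{\pi\in\mathcal{U}_0(1)}\mathbb{E}[X_\tau^{\alpha(1-\gamma)}]$, and because $\alpha(1-\gamma)\in[0,1)$ the last supremum equals $e^{\zeta(\alpha(1-\gamma))\tau}$, the value of the one-period constrained power-utility maximization (by the classical constrained Merton formula, equivalently by the same duality argument as above). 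Hence $\Psi$ has Lipschitz constant $e^{(\zeta(\alpha(1-\gamma))-\delta)\tau}<1$ by Assumption~\ref{ass1}. When $\alpha<0$ this crude estimate is useless, since $\sup_\pi\mathbb{E}[X_\tau^{\alpha(1-\gamma)}]=+\infty$ (arbitrarily aggressive leverage pushes $X_\tau$ toward $0$). Instead I would use the concavity of $H$: let $\pi^*_0$ attain $H(0)=\inf_\pi\mathbb{E}[X_\tau^\alpha]$ (Proposition~\ref{relationship} with $a=0$); monotonicity of secant slopes of a concave function, together with $H(a_1)\le F_{\pi^*_0}(a_1)$ and $H(0)=F_{\pi^*_0}(0)$, gives
\[
\frac{H(a_1)-H(a_2)}{a_1-a_2}\ \le\ \frac{H(a_1)-H(0)}{a_1}\ \le\ \frac{F_{\pi^*_0}(a_1)-F_{\pi^*_0}(0)}{a_1}\ =\ \mathbb{E}\big[(X^{\pi^*_0}_\tau)^{\alpha(1-\gamma)}\big].
\]
Writing $(X^{\pi^*_0}_\tau)^{\alpha(1-\gamma)}=\big((X^{\pi^*_0}_\tau)^\alpha\big)^{1-\gamma}$ and using Jensen's inequality with the concave power $1-\gamma\in[0,1)$, this is $\le\big(\mathbb{E}[(X^{\pi^*_0}_\tau)^\alpha]\big)^{1-\gamma}=e^{(1-\gamma)\zeta(\alpha)\tau}\le 1$, the equality being the constrained Merton value $H(0)=e^{\zeta(\alpha)\tau}$ and the last inequality holding because $\zeta(\alpha)\le 0$ for $\alpha<0$ (both $r\alpha$ and $\alpha/(2(1-\alpha))$ are non-positive). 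Thus $0\le H(a_1)-H(a_2)\le a_1-a_2$ and $\Psi$ has Lipschitz constant $e^{-\delta\tau}<1$. In both regimes $\Psi$ is a contraction of $(\mathbb{R}_+,|\cdot|)$, so a unique $A^*\in\mathbb{R}_+$ with $A^*=\Psi(A^*)=e^{-\delta\tau}H(A^*)$ exists.

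\medskip\noindent For the two-sided bound I would sandwich $H$ linearly. Evaluating at $\pi\equiv0$ (so $X_\tau=e^{r\tau}$) and decoupling the two expectations by the sub/superadditivity of $\sup$/$\inf$: for $\alpha\in(0,1)$,
\[
e^{r\alpha\tau}+a\,e^{r\alpha(1-\gamma)\tau}\ \le\ H(a)\ \le\ \sup_\pi\mathbb{E}[X_\tau^\alpha]+a\,\sup_\pi\mathbb{E}[X_\tau^{\alpha(1-\gamma)}]\ =\ e^{\zeta(\alpha)\tau}+a\,e^{\zeta(\alpha(1-\gamma))\tau},
\]
while for $\alpha<0$ the two sides exchange roles ($\pi\equiv0$ now gives the upper bound and superadditivity of the infimum the lower one), i.e. $e^{\zeta(\alpha)\tau}+a\,e^{\zeta(\alpha(1-\gamma))\tau}\le H(a)\le e^{r\alpha\tau}+a\,e^{r\alpha(1-\gamma)\tau}$. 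Substituting $a=A^*$ into $A^*=e^{-\delta\tau}H(A^*)$ converts each inequality into a linear inequality for $A^*$; solving these (using $1-e^{-(\delta-\zeta(\alpha(1-\gamma)))\tau}>0$ from Assumption~\ref{ass1}, and $1-e^{-(\delta-r\alpha(1-\gamma))\tau}>0$ because $r\alpha(1-\gamma)\le\zeta(\alpha(1-\gamma))<\delta$ when $\alpha\in(0,1)$ and $r\alpha(1-\gamma)\le 0<\delta$ when $\alpha<0$) yields exactly the claimed inequalities for $A^*$.

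\medskip\noindent The step I expect to be the main obstacle is the uniform Lipschitz estimate in the case $\alpha<0$: the naive bound on the ``slope'' $\mathbb{E}[X_\tau^{\alpha(1-\gamma)}]$ over the admissible class is infinite, so the argument must exploit concavity of $a\mapsto H(a)$ to localize that slope at the single fixed strategy $\pi^*_0$, after which Jensen's inequality and the sign of $\zeta(\alpha)$ close the gap. Pinning down the facts that make this work — that $\pi^*_0$ attains $H(0)$, that $H(0)=e^{\zeta(\alpha)\tau}$, and that the identities $\sup_\pi\mathbb{E}[X_\tau^\beta]=e^{\zeta(\beta)\tau}$ for $\beta\in[0,1)$ and $\inf_\pi\mathbb{E}[X_\tau^\beta]=e^{\zeta(\beta)\tau}$ for $\beta\le 0$ hold under no short-selling — is where the duality results of Propositions~\ref{existence}--\ref{relationship} are really used.
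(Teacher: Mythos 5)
Your proof is correct, and for $\alpha\in(0,1)$ it is essentially the paper's argument: both take the optimizer of the larger parameter, bound the increment of $H$ by $(a_1-a_2)\sup_{\pi}\mathbb{E}[X_\tau^{\alpha(1-\gamma)}]=(a_1-a_2)e^{\zeta(\alpha(1-\gamma))\tau}$, invoke Assumption~\ref{ass1} for the contraction, and derive the two-sided bound on $A^*$ by sandwiching $H$ between the $\pi\equiv0$ evaluation and the term-by-term supremum. Where you genuinely diverge is the case $\alpha<0$: the paper simply declares it ``similar'' and presents only $\alpha\in(0,1)$, whereas you correctly observe that the displayed estimate does \emph{not} transfer verbatim, since for $\beta=\alpha(1-\gamma)<0$ one has $\sup_{\pi}\mathbb{E}[X_\tau^{\beta}]=+\infty$ (the Merton value $e^{\zeta(\beta)\tau}$ is the \emph{infimum} of $\mathbb{E}[X_\tau^\beta]$ when $\beta<0$, so it bounds the slope from the wrong side). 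Your replacement — concavity of $a\mapsto H(a)$ as an infimum of affine maps, the three-chord lemma to localize the secant slope at the single strategy $\pi_0^*$ attaining $H(0)=e^{\zeta(\alpha)\tau}$, Jensen with the concave power $1-\gamma$, and the sign $\zeta(\alpha)\le0$ — yields the uniform Lipschitz constant $e^{-\delta\tau}<1$ without any extra hypothesis, and is a clean and necessary repair of a step the paper glosses over; the bound on $A^*$ for $\alpha<0$ (infimum superadditivity for the lower bound, $\pi\equiv0$ for the upper) also matches the stated inequalities. The only blemish is cosmetic: for $\alpha<0$ the assertion that ``$H(a)\ge0$ is obtained at $\pi\equiv0$'' is misplaced, since $\pi\equiv0$ gives an \emph{upper} bound on the infimum; non-negativity (indeed strict positivity) instead follows from the integrand $X_\tau^{\alpha}+aX_\tau^{\alpha(1-\gamma)}>0$ or from your own superadditivity lower bound $H(a)\ge e^{\zeta(\alpha)\tau}+ae^{\zeta(\alpha(1-\gamma))\tau}>0$. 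This does not affect the validity of the proof.
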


With the previous preparations, 
we are ready to prove the verification theorem for the portfolio optimization problem \eqref{problem} under periodic evaluation in the ratio type.

\begin{thm}[Verification Theorem]\label{thm4.1}
Let $y^*$ be the unique solution to $\ell^{\prime}_{A^*,y^*}(1)+y^*=0.$
The value function of the infinite horizon original problem \eqref{problem} is given by
\begin{eqnarray}
V(x)=\frac{1}{\alpha}A^*x^{\alpha(1-\gamma)},\quad x\in\mathbb{R}_+,\nonumber
\end{eqnarray}
where $A^*\in\mathbb{R}_+$is the unique fixed-point of the function $\Psi$ defined in \eqref{Psi}. In particular, the optimal wealth process $X^*$ at time $T_i$ is given by
\begin{eqnarray}
X^*_{T_i}=X^*_{T_{i-1}}x^*_{h_{A^*}}\left(y^*\frac{Z^{\Tilde{\pi}^*}_{T_i}/B_{T_{i}}}{Z^{\Tilde{\pi}^*}_{T_{i-1}}/B_{T_{i-1}}}\right),\quad i\geq 1,\nonumber
\end{eqnarray}
with $X^*_{T_0}=x$, where the function $x^*_{h_{A^*}}(\cdot)$ is the inverse function of $h^{\prime}_{A^*}$ and the process $Z^{\Tilde{\pi}^*}$ is given by \eqref{Z.tilde}.
Moreover, the process $X^*$ has a representation that
\begin{eqnarray}\label{X.rep}
X^*_t=X^*_{T_i}\frac{Z^{\Tilde{\pi}^*}_{t}/B_{t}}{Z^{\Tilde{\pi}^*}_{T_i}/B_{T_i}}\left(1+\int_{T_i}^{t}\eta^{\mathsf{T}}_sdW_s\right),\quad t\in[T_i,T_{i+1}],
\end{eqnarray}
and the optimal portfolio process can be expressed by
\begin{eqnarray}\label{pi.rep}
\pi^*_t=X^*_t(\sigma^{\mathsf{T}})^{-1}\left[\xi+{\sigma}^{-1}{\Tilde{\pi}^*}+\eta_t\left(1+\int_{T_i}^{t}\eta^{\mathsf{T}}_sdW_s\right)^{-1}\right],\quad t\in[T_i,T_{i+1}].
\end{eqnarray}
\end{thm}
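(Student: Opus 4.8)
The plan is to run a standard verification argument combined with an induction over the evaluation periods, exploiting the self-similar structure of the problem and the one-period results from Propositions \ref{relationship} and \ref{fixppower}. First I would fix the candidate value function $\widehat V(x):=\frac1\alpha A^* x^{\alpha(1-\gamma)}$, with $A^*$ the unique fixed point of $\Psi$ (which exists by Proposition \ref{fixppower}), and the candidate wealth/portfolio processes given by the stated formulas. The key structural observation is that on each interval $[T_i,T_{i+1}]$ the problem, conditionally on $\mathcal F_{T_i}$, is exactly the scaled one-period auxiliary problem \eqref{poweraux} with parameter $a=A^*$: by the scaling property $U(x)=x^\alpha U(1)$ and the Markov/independence-of-increments structure of $Z^{\widetilde\pi^*}/B$, an admissible $\pi\in\mathcal U_0(x)$ restricted to $[T_i,T_{i+1}]$ produces $X_{T_{i+1}}=X_{T_i}\cdot \widetilde X_\tau$ where $\widetilde X$ solves a one-period problem started from $1$, and the per-period reward plus the continuation value factor as $X_{T_i}^{\alpha(1-\gamma)}\,\alpha h_{A^*}(\widetilde X_\tau)$ up to discounting. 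Proposition \ref{relationship} then gives that this conditional one-period supremum equals $X_{T_i}^{\alpha(1-\gamma)} e^{-\delta\tau}H(A^*) = X_{T_i}^{\alpha(1-\gamma)}A^*$ (using \eqref{problem3}), and is attained by the feedback $\pi^*$ of \eqref{pi.rep}, with optimal one-period wealth $\widetilde X_\tau = x^*_{h_{A^*}}(y^* Z^{\widetilde\pi^*}_\tau/B_\tau)$.

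The second step is the induction/telescoping. For any admissible $\pi\in\mathcal U_0(x)$, I would show by backward/forward use of the conditional one-period bound that for every $N\ge 1$,
\begin{eqnarray}
\sum_{i=1}^{N} e^{-\delta T_i}\mathbb E\!\left[U\!\left(\frac{X_{T_i}}{(X_{T_{i-1}})^{\gamma}}\right)\right]
+ e^{-\delta T_N}\mathbb E\!\left[\widehat V(X_{T_N})\right]
\le \widehat V(x),\nonumber
\end{eqnarray}
with equality along the candidate $(\pi^*, X^*)$. This follows by iterating the fixed-point identity \eqref{ddp}: conditioning on $\mathcal F_{T_{N-1}}$, the last period's reward plus $e^{-\delta T_N}\widehat V(X_{T_N})$ is $\le e^{-\delta T_{N-1}}\widehat V(X_{T_{N-1}})$ after multiplying the one-period inequality by $e^{-\delta T_{N-1}}X_{T_{N-1}}^{-\alpha\gamma}$ and rearranging; then proceed downward. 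Letting $N\to\infty$, the residual term $e^{-\delta T_N}\mathbb E[\widehat V(X_{T_N})] = \frac{A^*}{\alpha} e^{-\delta T_N}\mathbb E[X_{T_N}^{\alpha(1-\gamma)}]$ must be shown to vanish (or, for $\alpha<0$, to be handled with the correct sign): here I would invoke Assumption \ref{ass1}, namely $\delta > \zeta(\alpha(1-\gamma))\vee 0$, which bounds the growth rate $\mathbb E[X_{T_N}^{\alpha(1-\gamma)}]\le C e^{\zeta(\alpha(1-\gamma))T_N}$ uniformly over admissible strategies (using that $Z^{\widetilde\pi^*}$ corresponds to the minimal-norm drift and the explicit moment bounds already behind Proposition \ref{fixppower}), so the product tends to $0$ when $\alpha\in(0,1)$ and is controlled when $\alpha<0$. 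Combined with the admissibility integrability condition in \eqref{set.Ut} (ensuring the negative parts are summable so the infinite sum is well-defined and the limit passes through), this yields $\sup_{\pi}\sum_{i\ge1}e^{-\delta T_i}\mathbb E[U(\cdot)]\le \widehat V(x)$ and equality for $\pi^*$, hence $V=\widehat V$ and the optimality of $\pi^*$.

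The remaining step is to verify that the stated $\pi^*$ is genuinely admissible, i.e.\ lies in $\mathcal U_0(x)$: predictability and local square-integrability come from the representation \eqref{X.rep}--\eqref{pi.rep} and the property $\int_{T_i}^{T_{i+1}}\|\eta_t\|^2dt<\infty$ inherited from Proposition \ref{relationship}; the no-short-selling constraint $\pi^{*,j}_t\ge 0$ and strict positivity $X^*_t>0$ are precisely what the dual construction with the minimal-norm $\widetilde\pi^*$ guarantees on one period (Proposition \ref{relationship}), and concatenating over periods preserves both since $X^*_{T_i}>0$ starts each new period; the integrability condition in \eqref{set.Ut} follows from the moment bounds above. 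I expect the main obstacle to be the limiting argument $N\to\infty$ — precisely, establishing the uniform-in-strategy exponential moment bound $\mathbb E[(X_{T_N}^\pi)^{\alpha(1-\gamma)}]\le Ce^{\zeta(\alpha(1-\gamma))T_N}$ for \emph{all} admissible $\pi$ (not just the optimizer), and arguing that the tail of the infinite series together with the residual term both vanish under Assumption \ref{ass1}, handling the sign subtleties in the two regimes $\alpha\in(0,1)$ and $\alpha<0$ separately. Everything else is bookkeeping on top of the one-period results already proved.
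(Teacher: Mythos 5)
Your proposal is correct and follows essentially the same route as the paper: your telescoping inequality $\sum_{i=1}^{N}e^{-\delta T_i}\mathbb{E}[U(\cdot)]+e^{-\delta T_N}\mathbb{E}[\widehat V(X_{T_N})]\le \widehat V(x)$ is exactly the supermartingale property of the process $D_n=\sum_{i=1}^n e^{-\delta T_i}U(X_{T_i}/X_{T_{i-1}}^{\gamma})+\frac{1}{\alpha}A^*e^{-\delta T_n}X_{T_n}^{\alpha(1-\gamma)}$ used in the paper, with the same one-period duality input from Proposition \ref{relationship}, the same fixed-point identity, the same uniform moment bound $\sup_\pi\mathbb{E}[\frac{1}{\alpha}X_{T_n}^{\alpha(1-\gamma)}]=\frac{1}{\alpha}x^{\alpha(1-\gamma)}e^{\zeta(\alpha(1-\gamma))T_n}$ to kill the residual term under Assumption \ref{ass1}, and the same period-by-period concatenation (the paper's recursive $Q_n$ plus martingale representation) to realize the optimizer. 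The only difference is presentational, so no further comment is needed.
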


\begin{corollary}\label{coro4.1}
If $\gamma=1$, then the value function of \eqref{problem} is simplified to
\begin{eqnarray}
V(x)=\frac{e^{(\zeta(\alpha)-\delta)\tau}}{\alpha(1-e^{-\delta\tau})},\quad x\in\mathbb{R}_+.\nonumber
\end{eqnarray}
In particular, the optimal wealth process $X^*$ at time $T_i$ is simplified to
\begin{eqnarray}
X^*_{T_i}=X^*_{T_{i-1}}e^{\frac{\zeta(\alpha)}{\alpha-1}\tau}\left(\frac{Z^{\Tilde{\pi}^*}_{T_i}/B_{T_i}}{Z^{\Tilde{\pi}^*}_{T_{i-1}}/B_{T_{i-1}}}\right)^{\frac{1}{\alpha-1}},\quad t\in[0,\infty),\nonumber
\end{eqnarray}
and the process $X^*$ has a representation that
\begin{eqnarray}
X^*_{t}=
xe^{\frac{\zeta(\alpha)}{\alpha-1}t}\left(\frac{Z^{\Tilde{\pi}^*}_{t}}{B_{t}}\right)^{\frac{1}{\alpha-1}},\quad t\in[0,\infty).\nonumber
\end{eqnarray}
Moreover, the optimal portfolio process admits the feedback control form that
\begin{align*}
&\pi^*_t=\pi^*(X_t^*),\\
\text{where}\ \ &\pi^*(x)=\frac{x}{1-\alpha}(\sigma^{\mathsf{T}})^{-1}[\xi+{\sigma}^{-1}{\Tilde{\pi}^*}],\quad t\in[0,\infty).
\end{align*}
\end{corollary}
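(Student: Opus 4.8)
The plan is to specialize every object of Section~\ref{section4} to $\gamma=1$, where all one-period quantities become fully explicit, and then read off all assertions directly from Theorem~\ref{thm4.1} and Propositions~\ref{existence}--\ref{fixppower}. The key observation is that when $\gamma=1$ the moderated utility \eqref{2.8} collapses to $h_a(x)=\frac1\alpha x^\alpha+\frac a\alpha$, i.e.\ ordinary power utility plus an additive constant. Consequently $h_a'(x)=x^{\alpha-1}$, its inverse is $x^*_{h_a}(y)=y^{1/(\alpha-1)}$, and the Legendre--Fenchel transform is $\Phi_{h_a}(y)=\frac{1-\alpha}{\alpha}y^{\alpha/(\alpha-1)}+\frac a\alpha$; the constant $a/\alpha$ merely shifts $\Phi_{h_a}$ and plays no role in any minimization over $y$.

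Next I would make the dual problem explicit. Since $Z^{\Tilde\pi^*}_\tau/B_\tau=\exp(-r\tau-\Tilde\xi^{\mathsf T}W_\tau-\tfrac12\|\Tilde\xi\|^2\tau)$ is lognormal, a direct Gaussian moment computation gives the identity $\mathbb E\big[(Z^{\Tilde\pi^*}_\tau/B_\tau)^{\alpha/(\alpha-1)}\big]=e^{-\zeta(\alpha)\tau/(\alpha-1)}$, with $\zeta$ as in Assumption~\ref{ass1}, hence $\Tilde V_a(y)=\mathbb E[\Phi_{h_a}(yZ^{\Tilde\pi^*}_\tau/B_\tau)]=\frac{1-\alpha}{\alpha}e^{-\zeta(\alpha)\tau/(\alpha-1)}y^{\alpha/(\alpha-1)}+\frac a\alpha$. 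By Proposition~\ref{relationship} with $x=1$, $H(a)=\alpha\inf_{y>0}\{\Tilde V_a(y)+y\}$; the first-order condition $\Tilde V_a'(y^*)=-1$ is solved by $y^*=e^{\zeta(\alpha)\tau}$ (independently of $a$), and substituting back gives the affine map $H(a)=e^{\zeta(\alpha)\tau}+a$. The fixed-point equation \eqref{problem3} then becomes linear, $A^*=e^{-\delta\tau}(e^{\zeta(\alpha)\tau}+A^*)$, so $A^*=e^{(\zeta(\alpha)-\delta)\tau}/(1-e^{-\delta\tau})$ — only $\delta>0$ is needed here since $\zeta(\alpha(1-\gamma))=\zeta(0)=0$ — and $V(x)=\frac1\alpha A^*x^{\alpha(1-\gamma)}=\frac1\alpha A^*$ is the stated (state-independent) value function.

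For the optimal wealth and portfolio I would plug these explicit quantities into Theorem~\ref{thm4.1}. Substituting $x^*_{h_{A^*}}(y)=y^{1/(\alpha-1)}$, $y^*=e^{\zeta(\alpha)\tau}$ and $T_i-T_{i-1}=\tau$ into the expression for $X^*_{T_i}$ gives the stated recursion at evaluation dates. For the continuous-time representation I would identify the integrand $\eta$ in \eqref{hat.X}: the martingale $M_t:=Z^{\Tilde\pi^*}_t\hat X_t/B_t=1+\int_0^t\eta_s^{\mathsf T}dW_s$ has terminal value $M_\tau=(y^*)^{1/(\alpha-1)}(Z^{\Tilde\pi^*}_\tau/B_\tau)^{\alpha/(\alpha-1)}$, and a conditional Gaussian computation yields $M_t=(y^*)^{1/(\alpha-1)}(Z^{\Tilde\pi^*}_t/B_t)^{\alpha/(\alpha-1)}e^{-\zeta(\alpha)(\tau-t)/(\alpha-1)}$ with $M_0=1$; It\^o's formula then gives $\eta_t=\frac{\alpha}{1-\alpha}M_t\Tilde\xi$, so that $\eta_t(1+\int_0^t\eta_s^{\mathsf T}dW_s)^{-1}=\frac{\alpha}{1-\alpha}\Tilde\xi$ is a constant vector. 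Inserting $M_t$ into \eqref{hat.X}, using $\alpha/(\alpha-1)-1=1/(\alpha-1)$ and $(y^*)^{1/(\alpha-1)}=e^{\zeta(\alpha)\tau/(\alpha-1)}$, collapses \eqref{hat.X} to $\hat X_t=e^{\zeta(\alpha)t/(\alpha-1)}(Z^{\Tilde\pi^*}_t/B_t)^{1/(\alpha-1)}$ on $[0,\tau]$; repeating on each block $[T_i,T_{i+1}]$ in \eqref{X.rep} and telescoping (using $Z^{\Tilde\pi^*}_0=B_0=1$, and restoring the initial capital $x$ by homogeneity) extends this to all $t\ge0$. Finally, substituting the constant $\eta_t(1+\int_{T_i}^t\eta_s^{\mathsf T}dW_s)^{-1}=\frac{\alpha}{1-\alpha}\Tilde\xi$ into \eqref{pi.rep} and using $\xi+\sigma^{-1}\Tilde\pi^*=\Tilde\xi$ from \eqref{tilde.xi} gives $\pi^*_t=X^*_t(\sigma^{\mathsf T})^{-1}\big[\Tilde\xi+\frac{\alpha}{1-\alpha}\Tilde\xi\big]=\frac{X^*_t}{1-\alpha}(\sigma^{\mathsf T})^{-1}[\xi+\sigma^{-1}\Tilde\pi^*]=\pi^*(X^*_t)$.

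The routine parts are the two lognormal moment computations and the It\^o differentiation of the explicit $M_t$. The one step requiring some care is the explicit identification of $\eta$ in \eqref{hat.X}/\eqref{mart.repre.theo.} and, implicitly, the check that the resulting feedback wealth lies in $\mathcal U_0(x)$ (positivity is automatic and the integrability follows since $\hat X$ is a geometric Brownian motion). As a consistency check one can instead verify directly that the linear SDE driven by the feedback $\pi^*(\cdot)$ is a geometric Brownian motion with law $xe^{\zeta(\alpha)t/(\alpha-1)}(Z^{\Tilde\pi^*}_t/B_t)^{1/(\alpha-1)}$, the drift match reducing to $\Tilde\xi^{\mathsf T}\xi=\|\Tilde\xi\|^2$, which follows from the KKT characterization $\Tilde\xi^{\mathsf T}\sigma^{-1}\Tilde\pi^*=0$ of the dual minimizer $\Tilde\pi^*$ in Proposition~\ref{existence}.
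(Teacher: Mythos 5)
Your proposal is correct and follows essentially the same route as the paper: specialize the Section~\ref{section4} machinery to $\gamma=1$, solve the now-linear fixed-point equation $A^*=e^{-\delta\tau}(e^{\zeta(\alpha)\tau}+A^*)$, compute $y^*=e^{\zeta(\alpha)\tau}$, and substitute into Theorem~\ref{thm4.1}. The only difference is one of self-containedness: where the paper invokes Sections 3--4 of \cite{XS92(b)} for the one-period value $e^{\zeta(\alpha)\tau}$ and the explicit wealth/portfolio formulas, you derive them directly via the lognormal moment identity $\mathbb{E}[(Z^{\Tilde{\pi}^*}_\tau/B_\tau)^{\alpha/(\alpha-1)}]=e^{-\zeta(\alpha)\tau/(\alpha-1)}$ and an explicit identification of the martingale-representation integrand $\eta_t=\frac{\alpha}{1-\alpha}M_t\Tilde{\xi}$, all of which check out (including the KKT identity $\Tilde{\xi}^{\mathsf{T}}\sigma^{-1}\Tilde{\pi}^*=0$ used in your consistency check).
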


\subsection{Quantitative Properties and Numerical illustrations}

Under the power utility with $\gamma<1$, the optimal portfolio process and the optimal wealth have no unified expressions over the entire infinite horizon. Indeed, we can only derive their implicit expressions within each period. We can see that the optimal portfolio \eqref{pi.rep} to the problem \eqref{problem} now exhibits the periodic feature stemming from the problem formulation with periodic evaluation. Only in the extreme case when $\gamma=1$, the optimal portfolio control can be expressed as a feedback form over the entire infinite horizon.

In what follows, we would like to numerically illustrate some quantitative properties of the optimal wealth process and the optimal portfolio control in Theorem \ref{thm4.1}. For simplicity, we only consider the market model with two stocks. Let
the mean rates of return $\mu=(\mu_1,\mu_2)^{\mathsf{T}}$ and the $2\times2$ volatility matrix $\sigma$ be given by
\begin{eqnarray}
    \mu=\left(
    \begin{array}{cc}
         0.1 \\
         0.15
    \end{array}\right),\quad
    \sigma=\left(
    \begin{array}{cc}
        0.2 & 0 \\
        0 & 0.25
    \end{array}
    \right),\nonumber
\end{eqnarray}
and let other market parameters be given as below (unless otherwise specified):
\begin{table}[H]
\centering
\caption{The values of the parameters.}
\label{tab2}
 \begin{tabular}{lcccccc}
\hline\hline
\noalign{\smallskip}
Parameter & $r$ & $\tau$ & $\gamma$ & $\alpha$ & $\delta$ &$x$ \\\noalign{\smallskip}
Value & 0.12& 1 & 0.8 & 0.5&  0.3 &0.5 \\\noalign{\smallskip}
\hline
\end{tabular}   
\end{table}

Recalling the value function $V(x)={\frac{1}{\alpha}}A^*x^{\alpha(1-\gamma)}$ with $A^*$ being the fixed-point of the operator $\Psi$ given in \eqref{Psi}, we first illustrate the sensitivity of $A^*$ (and hence of $V(x)$) in top panels of Figure \ref{fig-power} with respect to the mean return $\mu$ and the volatility $\sigma$ of stock 2 while keeping parameters of stock 1 fixed. It is shown that when the mean return $\mu_2$ is low, the short-selling constraint will force zero investment in stock 2. When the mean return $\mu_2$ is sufficiently large such that the agent invests the positive wealth in the stock, the value function is increasing with respect to the mean return $\mu_2$ and is decreasing with respect to the variance $\sigma_{22}$. We also plot the sensitivity of the value function $V(x)$ with respect to the risk aversion coefficient $\alpha$. We can see from the bottom panel of Figure \ref{fig-power} that the value function is essentially decreasing in the risk aversion parameter $\alpha$. In particular, for $\alpha\in (0,1)$, the value function is decreasing in $\alpha$ from infinity to zero; and for $\alpha<0$, the value function is decreasing from zero to negative infinity, which coincide with the monotonicity results in $\alpha$ observed in the classical Merton solution. 

\begin{figure}[H]
    \centering
\includegraphics[width=6.5cm]{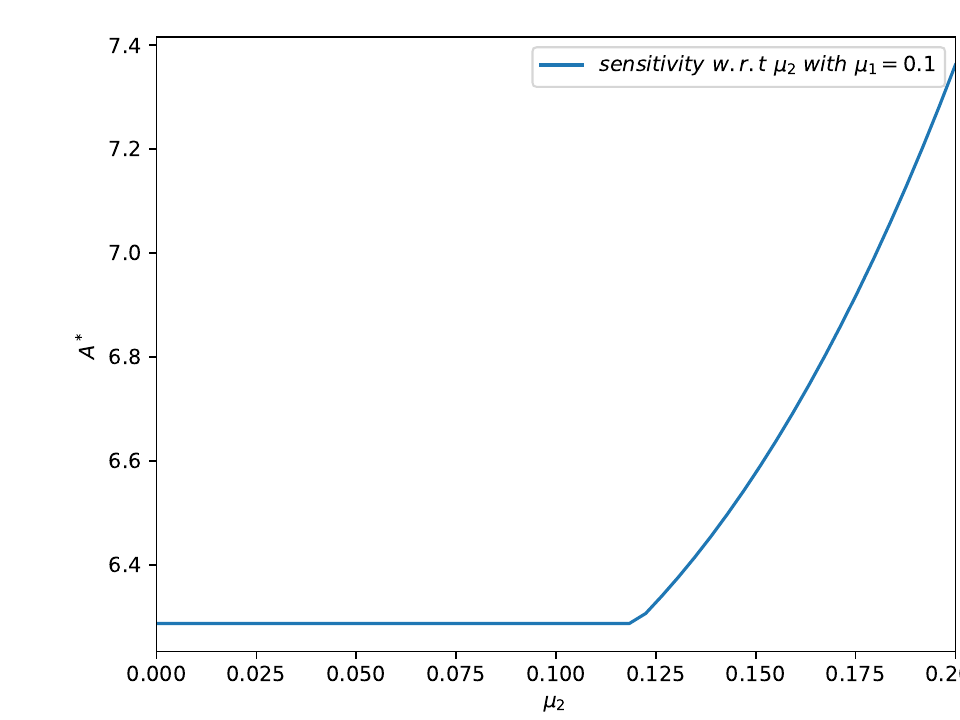}
   \quad
   \includegraphics[width=6.5cm]{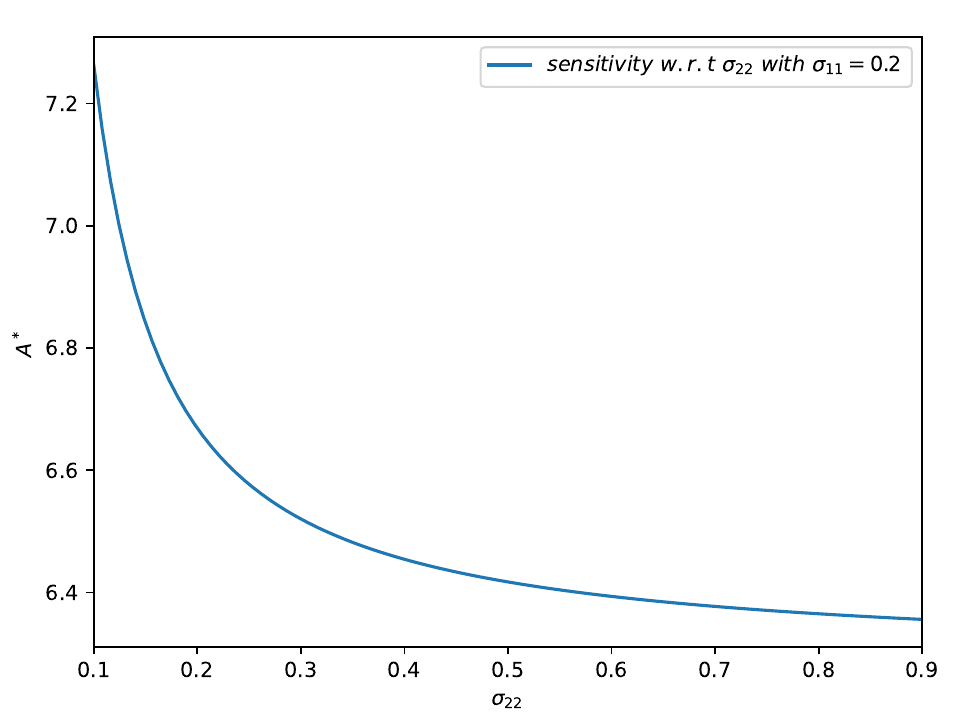}\\
   \quad
\includegraphics[width=6.5cm]{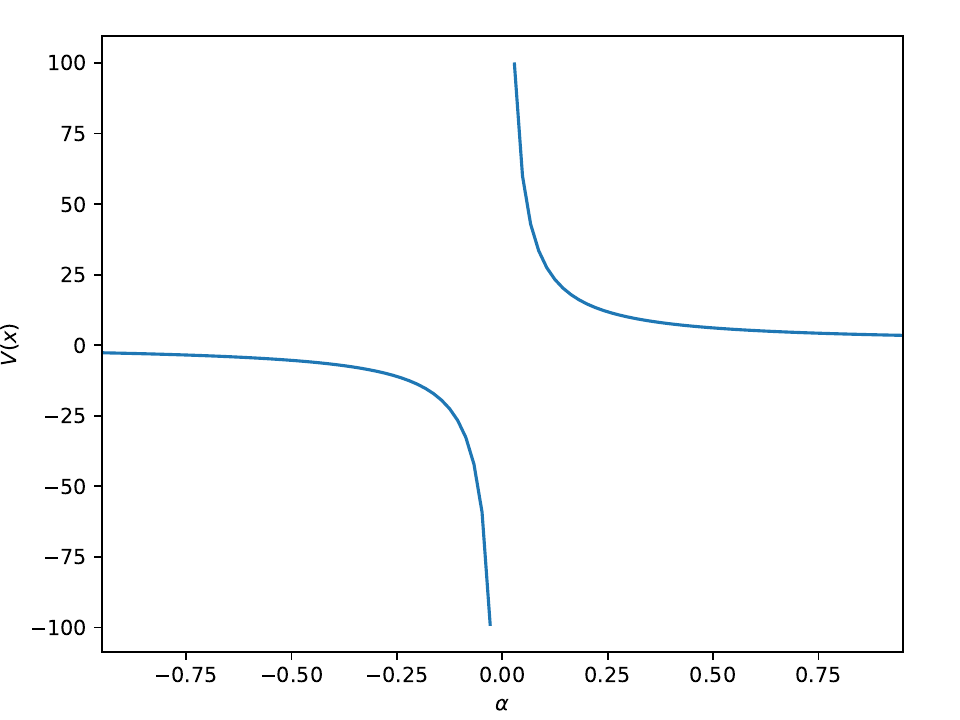}
    \caption{The sensitivity of $A^*$ with respect to the expected return $\mu$ (left-top panel) and the variance $\sigma$ (right-top panel); the sensitivity of $V(x)$ with respect to the risk aversion coefficient $\alpha$ (bottom panel)}\label{fig-power}
\end{figure}

We also plot the sensitivity of $V(x)$ with respect to the relative performance parameter $\gamma\in(0,1]$ in Figure \ref{fig-power2}. Different trends can be observed sensitively depending on the initial wealth $x$ and the risk aversion parameter $\alpha$. For the agent with a high risk aversion parameter $\alpha<0$, the value function is decreasing in $\gamma$ when the initial wealth is low; and the trend is reversed when the initial wealth is high. On the other hand, for the agent who is less risk averse, the value function is increasing in the parameter $\gamma$ when the initial wealth is high and it is decreasing in $\gamma$ with low initial wealth.

\begin{figure}[H]
    \centering
\includegraphics[width=6.5cm]{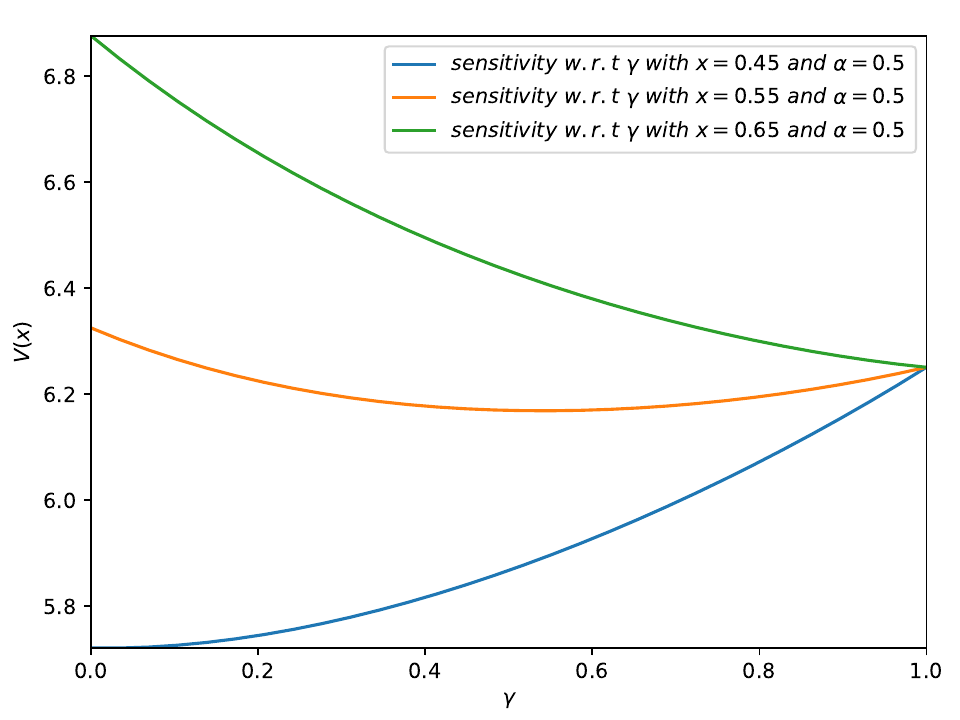}
\quad
\includegraphics[width=6.5cm]{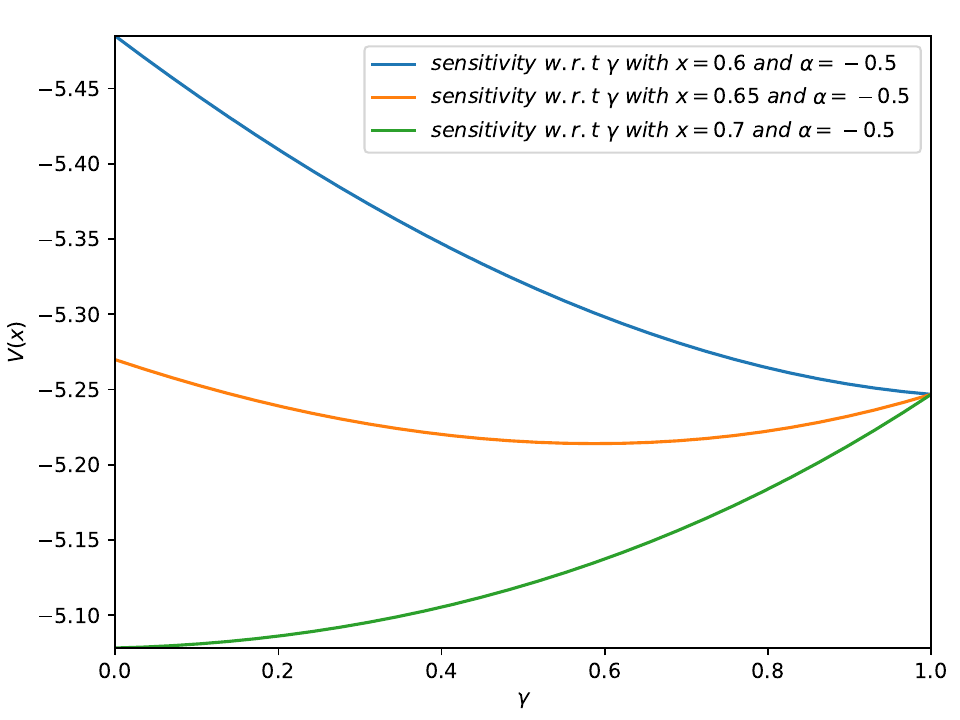}

\caption{The sensitivity $V(x)$ with respect to the relative performance parameter $\gamma$ }\label{fig-power2}

\end{figure}

\ \\

By the nature of the periodic evaluation over an infinite horizon, the conventional dependence of the value function on the time horizon $T$ is eliminated in our problem formulation. Instead, from the definition of $V(x)$, the choice of the evaluation period length $\tau$ may play an important role in determining the optimal portfolio and the value function. However, under power utility, the constant $A^*$ in the value function $V(x)={\frac{1}{\alpha}}A^*x^{\alpha(1-\gamma)}$ is an implicit fixed point to the operator $\Psi$ in \eqref{Psi}, which makes the theoretical dependence on the period length $\tau$ obscure in general. To highlight the dependence of the value function $V(x)$ on the choice of the period length, let us rewrite $V(x)$ as $V(x;\tau)$. In fact, we can numerically show that the value function itself may not admit the optimal choice of $\tau^*$ under power utility function, making the optimal periodicity in this case not well defined. However, from the perspective to scale the frequency in the periodic evaluation, it is also reasonable to consider the target as the scaled value function $V(x;\tau)\tau$. In Figure \ref{fig-power-tau}, we can illustrate the sensitivity results of $A^*\tau$ (and hence the scaled value function $V(x;\tau)\tau$) with respect to the choice of $\tau$ when $\gamma=0.8$ and $\gamma=1$, respectively. We can see that there exists a unique optimal periodicity of $\tau^*$ in each plot that the scaled value function can be optimized. We stress that the choice of $\tau$ not only affects the value function $V(x;\tau)$, but also has the direct and significant impact on the optimal wealth $X^*_t$ in \eqref{X.rep} and the optimal portfolio process in \eqref{pi.rep}. Different period length $\tau$ eventually leads to very distinct decision making in the infinite horizon portfolio management. In addition, more frequent evaluations in reality may also cause higher costs. Therefore, it will be appealing to formulate and incorporate some proper cost functions on the period length $\tau$ and study the optimization problem over $\tau$ in the context of portfolio periodic evaluation, which will be left as our future research.

\begin{figure}[H]
    \centering
\includegraphics[width=6.5cm]{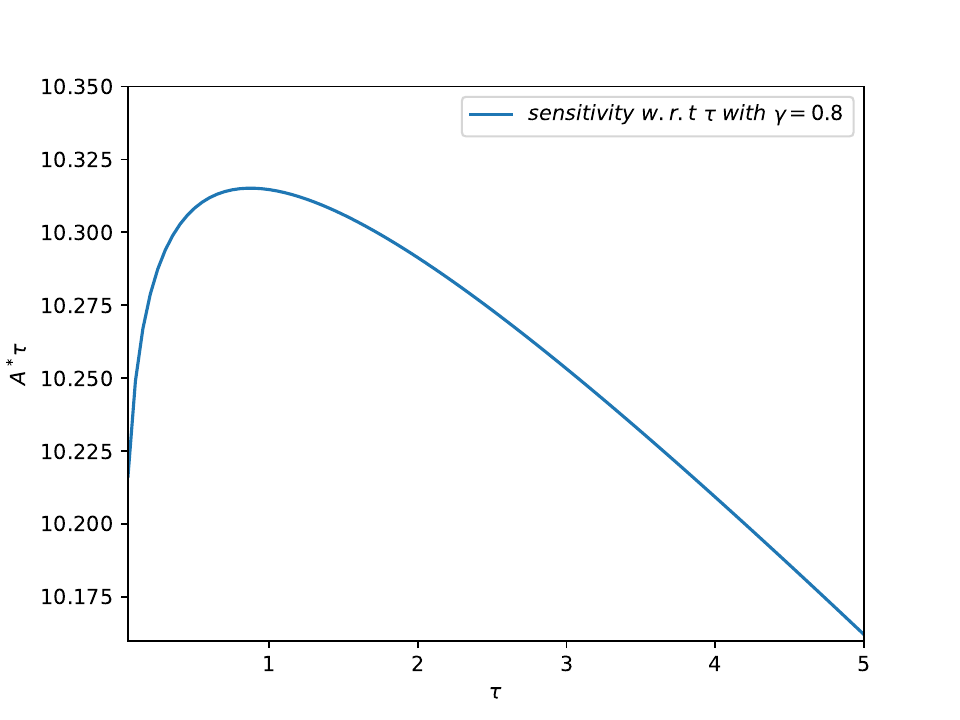}
\quad
\includegraphics[width=6.5cm]{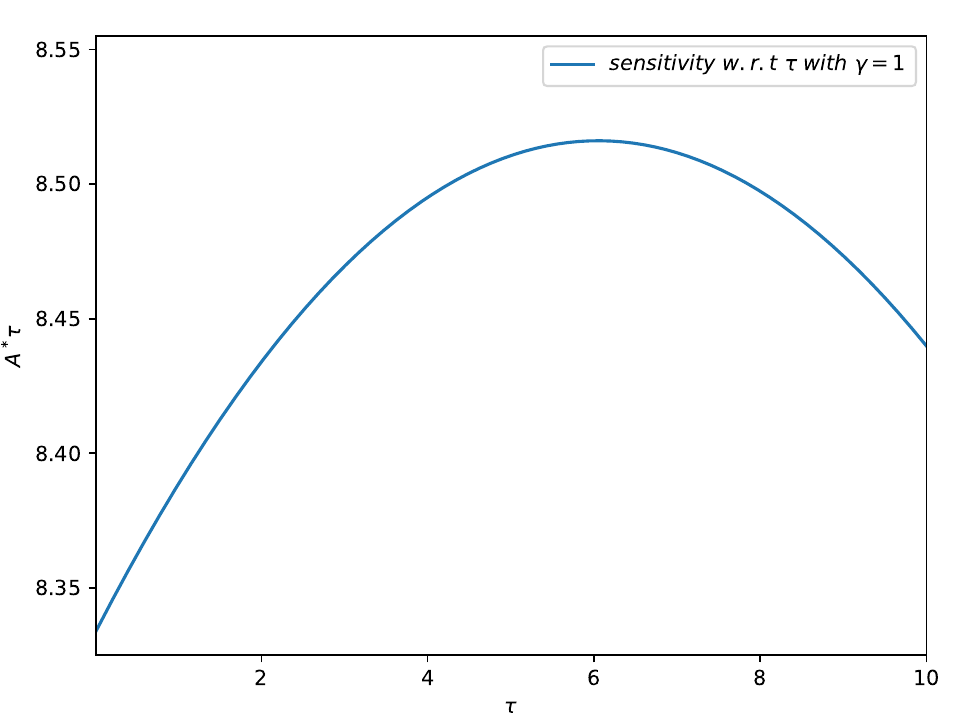}
    \caption{The numerical illustration of $A^*\tau$ with respect to $\tau$ under power utility function when $r=0.12,\,\alpha=0.5$ and $\delta=0.11$}\label{fig-power-tau}
\end{figure}

\ \\

To shed some light on the possible choice of the optimal period length under power utility, we only consider in the next result with the extreme relative performance parameter $\gamma=1$. Note that in this case, the value function reduces to the constant $V(x)=\frac{1}{\alpha}A^*$ independent of the initial wealth $x$ and the operator is simplified to $\Psi(a)=e^{-\delta \tau} \sup_{\pi\in\mathcal{U}_{0}(1)}\mathbb{E}[\frac{1}{\alpha}X_{\tau}^{\alpha}]$. The next result provides the sufficient condition for the optimal choice of the period length $\tau^*$ such that the scaled value function $V(x;\tau)\tau$ can be maximized. 

\begin{prop}\label{taupower}
Assume that the model parameters satisfy $\gamma=1$ and $\frac{\delta}{2}<\zeta(\alpha)<\delta$. Then, there exists an optimal $\tau^*\in\mathbb{R}_+$ such that $$V(x;\tau^*)\tau^*\geq V(x;\tau)\tau,\quad (x,\tau)\in\mathbb{R}_+\times\mathbb{R}_+.$$
\end{prop}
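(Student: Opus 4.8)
The plan is to invoke the closed-form value function for the case $\gamma=1$ from Corollary \ref{coro4.1} and reduce the statement to a one-dimensional extremal problem. Since there $V(x;\tau)=\frac{e^{(\zeta(\alpha)-\delta)\tau}}{\alpha(1-e^{-\delta\tau})}$, the scaled target is
\[
V(x;\tau)\,\tau=\frac{1}{\alpha}\,g(\tau),\qquad g(\tau):=\frac{\tau\,e^{(\zeta(\alpha)-\delta)\tau}}{1-e^{-\delta\tau}},\quad \tau>0,
\]
which does not depend on $x$, consistent with the claimed form of the statement. First I would note that the standing hypothesis $\zeta(\alpha)>\delta/2>0$, combined with the elementary fact that $\zeta(x)\le 0$ for every $x\le 0$ (indeed $rx\le 0$ and $\tfrac{x\|\Tilde\xi\|^2}{2(1-x)}\le 0$ when $x\le 0$, since $r\ge 0$ and $1-x\ge 1$), forces $\alpha\in(0,1)$; hence $\tfrac1\alpha>0$ and maximizing $V(x;\tau)\tau$ over $\tau$ is equivalent to maximizing $g$ over $(0,\infty)$.

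Next I would analyze the two boundary behaviours of $g$. As $\tau\to\infty$, one has $1-e^{-\delta\tau}\to 1$ while $\zeta(\alpha)-\delta<0$ gives $\tau e^{(\zeta(\alpha)-\delta)\tau}\to 0$, so $g(\tau)\to 0$. As $\tau\to 0+$, a Taylor expansion of numerator and denominator,
\[
\tau e^{(\zeta(\alpha)-\delta)\tau}=\tau+(\zeta(\alpha)-\delta)\tau^2+O(\tau^3),\qquad 1-e^{-\delta\tau}=\delta\tau\Bigl(1-\tfrac{\delta\tau}{2}+O(\tau^2)\Bigr),
\]
yields $g(\tau)=\tfrac1\delta\bigl(1+(\zeta(\alpha)-\tfrac{\delta}{2})\tau+O(\tau^2)\bigr)$. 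Thus $g$ extends continuously to $\tau=0$ with $g(0+)=1/\delta$ and right-derivative $g'(0+)=\tfrac1\delta\bigl(\zeta(\alpha)-\tfrac{\delta}{2}\bigr)>0$ under the assumed lower bound $\zeta(\alpha)>\delta/2$.

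The conclusion then follows by compactness. Extend $g$ to a continuous function on $[0,\infty]$ by setting $g(0):=1/\delta$ and $g(\infty):=0$; on this compact interval $g$ attains its maximum. Since $g'(0+)>0$, there is $\tau_0>0$ with $g(\tau_0)>1/\delta=\max\{g(0),g(\infty)\}$, so the maximum over $[0,\infty]$ is strictly larger than the boundary values and is attained at some interior point $\tau^*\in(0,\infty)$; this $\tau^*$ maximizes $V(x;\tau)\tau$ for all $x\in\mathbb{R}_+$. I do not expect a genuine obstacle here: the only step needing care is the Taylor expansion that pins down $g'(0+)=\tfrac1\delta(\zeta(\alpha)-\delta/2)$, which is exactly where the hypothesis $\zeta(\alpha)>\delta/2$ enters (it makes lengthening a vanishingly short evaluation period locally beneficial), while $\zeta(\alpha)<\delta$ is precisely what forces the decay $g(\infty)=0$. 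The statement asks only for existence, so no monotonicity analysis of $g'$ on $(0,\infty)$ — and hence no uniqueness argument — is needed.
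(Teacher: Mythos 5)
Your proof is correct and follows essentially the same route as the paper: both reduce to the scalar function $g(\tau)=\tau e^{(\zeta(\alpha)-\delta)\tau}/(1-e^{-\delta\tau})$ from Corollary \ref{coro4.1} and conclude from $g(0+)=1/\delta$, $g'(0+)=(\zeta(\alpha)-\delta/2)/\delta>0$ and $g(\infty)=0$ that an interior maximizer exists. Your explicit remark that $\zeta(\alpha)>\delta/2>0$ forces $\alpha\in(0,1)$, so that maximizing $V(x;\tau)\tau=\tfrac{1}{\alpha}g(\tau)$ really is maximizing $g$, is a small point the paper leaves implicit, and is a welcome addition.
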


\section{Periodic Evaluation under Logarithmic Utility}\label{section3}
This section aims to study the same portfolio optimization problem in the simple case the logarithmic utility function $U$ that
\begin{eqnarray}\label{log}
U(x):=\log x,\quad x\in\mathbb{R}_+,
\end{eqnarray} 
for which the mathematical arguments are more straightforward.

The goal of the agent with an initial capital $x\in\mathbb{R}_+$ is to maximize the total discounted expected utilities generated by the performance evaluations over all periods, i.e., the value function under the optimal portfolio is defined by
\begin{eqnarray}\label{problem.log}
V(x)
\hspace{-0.3cm}&:=&\hspace{-0.3cm}
\sup_{\pi\in\mathcal{U}_0(x)}\mathbb{E}\left[\sum_{i=1}^{\infty}e^{-\delta T_i}\log\left(\frac{X_{T_i}}{\left(X_{T_{i-1}}\right)^{\gamma}}\right)\right].
\end{eqnarray}

By the Markov property of $X$ and the dynamic programming principle, we can first obtain the relationship that
\begin{eqnarray}\label{ddp.dual.log}
V(x)
\hspace{-0.3cm}&=&\hspace{-0.3cm}
\sup_{\pi\in\mathcal{U}_0(x)}\mathbb{E}\left[e^{-\delta T_1}\log\left(\frac{X_{T_1}}{x^{\gamma}}\right)+e^{-\delta T_1}V(X_{T_1})\right]
\nonumber\\
\hspace{-0.3cm}&=&\hspace{-0.3cm}
\sup_{\pi\in\mathcal{U}_0(x)}\mathbb{E}\left[e^{-\delta \tau}\log\left(\frac{X_{\tau}}{x^{\gamma}}\right)+e^{-\delta \tau}V(X_{\tau})\right],\quad x\in\mathbb{R}_+.
\end{eqnarray}
In view of the logarithmic utility, we heuristically conjecture that the value function satisfies the form of $V(x)=A^*+C^*\log x$ for some constants $A^*,C^*\in\mathbb{R}$ to be determined. Substituting the conjectured expression of $V$ into \eqref{ddp.dual.log} and then subtracting both sides by $C^*\log x$, one can derive that
\begin{eqnarray}\label{A*}
A^*
\hspace{-0.3cm}&=&\hspace{-0.3cm}
\sup_{\pi\in\mathcal{U}_0(x)}\mathbb{E}\left[e^{-\delta T_1}\log\left(\frac{X_{T_1}}{x^{\gamma}}\right)+e^{-\delta T_1}[A^*+C^*\log X_{T_1}]-C^* \log x\right]
\nonumber\\
\hspace{-0.3cm}&=&\hspace{-0.3cm}
\sup_{\pi\in\mathcal{U}_0(x)}\mathbb{E}\left[e^{-\delta \tau}(1+C^*)\log \left(\frac{X_{\tau}}{x}\right)+[C^*(e^{-\delta\tau}-1)+(1-\gamma) e^{-\delta \tau}]\log x\right]+e^{-\delta \tau}A^*.
\end{eqnarray}
By taking $C^*=\frac{1-\gamma}{e^{\delta\tau}-1}$ in the above equation, we can heuristically obtain from \eqref{A*} that
\begin{eqnarray}\label{A*.dual}
A^*=\frac{e^{\delta\tau}-\gamma }{(e^{\delta\tau}-1)^2}\sup_{\pi\in\mathcal{U}_0(1)}\mathbb{E}\left[\log X_{\tau}\right].
\end{eqnarray}
Hence, the heuristic characterization of $V$ boils down to the characterization of the constant $A^*$, which is again to solve the one-period auxiliary terminal wealth utility maximization problem under short-selling constraint with the time horizon $\tau$ that
\begin{eqnarray}\label{log.pri}
V_0(x):=\sup_{\pi\in\mathcal{U}_0(x)}\mathbb{E}\left[\log X_{\tau}\right],\quad x\in\mathbb{R}_+.
\end{eqnarray}

To cope with the auxiliary optimization problem \eqref{log.pri}, we can again employ the martingale duality approach proposed in \cite{XS92(a)}.  However, we note that the duality approach and the main results in \cite{XS92(a)} are not directly applicable in our case because the logarithmic utility does not satisfy their assumptions on the utility function therein (see Definition 2.4 and the following paragraph in \cite{XS92(a)}). 

On the other hand, thanks to the basic property of the logarithmic function, we can write the problem \eqref{problem.log} under periodic evaluation of the relative ratio performance in an equivalent form as 
\begin{align}\label{equlog}
V(x)=\widetilde{V}(x) (1-\gamma e^{-\delta\tau}) - e^{-\delta \tau}\gamma \log x,
\end{align}
where 
\begin{align}\label{equlog-2}
\widetilde{V}(x):=\sup_{\pi\in\mathcal{U}_0(x)}\mathbb{E}\left[\sum_{i=1}^{\infty} e^{-\delta T_i}\log X_{T_i}\right]. 
\end{align}
That is, under the logarithmic utility function, finding the optimal portfolio in the original problem \eqref{problem.log} is identical to solving the portfolio maximization of the sum of expected log utility on terminal wealth with an increasing sequence of time horizons $T_i$, $i=1,2,\ldots$, which itself is a new type of infinite horizon portfolio optimization problem to the literature.

\subsection{Duality Method and Main Results}
We again resort to the martingale duality approach to first study the dual problem associated to \eqref{log.pri} and then establish the duality relationship between optimal solutions to both problems. 

For the logarithmic utility $U(x)=\log x$, we have
\begin{eqnarray}\label{Phi.U}
\Phi_{U}(y):=\sup_{x\geq0}\{U(x)-xy\}=-\log y-1,\quad y\in\mathbb{R}_+.
\end{eqnarray}
Moreover, it is direct to see that $\Phi_{U}(0):=\lim_{y\rightarrow0+}\Phi_{U}(y)=\infty$ and $\Phi_{U}(\infty):=\lim_{y\rightarrow\infty}\Phi_{U}(y)=-\infty$, as well as $\Phi_{U}^{\prime}(y)=-1/y$,  $\Phi_{U}^{\prime\prime}(y)=1/y^2$, $y\in\mathbb{R}_+$.  In summary, the function $\Phi_{U}$ is strictly decreasing, strictly convex and twice differentiable. Recall that the set of dual control processes $\Tilde{\mathcal{U}}$ and the process $Z^{\Tilde{\pi}}$ are defined by \eqref{tilde.U} and \eqref{Z.pi}, respectively.
For any fixed $a,y\in\mathbb{R}_+$ and $\Tilde{\pi}\in\Tilde{\mathcal{U}}_{0,\tau}$, the associated dual functional to the primal problem 
\begin{eqnarray}
\Tilde{J}(y,\Tilde{\pi}):=\mathbb{E}\left[\Phi_{U}(y\frac{Z^{\Tilde{\pi}}_{\tau}}{B_{\tau}})\right]=-\mathbb{E}\left[\log \left(y\frac{Z^{\Tilde{\pi}}_{\tau}}{B_{\tau}}\right)\right]-1,\nonumber
\end{eqnarray}
for any $y\in\mathbb{R}_+$ and $\Tilde{\pi}\in\Tilde{\mathcal{U}}_{0,\tau}$. The dual value function $\Tilde{V}$ is defined by
\begin{eqnarray}\label{dual.pro.log}
\Tilde{V}(y)=\inf_{\Tilde{\pi}\in\Tilde{\mathcal{U}}_{0,\tau}}\Tilde{J}(y,\Tilde{\pi}),\quad y\in\mathbb{R}_+.
\end{eqnarray}

Recall that the constant vector $\Tilde{\pi}^*\in\Tilde{\mathcal{U}}_{0,\tau}$ is the unique minimizer of $\|\xi+\sigma^{-1}\Tilde{\pi}\|$ over $\Tilde{\pi}\in[0,\infty)^n$ and the process $Z^{\Tilde{\pi}^*}_{\tau}$ is given by \eqref{Z.tilde}.
The next result provides the positive answer to the existence of optimal solution to the dual problem \eqref{dual.pro.log}, the duality relationship between the optimal solutions to the auxiliary primal problem \eqref{log.pri} and dual problem \eqref{dual.pro.log}. 
\begin{prop}\label{existence.log} 
The process $\Tilde{\pi}^*$ is  the unique minimizer to  the dual problem \eqref{dual.pro.log}. More specifically, the value function of the dual problem \eqref{dual.pro.log} is given by
\begin{align}
\label{Vtilde}  
\Tilde{V}(y)&=-\mathbb{E}\left[\log\left(y\frac{Z^{\Tilde{\pi}^*}_{\tau}}{B_{\tau}}\right)\right]-1
\nonumber\\
&=-\log y+\frac{1}{2}\|\Tilde{\xi}\|^2\tau+r\tau-1,\quad y\in\mathbb{R}_+.
\end{align}
In addition, the value function of the one-period auxiliary optimization problem \eqref{log.pri} satisfies
\begin{eqnarray}
\label{3.15}
V_0(x)=\sup_{\pi\in\mathcal{U}_{0}(x)}\mathbb{E}[U(X_{\tau})]=\inf_{y\in\mathbb{R}_+}\{\Tilde{V}(y)+xy\}=\Tilde{V}({1}/{x})+1,
\end{eqnarray}
with the optimal wealth process $X^*$ satisfying $X^*_{t}=x\frac{B_{t}}{Z^{\Tilde{\pi}^*}_{t}}$ for $t\in [0,\tau]$. 
\end{prop}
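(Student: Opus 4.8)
The plan is to carry out the martingale duality argument directly, since the hypotheses on the utility function in \cite{XS92(a)} do not cover the logarithmic case, in three stages: (i) compute the dual functional $\Tilde{J}(y,\Tilde{\pi})$ in closed form and read off the dual minimiser; (ii) establish weak duality between the primal problem \eqref{log.pri} and the dual problem \eqref{dual.pro.log}; (iii) exhibit a no-short-selling portfolio attaining the dual bound, which simultaneously closes the duality gap and identifies the optimal wealth. For stage (i): for any $\Tilde{\pi}\in\Tilde{\mathcal{U}}_{0,\tau}$ the process $\xi+\sigma^{-1}\Tilde{\pi}$ lies in $L^2([0,\tau]\times\Omega)$ (because $\|\xi+\sigma^{-1}\Tilde{\pi}_s\|^2\le 2\|\xi\|^2+2\|\sigma^{-1}\|^2\|\Tilde{\pi}_s\|^2$ and $\mathbb{E}[\int_0^\tau\|\Tilde{\pi}_s\|^2\,ds]<\infty$ by \eqref{tilde.U}), so the stochastic integral in $\log Z^{\Tilde{\pi}}_\tau$ is a mean-zero square-integrable martingale; taking expectations in \eqref{Phi.U} together with $B_\tau=e^{r\tau}$ then gives
\[
\Tilde{J}(y,\Tilde{\pi})=-\log y+r\tau-1+\tfrac12\,\mathbb{E}\Big[\int_0^\tau\big\|\xi+\sigma^{-1}\Tilde{\pi}_s\big\|^2\,ds\Big].
\]
Minimising over $\Tilde{\pi}$ thus reduces to the deterministic, $\omega$-wise minimisation of $p\mapsto\|\xi+\sigma^{-1}p\|^2$ over $p\in[0,\infty)^n$; strict convexity (as $\sigma^{-1}$ is invertible) makes the minimiser unique, and it is the constant vector $\Tilde{\pi}^*$ from Proposition \ref{existence}, so the constant control $\Tilde{\pi}_s\equiv\Tilde{\pi}^*$ is the (essentially unique) dual minimiser and substituting yields the closed form \eqref{Vtilde}.

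For stage (ii), fix $\pi\in\mathcal{U}_0(x)$, $\Tilde{\pi}\in\Tilde{\mathcal{U}}_{0,\tau}$ and $y>0$. The Fenchel inequality $\log X_\tau\le\Phi_U(yZ^{\Tilde{\pi}}_\tau/B_\tau)+X_\tau\,yZ^{\Tilde{\pi}}_\tau/B_\tau$ holds pointwise, so it remains to bound $\mathbb{E}[X_\tau Z^{\Tilde{\pi}}_\tau/B_\tau]$. An It\^o computation for $Y_t:=X_tZ^{\Tilde{\pi}}_t/B_t$, in which the excess-return terms cancel thanks to $\sigma\xi=\mu-r\mathbf{1}$, yields the drift $-Y_t\,\pi_t^{\mathsf{T}}\Tilde{\pi}_t\,dt\le0$ (using $\pi_t\ge0$ from the no-short-selling constraint and $\Tilde{\pi}_t\ge0$); since $Y\ge0$, it is a non-negative local supermartingale, hence a genuine supermartingale, so $\mathbb{E}[Y_\tau]\le Y_0=x$. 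Therefore $\mathbb{E}[\log X_\tau]\le\Tilde{J}(y,\Tilde{\pi})+xy$, and passing to $\sup_\pi$ and then $\inf_{\Tilde{\pi},y}$ gives $V_0(x)\le\inf_{y>0}\{\Tilde{V}(y)+xy\}$. Minimising the explicit expression $-\log y+\tfrac12\|\Tilde{\xi}\|^2\tau+r\tau-1+xy$ over $y$ (first-order condition $y^*=1/x$, a minimum since the function is convex) evaluates the infimum to $\Tilde{V}(1/x)+1$.

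For stage (iii), take $X^*_t:=xB_t/Z^{\Tilde{\pi}^*}_t$, which by \eqref{Z.tilde} equals $x\exp\{rt+\Tilde{\xi}^{\mathsf{T}}W_t+\tfrac12\|\Tilde{\xi}\|^2t\}$. Applying It\^o and matching the diffusion coefficient with the wealth dynamics $dX_t=[rX_t+\pi_t^{\mathsf{T}}(\mu-r\mathbf{1})]dt+\pi_t^{\mathsf{T}}\sigma\,dW_t$ forces the feedback $\pi^*_t=X^*_t(\sigma^{\mathsf{T}})^{-1}\Tilde{\xi}$. The first-order conditions characterising $\Tilde{\pi}^*$ as the minimiser of $\|\xi+\sigma^{-1}p\|^2$ over $p\ge0$ read $(\sigma^{\mathsf{T}})^{-1}\Tilde{\xi}\ge0$ and $(\Tilde{\pi}^*)^{\mathsf{T}}(\sigma^{\mathsf{T}})^{-1}\Tilde{\xi}=0$: the first gives $\pi^*_t\ge0$ since $X^*_t>0$, while the second gives $\Tilde{\xi}^{\mathsf{T}}\xi=\|\Tilde{\xi}\|^2$, which is exactly what makes the drift in the matching consistent, so $\pi^*\in\mathcal{U}_0(x)$ (the remaining integrability being clear since $X^*$ has moments of all orders). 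Finally $\mathbb{E}[\log X^*_\tau]=\log x+r\tau+\tfrac12\|\Tilde{\xi}\|^2\tau=\Tilde{V}(1/x)+1$ matches the upper bound of stage (ii), so every inequality above is an equality; this establishes \eqref{3.15}, the optimality (and uniqueness) of $\Tilde{\pi}^*$ in the dual problem, and the stated form of $X^*$.

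The main obstacle is stage (iii): showing that the formally optimal wealth $xB_t/Z^{\Tilde{\pi}^*}_t$ can actually be financed by a portfolio respecting the short-selling ban. This is exactly where the projection/KKT structure of $\Tilde{\pi}^*$ is indispensable: the sign condition $(\sigma^{\mathsf{T}})^{-1}\Tilde{\xi}\ge0$ delivers $\pi^*\ge0$, while the complementary-slackness identity $(\Tilde{\pi}^*)^{\mathsf{T}}(\sigma^{\mathsf{T}})^{-1}\Tilde{\xi}=0$ reconciles the drift in the wealth equation. Everything else — the explicit dual computation, weak duality, and the supermartingale estimate (where non-negativity upgrades the local-martingale property to a true supermartingale) — is routine, which is the sense in which the logarithmic case is simpler than the power case of Section \ref{section4}.
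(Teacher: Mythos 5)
Your proof is correct, and for the dual-optimality step it takes a genuinely more elementary route than the paper. The paper establishes that $\Tilde{\pi}^*$ solves \eqref{dual.pro.log} by adapting the general machinery of \cite{XS92(b)}: it verifies the integrability $\mathbb{E}\left[\left|\log Z^{\Tilde{\pi}}_t\right|\right]<\infty$ via Cauchy--Schwarz and then invokes the submartingale property of $(-\log Z^{\Tilde{\pi}}_t)_{t\ge0}$ together with Theorems 2.1 and 5.2 therein. You instead observe that, because the dual control enters $\Phi_U$ only through $\log Z^{\Tilde{\pi}}_\tau$ and the stochastic integral has zero mean under the $L^2$ admissibility condition in \eqref{tilde.U}, the dual functional collapses to the deterministic expression $-\log y+r\tau-1+\frac{1}{2}\mathbb{E}\left[\int_0^\tau\|\xi+\sigma^{-1}\Tilde{\pi}_s\|^2\,ds\right]$, so the dual problem reduces to a pointwise strictly convex quadratic program over $[0,\infty)^n$ whose unique solution is the constant $\Tilde{\pi}^*$; this yields \eqref{Vtilde}, including uniqueness, in a fully self-contained way, at the cost of being specific to the logarithmic utility (the paper's submartingale route is the one that also carries over to the power case). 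Your stages (ii) and (iii) coincide with what the paper intends when it defers to ``similar arguments as in Propositions \ref{unique} and \ref{relationship}'': the Fenchel inequality combined with the supermartingale bound $\mathbb{E}\left[Z^{\Tilde{\pi}}_\tau X_\tau/B_\tau\right]\le x$ is exactly \eqref{E<x.log}, and your explicit KKT conditions $(\sigma^{\mathsf{T}})^{-1}\Tilde{\xi}\ge0$ and $(\Tilde{\pi}^*)^{\mathsf{T}}(\sigma^{\mathsf{T}})^{-1}\Tilde{\xi}=0$ make transparent what the paper obtains by citing Theorem 4.8 of \cite{XS92(a)}, namely the admissibility of $\pi^*$ under the no-short-selling constraint and the orthogonality $(\pi^*_t)^{\mathsf{T}}\Tilde{\pi}^*=0$ that reconciles the drift.
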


We are next devoted to the verification theorem to identify the solution (i.e., the optimal trading strategy as well as the value function) to the problem \eqref{problem.log} in the next main result of this section. 

\begin{thm}[Verification Theorem]\label{thm3.1} 
The value function of the original problem \eqref{problem.log} is given by
\begin{align}\label{value-log}
V(x)=A^*+C^*\log x, \quad x\in\mathbb{R}_+,
\end{align}
where 
\begin{align}\label{valueparam}
A^*=\frac{e^{\delta\tau}-\gamma}{(e^{\delta\tau}-1)^2}\left(r+\frac{1}{2}\|\Tilde{\xi}\|^2\right)\tau,\quad C^*=\frac{1-\gamma}{e^{\delta\tau}-1},
\end{align}
with $\Tilde{\xi}=\xi+{\sigma}^{-1}{\Tilde{\pi}^*}$ and the constant vector ${\Tilde{\pi}^*}$ being the unique minimizer of $\|\xi+\sigma^{-1}\Tilde{\pi}\|$ over $\Tilde{\pi}\in[0,\infty)^n$. In addition, the optimal wealth process for the infinite horizon original problem is given by
\begin{eqnarray}\label{optimalXlog}
X^*_{t}:=x\frac{B_{t}}{Z_{t}^{\Tilde{\pi}^*}},\quad t\in[0,\infty),  
\end{eqnarray}
where the process $Z_{t}^{\Tilde{\pi}^*}$ is given by \eqref{Z.tilde}, and the optimal portfolio process admits the feedback form that
\begin{eqnarray}\label{optimalpilog}
\pi^*_t=\pi^*(X_t^*),
\end{eqnarray}
with the linear feedback function
\begin{eqnarray}\label{pilogopt}
\pi^*(x):=x(\sigma^{\mathsf{T}})^{-1}\left[\xi+{\sigma}^{-1}{\Tilde{\pi}^*}\right],\quad t\in [0,\infty).
\end{eqnarray}
\end{thm}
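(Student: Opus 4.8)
The plan is a verification argument split into an upper bound (no admissible strategy beats the claimed value) and a matching lower bound (the proposed feedback control attains it), both reduced to the one‑period result of Proposition \ref{existence.log}. It is convenient to route through the equivalent reformulation \eqref{equlog}: since $1-\gamma e^{-\delta\tau}>0$ (because $\gamma\le 1<e^{\delta\tau}$), the theorem follows once we show
$$\widetilde V(x)=\frac{1}{e^{\delta\tau}-1}\log x+\frac{e^{\delta\tau}}{(e^{\delta\tau}-1)^2}\Big(r+\tfrac12\|\widetilde\xi\|^2\Big)\tau,$$
where $\widetilde V$ is the auxiliary value function in \eqref{equlog-2}, and then substitute into \eqref{equlog} to read off $A^*,C^*$ as in \eqref{valueparam}. (Equivalently one may start from the dynamic programming relation \eqref{ddp.dual.log} with the ansatz $V(x)=A^*+C^*\log x$; the computations are the same.) Throughout set $m:=\big(r+\tfrac12\|\widetilde\xi\|^2\big)\tau$, so that $V_0(x)=\log x+m$ by \eqref{3.15} and \eqref{Vtilde}.

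For the upper bound, fix any $\pi\in\mathcal U_0(x)$. Conditioning on $\mathcal F_{T_{i-1}}$ and using time‑homogeneity of the wealth SDE, the restriction of $\pi$ to $[T_{i-1},T_i]$ is $\mathbb P$‑a.s. an admissible control for the one‑period problem \eqref{log.pri} started from $X_{T_{i-1}}$; hence by Proposition \ref{existence.log},
$$\mathbb E\big[\log X_{T_i}\mid\mathcal F_{T_{i-1}}\big]\le V_0(X_{T_{i-1}})=\log X_{T_{i-1}}+m.$$
Taking expectations and iterating from $X_{T_0}=x$ gives $\mathbb E[\log X_{T_i}]\le \log x+im$. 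Summing against $e^{-\delta T_i}=q^i$ with $q:=e^{-\delta\tau}\in(0,1)$ and evaluating $\sum_{i\ge1}q^i=q/(1-q)$ and $\sum_{i\ge1}iq^i=q/(1-q)^2$ yields $\widetilde V(x)\le \frac{q}{1-q}\log x+\frac{q}{(1-q)^2}m$, which is the displayed formula after using $q/(1-q)=1/(e^{\delta\tau}-1)$ and $q/(1-q)^2=e^{\delta\tau}/(e^{\delta\tau}-1)^2$.

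For the lower bound I exhibit the candidate and check it attains the bound, which simultaneously identifies the optimal $X^*$ and $\pi^*$. Let $\pi^*$ be the feedback control \eqref{pilogopt}, $\pi^*(x)=x(\sigma^{\mathsf T})^{-1}\widetilde\xi$ with $\widetilde\xi=\xi+\sigma^{-1}\widetilde\pi^*$. The KKT conditions for the minimizer $\widetilde\pi^*$ of $\|\xi+\sigma^{-1}\widetilde\pi\|^2$ over $\widetilde\pi\ge 0$ give both $(\sigma^{\mathsf T})^{-1}\widetilde\xi\ge 0$ componentwise (so $\pi^*$ obeys the no‑short‑selling constraint since $X^*>0$) and the complementary‑slackness identity $\big((\sigma^{\mathsf T})^{-1}\widetilde\xi\big)^{\mathsf T}\widetilde\pi^*=0$. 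Inserting $\pi^*$ into the wealth dynamics and using this identity, one verifies the closed form $X^*_t=xB_t/Z^{\widetilde\pi^*}_t$ with $Z^{\widetilde\pi^*}$ as in \eqref{Z.tilde}, i.e.\ \eqref{optimalXlog}; since $\log X^*_t=\log x+rt+\tfrac12\|\widetilde\xi\|^2t+\widetilde\xi^{\mathsf T}W_t$ is Gaussian, $\pi^*$ is locally square‑integrable and the integrability requirement in \eqref{set.Ut} holds, so $\pi^*\in\mathcal U_0(x)$. Moreover $\mathbb E[\log X^*_{T_i}]=\log x+im$, so $\pi^*$ turns every one‑period inequality above into equality and attains the upper bound; hence $\widetilde V(x)$ equals the claimed value and $\pi^*,X^*$ are optimal. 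Substituting into \eqref{equlog} yields $V(x)=A^*+C^*\log x$ with $A^*,C^*$ as in \eqref{valueparam}.

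The main obstacle is making the conditional one‑period estimate rigorous: one must pass from the static optimum $V_0$ of Proposition \ref{existence.log} (defined over deterministic initial wealth) to the conditional bound $\mathbb E[\log X_{T_i}\mid\mathcal F_{T_{i-1}}]\le V_0(X_{T_{i-1}})$, which needs a regular conditional distribution / measurable selection argument together with the flow property of the wealth SDE; and one must justify interchanging expectation with the infinite sum. For the latter, the positive parts are controlled uniformly via the supermartingale property of $X_t Z^{\widetilde\pi^*}_t/B_t$ under the no‑short‑selling constraint, so that $\log X_{T_i}\le X_{T_i}Z^{\widetilde\pi^*}_{T_i}/B_{T_i}-1+\log(B_{T_i}/Z^{\widetilde\pi^*}_{T_i})$ gives a bound that is summable after discounting, while the negative parts are controlled by the admissibility condition built into $\mathcal U_0(x)$.
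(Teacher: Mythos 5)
Your proposal is correct, and it reaches the result by a genuinely different (though closely related) route. The paper's proof works directly with the ratio-type objective: it defines the discrete process $D_n=\sum_{i\le n}e^{-\delta T_i}\log\big(X_{T_i}/X_{T_{i-1}}^{\gamma}\big)+e^{-\delta T_n}(A^*+C^*\log X_{T_n})$, shows it is an $\{\mathcal{F}_{T_n}\}$-supermartingale via the one-period result, and then spends most of its effort (Lemma \ref{lem3.1}, Fubini, monotone and dominated convergence) justifying the passage $n\to\infty$; the lower bound is obtained by constructing the optimizer, and the complementary slackness $(\pi^*_t)^{\mathsf T}\tilde\pi^*=0$ is imported from Theorem 4.8 of \cite{XS92(a)}. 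You instead pass through the reformulation \eqref{equlog}, telescope the conditional one-period bound $\mathbb{E}[\log X_{T_i}\mid\mathcal{F}_{T_{i-1}}]\le\log X_{T_{i-1}}+(r+\tfrac12\|\tilde\xi\|^2)\tau$ to get $\mathbb{E}[\log X_{T_i}]\le\log x+i(r+\tfrac12\|\tilde\xi\|^2)\tau$, and sum the geometric series explicitly; this exploits the additive structure of the logarithm and makes the origin of the coefficients $\tfrac{1}{e^{\delta\tau}-1}$ and $\tfrac{e^{\delta\tau}}{(e^{\delta\tau}-1)^2}$ transparent, at the cost of being specific to the log case (the paper's $D_n$ scheme is uniform across the power and log utilities). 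Your derivation of the sign condition $(\sigma^{\mathsf T})^{-1}\tilde\xi\ge0$ and of $\big((\sigma^{\mathsf T})^{-1}\tilde\xi\big)^{\mathsf T}\tilde\pi^*=0$ from the KKT conditions of the finite-dimensional minimization is a nice self-contained replacement for the citation to Xu--Shreve, and your explicit verification of the candidate is cleaner than the paper's ``as in Theorem \ref{thm4.1}.'' The two technical points you flag (measurable selection behind the conditional one-period inequality, and the sum--expectation interchange) are exactly the points the paper's Lemma \ref{lem3.1} and the surrounding convergence arguments address; your proposed controls --- the positive part via $\log y\le y-1$ applied to the supermartingale $X_tZ_t^{\tilde\pi^*}/B_t$, the negative part via the admissibility condition in \eqref{set.Ut} together with a Fubini resummation --- do close them, so no gap remains. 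Note also that the same integrability bounds are what legitimize the series rearrangement underlying \eqref{equlog} itself, which your argument implicitly relies on; it would be worth making that dependence explicit.
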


\subsection{Quantitative Properties and Numerical Illustrations}
Unlike the power utility, the optimal portfolio process and the optimal wealth in the logarithmic utility have explicit expressions. Based on the explicit characterizations in Theorem \ref{thm3.1}, we attempt to numerically illustrate in this subsection some quantitative properties of the value function and the optimal portfolio process and discuss some financial insights stemming from the periodic reward structure and the short-selling constraint. 

Again, we choose the mean rates of return $\mu=(\mu_1,\mu_2)^{\mathsf{T}}$ and the $2\times2$ volatility matrix $\sigma$ by
\begin{eqnarray}
    \mu=\left(
    \begin{array}{cc}
         0.1 \\
         0.15
    \end{array}\right),\quad
    \sigma=\left(
    \begin{array}{cc}
        0.2 & 0 \\
        0 & 0.25
    \end{array}
    \right),\nonumber
\end{eqnarray}
and other market parameters are listed in Table \ref{tab1} (unless otherwise specified):
\begin{table}[H]
\centering
\caption{The values of the parameters.}
\label{tab1}
 \begin{tabular}{lccccc}
\hline\hline
\noalign{\smallskip}
Parameter & $r$ & $\tau$ & $\gamma$  & $\delta$ &$x$ \\\noalign{\smallskip}
Value & 0.12& 1 & 0.8 &  0.3 &0.5 \\\noalign{\smallskip}
\hline
\end{tabular}   
\end{table}
\vspace{0.2in}

\begin{figure}[H]
    \centering
\includegraphics[width=6.5cm]{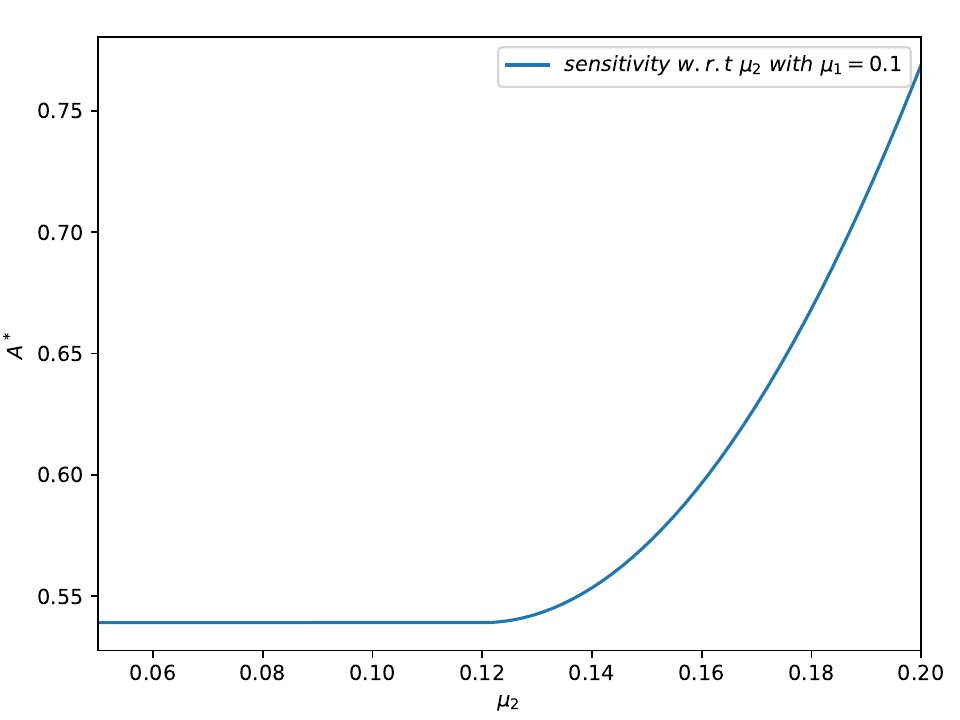}
    \quad
\includegraphics[width=6.5cm]{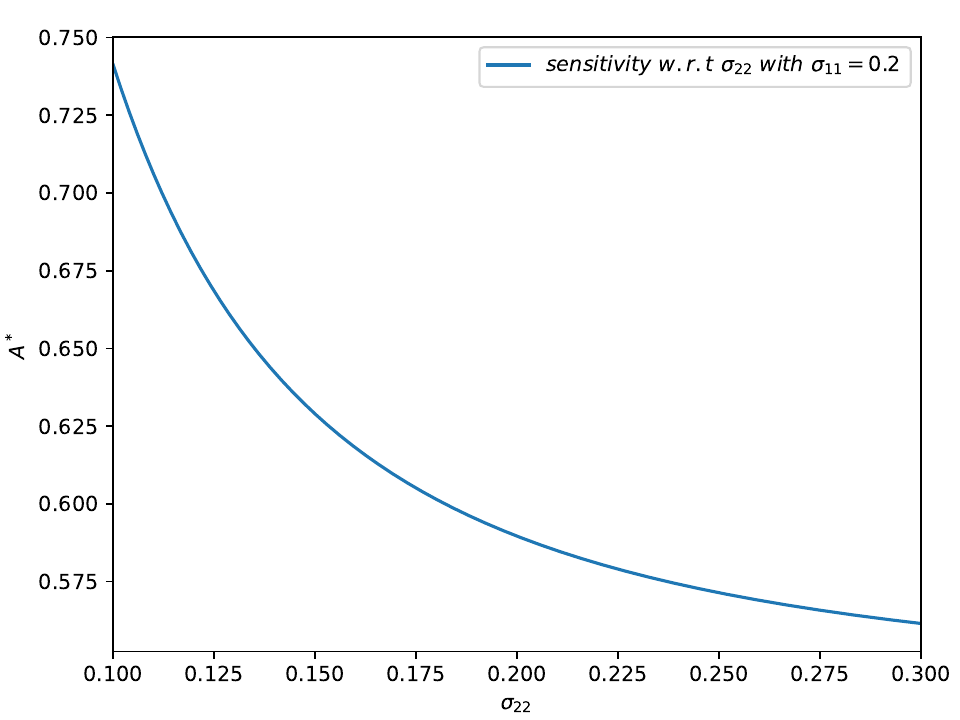}\\
\includegraphics[width=6.5cm]{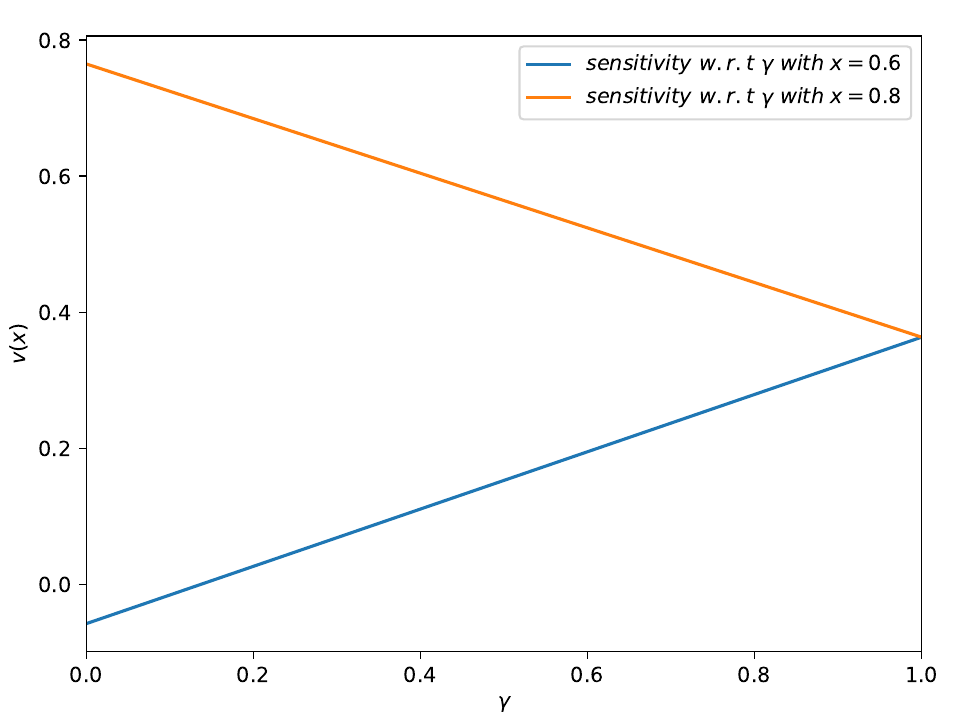}
    \caption{The sensitivity of $A^*$ with respect to the drift $\mu$ (left-top panel) and the volatility $\sigma$ (right-top panel); The sensitivity of $V(x)$ with respect to $\gamma$ (bottom panel)}\label{fig-log-mu}
\end{figure}

Recall that $V(x)=A^*+C^*\log x$, where only $A^*$ depends on the mean return $\mu$ and the volatility $\sigma$. To illustrate the sensitivity of $V(x)$ on these two parameters, it is sufficient to plot the sensitivity of $A^*$ with respect to $\mu$ and $\sigma$ in the top panels of Figure \ref{fig-log-mu}. We can observe from the graphs that the agent does not invest in the corresponding stock when its expected return $\mu$ is lower than $r$ due to the short-selling constraint; and once the agent allocates positive wealth into the stock, the value $A^*$ (and hence the value function $V(x)$) is increasing in the expected return and decreasing in the volatility of the stock, which are consistent with results in classical models with a finite time horizon. In addition, we also plot the sensitivity of the value function $V(x)$ with respect to the relative performance parameter $\gamma$ in the bottom panel of Figure \ref{fig-log-mu}. It is not surprising that the value function $V(x)$ is linear in the parameter $\gamma$ from the problem formulation. When the agent holds a low initial wealth, the value function is decreasing in $\gamma$; When the agent starts with a sufficient initial wealth, the larger relative performance parameter increases the value function. These trends can also be directly observed from the equivalent problem formulation in \eqref{equlog} as the problem \eqref{equlog-2} is independent of $\gamma$. However, from \eqref{optimalXlog} and \eqref{optimalpilog} or \eqref{equlog-2}, the constrained optimal portfolio $\pi_t^*$ and the optimal wealth process $X_t^*$ are actually independent of the parameter $\gamma$. Therefore, under the logarithmic utility, the relative performance parameter $\gamma$ has adverse effect in the value function but it in fact has no impact on the choice of the optimal portfolio. In practice, the logarithmic utility agent can freely choose the value of $\gamma\in(0,1]$ and will arrive at the same decision making.

Next, we are interested in examing the impact cause by the short-selling prohibition under the periodic evaluation. To this end, let $V_n$ denote the value function of the investment problem under no portfolio constraint. By mimicking the method used for deriving the value function of the optimization problem \eqref{problem.log} in the previous subsection, if short-selling is allowed, one can obtain that the unconstrained value function satisfies
\begin{eqnarray}
V_n(x)=A+C^* \log x,\quad x\in\mathbb{R}_+,\nonumber
\end{eqnarray}
with
$$A:=\frac{e^{\delta\tau}-\gamma}{(e^{\delta\tau}-1)^2}\sup_{\pi\in\mathcal{A}_0(1)}\mathbb{E}[\log X_{\tau}],\quad C^*=\frac{1-\gamma}{e^{\delta\tau}-1},$$
where $\sup_{\pi\in\mathcal{A}_0(1)}\mathbb{E}[\log X_{\tau}]$ is the value function of the one-period utility maximization problem without the portfolio constraint, and the set of admissible controls is defined by
\begin{eqnarray}\label{set.At}
\mathcal{A}_0(x)
\hspace{-0.3cm}&:=&\hspace{-0.3cm}
\bigg\{\pi:\pi=(\pi^1_t,...,\pi^n_t)_{t\geq0}\text{ is a predictable and locally square-integral process such that}
\nonumber\\
\hspace{-0.3cm}&&\hspace{-0.3cm}
\text{ }X_t=x+\sum_{i=1}^n\int_0^t{\pi^i_u}\frac{dS^i_u}{S^i_u}+\int_0^t(X_u-\mathbf{1}^{\mathsf{T}}\pi_u)\frac{dB_u}{B_u}>0 \text{ for } t\geq 0,
\nonumber\\
\hspace{-0.3cm}&&\hspace{-0.3cm}
\color{black}\text{and } 
\sum_{i=1}^{\infty}e^{-\delta T_i}\mathbb{E}\left[\left(U\left(\frac{X_{T_i}}{\left(X_{T_{i-1}}\right)^{\gamma}}\right)\right)_{-}\right]<\infty
\bigg\},\quad x>0.\nonumber
\end{eqnarray}
Consequently, one can follow the arguments of Section 6 in \cite{KLSX91} to verify that
$$\sup_{\pi\in\mathcal{A}_0(1)}\mathbb{E}[\log X_{\tau}]=\mathbb{E}\left[\log\frac{B_{\tau}}{Z_{\tau}}\right]=r\tau+\frac{1}{2}\|\xi\|^2\tau,$$
with process $Z$ defined in \eqref{z.process}.
Hence, similar to Theorem \ref{thm3.1}, the next result follows.

\begin{corollary}\label{corshortlog}
When the short-selling is allowed, the value function under the period evaluation satisfies
$$V_n(x)=\frac{e^{\delta\tau}-\gamma}{(e^{\delta\tau}-1)^2}\left(r+\frac{1}{2}\|\xi\|^2\right)\tau+\frac{1-\gamma}{e^{\delta\tau}-1}\log x\quad x\in\mathbb{R}_+.$$
In addition, the optimal wealth process $X_t^{*,n}$ and the feedback optimal portfolio $\pi_t^{*,n}$ satisfy 
\begin{align}\label{canshortpi}
&X_t^{*,n}=x\frac{B_t}{Z_t},\quad \text{and}\quad \pi_t^{*,n}=\pi^{*,n}(X_t^{*,n}),\notag\\
&\text{where}\ \ \pi^{*,n}(x)=x(\sigma\sigma^{\mathsf{T}})^{-1}(\mu-r\mathbf{1}).
\end{align}
\end{corollary}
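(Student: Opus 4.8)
The statement is the unconstrained counterpart of Theorem~\ref{thm3.1}, so the plan is to re-run the derivation of Section~\ref{section3} with the full cone $\mathbb{R}^n$ in place of $[0,\infty)^n$. I would proceed in four steps: (i) solve the one-period log-utility problem $\sup_{\pi\in\mathcal{A}_0(1)}\mathbb{E}[\log X_\tau]$ explicitly; (ii) substitute its value into the already-derived expressions for $A$ and $C^*$; (iii) verify, by the same reasoning as in the proof of Theorem~\ref{thm3.1}, that $V_n(x)=A+C^*\log x$ is the value function of the infinite-horizon unconstrained problem; and (iv) read off the feedback portfolio from the explicit form of $X^{*,n}$.

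For step (i), since there is no portfolio constraint the market is complete and the single state-price density is the process $Z$ of \eqref{z.process}. For any $\pi\in\mathcal{A}_0(1)$ the deflated wealth $Z_t X^\pi_t/B_t$ is a nonnegative local martingale, hence a supermartingale, so $\mathbb{E}[Z_\tau X^\pi_\tau/B_\tau]\le 1$. Combining this with the Fenchel inequality $\log a-ab\le\Phi_U(b)=-\log b-1$ applied with $a=X^\pi_\tau$, $b=yZ_\tau/B_\tau$ and taking expectations gives $\mathbb{E}[\log X^\pi_\tau]\le -\mathbb{E}[\log(yZ_\tau/B_\tau)]-1+y$ for every $y>0$; minimising the right-hand side over $y$ (the minimiser being $y=1$) yields $\mathbb{E}[\log X^\pi_\tau]\le\mathbb{E}[\log(B_\tau/Z_\tau)]=r\tau+\frac{1}{2}\|\xi\|^2\tau$. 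This bound is attained by $X^*_\tau=B_\tau/Z_\tau$, which is replicable by market completeness via a locally square-integrable, strictly positive wealth process, so $\sup_{\pi\in\mathcal{A}_0(1)}\mathbb{E}[\log X_\tau]=r\tau+\frac{1}{2}\|\xi\|^2\tau$. (This is precisely Section~6 of \cite{KLSX91}; in the language of \eqref{dual.pro.log}, dropping the sign constraint on $\Tilde{\pi}$ forces the dual optimiser to $\Tilde{\pi}^*=0$, i.e.\ $\Tilde{\xi}$ is replaced by $\xi$.)

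For step (ii), inserting $\sup_{\pi\in\mathcal{A}_0(1)}\mathbb{E}[\log X_\tau]=r\tau+\frac{1}{2}\|\xi\|^2\tau$ into $A=\frac{e^{\delta\tau}-\gamma}{(e^{\delta\tau}-1)^2}\sup_{\pi\in\mathcal{A}_0(1)}\mathbb{E}[\log X_\tau]$ and keeping $C^*=\frac{1-\gamma}{e^{\delta\tau}-1}$ produces the asserted formula for $V_n$. For step (iii), one conjectures $V_n(x)=A+C^*\log x$, substitutes it into the unconstrained analogue of the dynamic programming relation \eqref{ddp.dual.log}, and checks that the constants $(A,C^*)$ above solve the resulting fixed-point identity; this is the same algebra that produced \eqref{A*.dual} and \eqref{valueparam}, with $\|\xi\|$ in place of $\|\Tilde{\xi}\|$. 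The verification that this candidate is the true value function --- that rolling the one-period optimiser forward across the evaluation dates $T_i$ is admissible, that the discounted series converges to $A+C^*\log x$, and that no admissible strategy does better --- is word-for-word the proof of Theorem~\ref{thm3.1}, relying on the integrability requirement in the definition \eqref{set.At} of $\mathcal{A}_0$ and on the transversality estimate $e^{-\delta T_i}\mathbb{E}[|\log X_{T_i}|]\to 0$, which holds because $\mathbb{E}[\log X_{T_i}]$ grows at most linearly in $T_i$ while $e^{-\delta T_i}$ decays exponentially. Finally, for step (iv), $X^{*,n}_t=xB_t/Z_t$ together with $Z_t=\exp(-\xi^{\mathsf{T}}W_t-\frac{1}{2}\|\xi\|^2 t)$ gives, by It\^o's formula, $dX^{*,n}_t=X^{*,n}_t\big((r+\|\xi\|^2)\,dt+\xi^{\mathsf{T}}dW_t\big)$; matching the diffusion term with the wealth SDE $dX_t=[rX_t+\pi_t^{\mathsf{T}}(\mu-r\mathbf{1})]dt+\pi_t^{\mathsf{T}}\sigma\,dW_t$ forces $\pi_t^{\mathsf{T}}\sigma=X^{*,n}_t\xi^{\mathsf{T}}$, i.e.\ $\pi^{*,n}(x)=x(\sigma^{\mathsf{T}})^{-1}\xi=x(\sigma\sigma^{\mathsf{T}})^{-1}(\mu-r\mathbf{1})$, after which the drift term is automatically consistent.

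The only place where anything could go wrong is step (iii): as in Theorem~\ref{thm3.1} one must justify interchanging the infinite sum with the supremum and control the tail of the series. However, the unconstrained problem differs from the constrained one solely by replacing the state-price density $Z^{\Tilde{\pi}^*}$ with $Z$ and $\Tilde{\xi}$ with $\xi$, so no new estimate is required and the argument is a transcription of the one already given; steps (i), (ii) and (iv) are routine computations.
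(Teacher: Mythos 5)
Your proposal is correct and follows essentially the same route as the paper: the paper likewise reduces the corollary to the unconstrained one-period problem $\sup_{\pi\in\mathcal{A}_0(1)}\mathbb{E}[\log X_\tau]=r\tau+\frac{1}{2}\|\xi\|^2\tau$ (delegated to Section 6 of \cite{KLSX91}, whose content is exactly your state-price-density/Fenchel argument), plugs this into $A$ and $C^*$, and invokes the verification of Theorem \ref{thm3.1} with $Z^{\Tilde{\pi}^*}$ and $\Tilde{\xi}$ replaced by $Z$ and $\xi$. Your It\^o computation recovering $\pi^{*,n}(x)=x(\sigma\sigma^{\mathsf{T}})^{-1}(\mu-r\mathbf{1})$ from the diffusion coefficient of $xB_t/Z_t$ matches the paper's \eqref{canshortpi}.
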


\begin{rem}
First, we note that the feedback functions in \eqref{pilogopt} indicate that it is optimal to allocate either zero (due to short-selling constraint) or a fixed positive fraction of wealth into the stocks. Similarly, when short-selling is allowed, \eqref{canshortpi} also suggest a fixed proportion of wealth to invest in stocks. In fact, in both cases with and without short-selling constraint, we can also study the classical finite horizon utility maximization problem on terminal wealth 
\begin{align*}
\sup_{\pi^1\in\mathcal{A}^1(x)}\mathbb{E}[\log(X_T)]
\end{align*}
and the infinite horizon utility maximization problem on the intermediate consumption that $$\sup_{(\pi^2,c)\in\mathcal{A}^2(x)}\mathbb{E}\left[\int_0^\infty \log(c_t)dt\right],$$
where $\mathcal{A}^1(x)$ and $\mathcal{A}^2(x)$ stand for the sets of admissible controls in the conventional setting.
Although these two classical problems differ from our optimization problems under periodic evaluation, under the special logarithmic utility, it is interesting to see that the optimal portfolio processes for these two classical problems actually coincide with the optimal solution in our new problem with periodic evaluation (in either case with short-selling constraint or without short-selling constraint), or equivalently the constant wealth fraction optimal portfolio in classical problems
coincides with the optimal solution in the new infinite horizon portfolio optimization problem \eqref{equlog-2}. Therefore, under the logarithmic utility, the periodic evaluation criterion leads to a distinct value function comparing with the classical Merton's problems, but it produces the same optimal portfolio strategy as if in the classical problem without periodic evaluation. 

However, we also stress that the coincidence of optimal solutions in different problems is only consequent on the property of logarithmic utility. It was shown in Section \ref{section4} that the optimal portfolios in the classical problems and in our new formulation with periodic evaluation differ significantly when the power utility is adopted.    

\end{rem}

From the definitions of $\xi$ and $\Tilde{\xi}$ in \eqref{xi} and \eqref{tilde.xi}, it readily follows that
$$\|\Tilde{\xi}\|=\inf_{\Tilde{\pi}\in[0,\infty)^n}\|\xi+\sigma^{-1}\Tilde{\pi}\|\leq \|\xi+\sigma^{-1}\Tilde{\pi}\||_{\Tilde{\pi}\equiv0}=\|\xi\|,$$
and the next result is a direct consequence of Theorem \ref{thm3.1} and Corollary \ref{corshortlog}. 

\begin{corollary}\label{costshortsell}
The difference between two value functions is a positive constant that
\begin{align}\label{comp-diff}
    V_n(x)-V(x)=A-A^*=\frac{e^{\delta\tau}-\gamma}{2(e^{\delta\tau}-1)^2}\left(\|\xi\|^2-\|\Tilde{\xi}\|^2\right)\tau\geq0,\quad x\in\mathbb{R}_+.
\end{align}
\end{corollary}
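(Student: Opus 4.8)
The plan is to obtain the statement by directly combining the two explicit formulas already established, namely the expression $V(x)=A^*+C^*\log x$ with $A^*=\frac{e^{\delta\tau}-\gamma}{(e^{\delta\tau}-1)^2}\bigl(r+\tfrac12\|\Tilde{\xi}\|^2\bigr)\tau$ and $C^*=\frac{1-\gamma}{e^{\delta\tau}-1}$ from Theorem \ref{thm3.1}, and the expression $V_n(x)=A+C^*\log x$ with $A=\frac{e^{\delta\tau}-\gamma}{(e^{\delta\tau}-1)^2}\bigl(r+\tfrac12\|\xi\|^2\bigr)\tau$ from Corollary \ref{corshortlog}. First I would note that both value functions carry the \emph{same} coefficient $C^*$ in front of $\log x$, so upon subtraction the logarithmic terms cancel and $V_n(x)-V(x)=A-A^*$ is a constant that does not depend on $x$.

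Next I would carry out the elementary algebra: factoring the common prefactor $\frac{e^{\delta\tau}-\gamma}{(e^{\delta\tau}-1)^2}\tau$ out of $A-A^*$, the two $r$-terms cancel and one is left with $A-A^*=\frac{e^{\delta\tau}-\gamma}{2(e^{\delta\tau}-1)^2}\bigl(\|\xi\|^2-\|\Tilde{\xi}\|^2\bigr)\tau$, which is exactly the claimed identity \eqref{comp-diff}.

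For the nonnegativity I would invoke the comparison recorded in the display just preceding the statement, $\|\Tilde{\xi}\|=\inf_{\Tilde{\pi}\in[0,\infty)^n}\|\xi+\sigma^{-1}\Tilde{\pi}\|\leq\|\xi\|$ (the upper bound obtained by choosing the admissible value $\Tilde{\pi}\equiv 0$), so that $\|\xi\|^2-\|\Tilde{\xi}\|^2\geq 0$; moreover $\gamma\in(0,1]$ forces $e^{\delta\tau}-\gamma\geq e^{\delta\tau}-1>0$, while $\tau>0$ and $(e^{\delta\tau}-1)^2>0$, so the prefactor is strictly positive. Hence $V_n(x)-V(x)\geq 0$. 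I do not anticipate any genuine obstacle here: the corollary is a bookkeeping consequence of Theorem \ref{thm3.1} and Corollary \ref{corshortlog}, the only structural input being that $\Tilde{\xi}$ is by construction a constrained minimizer of the same quantity $\|\xi+\sigma^{-1}\Tilde{\pi}\|$ of which the unconstrained choice $\Tilde{\pi}\equiv 0$ yields $\|\xi\|$.
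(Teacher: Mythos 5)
Your proposal is correct and matches the paper's argument exactly: the paper likewise presents the corollary as a direct consequence of Theorem \ref{thm3.1} and Corollary \ref{corshortlog}, with the nonnegativity coming from the displayed inequality $\|\Tilde{\xi}\|=\inf_{\Tilde{\pi}\in[0,\infty)^n}\|\xi+\sigma^{-1}\Tilde{\pi}\|\leq\|\xi\|$ obtained by taking $\Tilde{\pi}\equiv 0$. The cancellation of the $\log x$ terms via the common coefficient $C^*$ and the algebra on $A-A^*$ are exactly the intended bookkeeping.
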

The positive difference in \eqref{comp-diff} can be interpreted as the cost of short-selling constraint under our proposed periodic evaluation when the logarithmic utility is adopted. It is interesting to see that this cost under the logarithmic utility is independent of the initial wealth $x$, and it is linear and decreasing in the relative performance parameter $\gamma$. So, if the agent accounts more on the relative performance with the larger $\gamma$, the impact of the short-selling constraint is more weakened.  

%
%

We stress that for logarithmic utility agent, the optimal wealth process $X_t^*$ in \eqref{optimalXlog} and the optimal constrained portfolio process in \eqref{optimalpilog} are actually independent of the period length $\tau$. As a result, in terms of the optimal decision making, it does not matter which period length $\tau$ the agent adopts and the period length in practice might be determined in the appraisal system. On the other hand, if the agent has the authority in choosing the period length for periodic evaluation, one natural question arises: based on the explicit value function $V(x)$ in \eqref{value-log} and \eqref{valueparam}, is it possible to discuss the optimal periodicity of $\tau$ to maximize some targeted profit? 

Thanks to the nice property of logarithmic function, this dependence on $\tau$ become transparent under some circumstances, which makes the value function as a natural choice to study  the optimal periodicity. The next result states that under certain sufficient conditions on model parameters and the initial wealth $x$, there exists an optimal $\tau^*$ attaining the maximal value function $\sup_{\tau} V(x;\tau)$.

\begin{prop}\label{optimaltau-1}
If $\gamma \in(0,1)$ and $\frac{r+\frac{1}{2}\|\Tilde{\xi}\|^2}{\delta}+\log x<0$, there exists an optimal $\tau^*$ such that
$$V(x;\tau^*)\geq V(x;\tau)\vee0,\quad\textrm{for any}\ \ (x,\tau)\in\mathbb{R}_+\times\mathbb{R}_+.$$
\end{prop}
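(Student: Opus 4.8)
Recalling the explicit value function from Theorem \ref{thm3.1}, write $m:=r+\tfrac12\|\Tilde{\xi}\|^2$ (which is strictly positive in any non-degenerate market; if $m=0$ one has $V(x;\tau)<0$ for all $\tau$ and the assertion is understood with $m>0$). Then \eqref{value-log}--\eqref{valueparam} give
\begin{eqnarray}
V(x;\tau)=\frac{e^{\delta\tau}-\gamma}{(e^{\delta\tau}-1)^2}\,m\tau+\frac{1-\gamma}{e^{\delta\tau}-1}\log x,\qquad \tau\in\mathbb{R}_+.\nonumber
\end{eqnarray}
The map $\tau\mapsto V(x;\tau)$ is continuous (indeed smooth) on $(0,\infty)$ since $e^{\delta\tau}-1>0$ there, so the plan is to control its boundary behaviour at $\tau\downarrow0$ and $\tau\to\infty$ and then conclude by a compactness argument that the supremum is attained at an interior point with nonnegative value.

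First I would analyze $\tau\downarrow0$. Taylor expanding, $e^{\delta\tau}-1=\delta\tau\big(1+\tfrac{\delta\tau}{2}+O(\tau^2)\big)$ and $e^{\delta\tau}-\gamma=(1-\gamma)+\delta\tau+O(\tau^2)$, which after dividing yields
\begin{eqnarray}
V(x;\tau)=\frac{1-\gamma}{\delta\tau}\Big(\frac{m}{\delta}+\log x\Big)+O(1),\qquad \tau\downarrow0.\nonumber
\end{eqnarray}
Since $\gamma\in(0,1)$ and, by hypothesis, $\frac{m}{\delta}+\log x=\frac{r+\frac12\|\Tilde{\xi}\|^2}{\delta}+\log x<0$, the leading term is negative and unbounded, so $V(x;\tau)\to-\infty$ as $\tau\downarrow0$. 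Next, for $\tau\to\infty$ the factor $e^{-\delta\tau}$ dominates: $\frac{e^{\delta\tau}-\gamma}{(e^{\delta\tau}-1)^2}=e^{-\delta\tau}(1+O(e^{-\delta\tau}))$ and $\frac{1-\gamma}{e^{\delta\tau}-1}=(1-\gamma)e^{-\delta\tau}(1+O(e^{-\delta\tau}))$, so $V(x;\tau)=e^{-\delta\tau}\big(m\tau+(1-\gamma)\log x\big)(1+o(1))\to0$; moreover, because $m>0$, the term $m\tau$ eventually dominates the constant $(1-\gamma)\log x$, whence $V(x;\tau)>0$ for all large $\tau$ and $V(x;\tau)\to0^{+}$. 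In particular $S:=\sup_{\tau>0}V(x;\tau)>0$.

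To finish, fix $\varepsilon=S/2>0$. By the two limits just established there are $0<a<b<\infty$ with $V(x;\tau)<S/2$ for $\tau\in(0,a)\cup(b,\infty)$, hence $S=\sup_{\tau\in[a,b]}V(x;\tau)$, and continuity of $V(x;\cdot)$ on the compact interval $[a,b]$ produces some $\tau^*\in[a,b]$ with $V(x;\tau^*)=S$. Then $V(x;\tau^*)=S\ge V(x;\tau)$ for every $\tau\in\mathbb{R}_+$ and $S>0$, so $V(x;\tau^*)\ge V(x;\tau)\vee0$ for all $(x,\tau)$, which is the claim. The one delicate point is the $\tau\downarrow0$ asymptotics: the first summand of $V(x;\tau)$ is a ratio whose numerator and denominator vanish to different orders, and one must expand far enough to see that the $O(1/\tau)$ coefficient is exactly $\tfrac{1-\gamma}{\delta}\big(\tfrac{m}{\delta}+\log x\big)$ — that is, the stated sign condition is precisely what forces $V$ to diverge to $-\infty$ rather than to $+\infty$ at the origin, which is what makes the supremum interior.
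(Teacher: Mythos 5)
Your proof is correct and follows essentially the same route as the paper: show $V(x;\tau)\to-\infty$ as $\tau\downarrow 0$ (this is where the sign hypothesis enters), show $V(x;\tau)\to 0$ as $\tau\to\infty$ with $V(x;\tau_0)>0$ for some $\tau_0$, and conclude that the supremum is positive and attained. The only difference is cosmetic — you extract the $O(1/\tau)$ coefficient by direct Taylor expansion, whereas the paper factors out $\tfrac{1-\gamma}{e^{\delta\tau}-1}$ first and takes limits of the bracketed expression — and your explicit compactness step at the end is a welcome tightening of the paper's "implies the desired result."
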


In particular, for the given market model, if the initial wealth $x$ is very low, the sufficient condition in Proposition \ref{optimaltau-1} can be easily validated. On the other hand, when the condition in Proposition \ref{optimaltau-1} is violated, we may not be able to strategically choose $\tau$ by maximizing the value function directly. However, from the perspective to scale the frequency in the periodic evaluation, it is also reasonable to consider the target as the scaled value function $V(x;\tau)\tau$. The next result provides positive answers to the choice of the optimal $\tau^*$ when the condition in Proposition \ref{optimaltau-1} can not be fulfilled, that is, we can choose $\tau$ to maximize the scaled value function $V(x;\tau)\tau$ instead of $V(x,\tau)$; See the numerical illustrations in Figure \ref{fig-tau}.

\begin{figure}[H]
    \centering
\includegraphics[width=6.5cm]{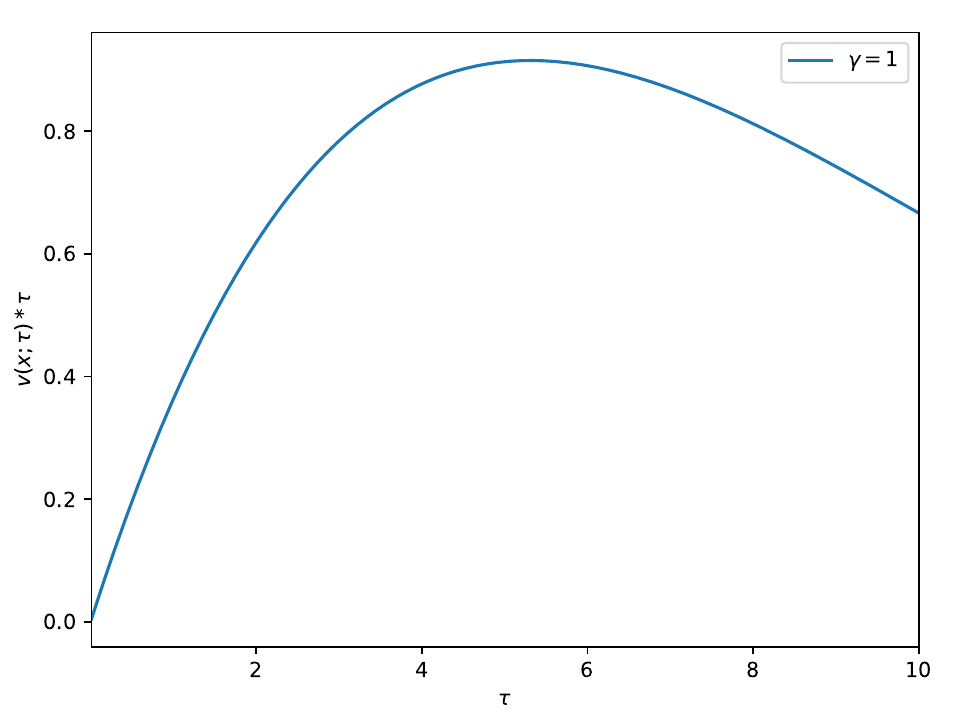}
    \quad
\includegraphics[width=6.5cm]{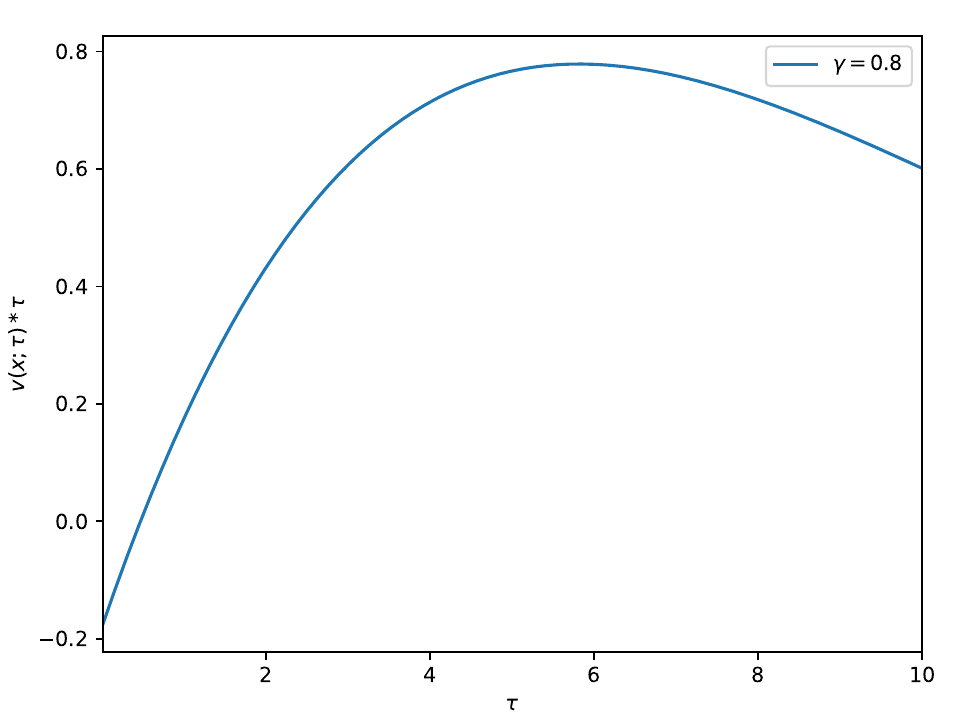}
    \caption{The plots of $V(x;\tau)\tau$ with respect to $\tau$ when $\gamma=1$ and $\gamma=0.8$ under the logarithmic utility function}\label{fig-tau}
\end{figure}

\begin{prop}\label{optimaltau-2}
\begin{itemize}
    \item[(1)] If $\gamma=1$, there exists an optimal $\tau^*$ such that $$V(x;\tau^*)\tau^*\geq \left(V(x;\tau)\tau\right)\vee 0,\quad\textrm{for any}\ \ (x,\tau)\in\mathbb{R}_+\times\mathbb{R}_+.$$
    \item[(2)] If $\gamma\in(0,1)$ and $\left(r+\frac{1}{2}\|\Tilde{\xi}\|^2\right)\frac{\gamma}{\delta}-\frac{1-\gamma}{2}\log x>0$,  there exists an optimal $\tau^*$ such that $$V(x;\tau^*)\tau^*\geq V(x;\tau)\tau,\quad\textrm{for any}\ \ (x,\tau)\in\mathbb{R}_+\times\mathbb{R}_+.$$
\end{itemize}  
\end{prop}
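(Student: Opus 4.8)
The plan is to treat, for a fixed $x$ in the stated regime, the scaled value function $g(\tau):=V(x;\tau)\,\tau$ as an explicit real-valued function of $\tau\in(0,\infty)$, show that it extends to a continuous (indeed analytic) function on $[0,\infty]$, and then rule out the two endpoints as maximizers; the maximum exists by compactness and must therefore be attained at some interior $\tau^{*}$. From \eqref{value-log}--\eqref{valueparam}, writing $m:=r+\tfrac{1}{2}\|\Tilde{\xi}\|^{2}$ --- which I take to be strictly positive, as it is whenever $r>0$ and is in any event needed for the statement to be non-vacuous --- one has
\begin{eqnarray}
g(\tau)=\frac{(e^{\delta\tau}-\gamma)\,m\,\tau^{2}}{(e^{\delta\tau}-1)^{2}}+\frac{(1-\gamma)(\log x)\,\tau}{e^{\delta\tau}-1},\qquad\tau\in(0,\infty).\nonumber
\end{eqnarray}
For part (1), $\gamma=1$ kills the second term and $g(\tau)=m\tau^{2}/(e^{\delta\tau}-1)$ is continuous and strictly positive on $(0,\infty)$; since $e^{\delta\tau}-1\sim\delta\tau$ as $\tau\to0^{+}$ and $1/(e^{\delta\tau}-1)$ decays exponentially as $\tau\to\infty$, one has $g(\tau)\to0$ at both ends, so $g$ extends continuously to $[0,\infty]$ with value $0$ at the two endpoints. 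A continuous function on a compact set attains its maximum, and since $g>0$ on $(0,\infty)$ that maximizer $\tau^{*}$ is interior with $g(\tau^{*})>0$; as $(V(x;\tau)\tau)\vee0=g(\tau)$ for all $\tau$ (because $g>0$), the inequality $g(\tau^{*})\ge g(\tau)$ is precisely the assertion.

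\noindent For part (2), with $\gamma\in(0,1)$, I would first record the boundary behaviour of $g$. A second-order Taylor expansion of $e^{\delta\tau}$ at $\tau=0$ (and of $1/(e^{\delta\tau}-1)$ and $1/(e^{\delta\tau}-1)^{2}$ accordingly) yields
\begin{eqnarray}
g(\tau)=\frac{1-\gamma}{\delta}\Big(\frac{m}{\delta}+\log x\Big)+\Big(\frac{\gamma m}{\delta}-\frac{1-\gamma}{2}\log x\Big)\tau+O(\tau^{2}),\qquad\tau\to0^{+},\nonumber
\end{eqnarray}
so $g$ extends continuously to $\tau=0$ with a \emph{finite} value $g(0^{+})$ and right-derivative $g^{\prime}(0^{+})=\tfrac{\gamma m}{\delta}-\tfrac{1-\gamma}{2}\log x$, which is \emph{strictly positive} precisely by the hypothesis of part (2). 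At the other end, rewriting
\begin{eqnarray}
g(\tau)=\frac{\tau}{e^{\delta\tau}-1}\Big[\frac{(e^{\delta\tau}-\gamma)\,m\,\tau}{e^{\delta\tau}-1}+(1-\gamma)\log x\Big],\nonumber
\end{eqnarray}
the bracket is asymptotic to $m\tau\to+\infty$ while the prefactor decays exponentially; hence $g(\tau)\to0$ as $\tau\to\infty$ and, moreover, $g(\tau)>0$ for all sufficiently large $\tau$.

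\noindent I would then conclude as in part (1): $g$ extends continuously to $[0,\infty]$ (with $g(0):=g(0^{+})$ and $g(\infty):=0$), hence attains its maximum at some $\tau^{*}\in[0,\infty]$. The value $\tau^{*}=\infty$ is excluded because $g>0$ somewhere, so $\max g>0$; the value $\tau^{*}=0$ is excluded because $g^{\prime}(0^{+})>0$ forces $g(\tau)>g(0)$ for small $\tau>0$, so $\max g>g(0)$. Therefore $\tau^{*}\in(0,\infty)$ and $V(x;\tau^{*})\tau^{*}=g(\tau^{*})\ge g(\tau)=V(x;\tau)\tau$ for every $\tau\in\mathbb{R}_{+}$, which is the assertion. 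The only genuinely computational step is the order-$\tau$ expansion at $\tau=0$, and the point to watch is that it must reproduce the hypothesis of part (2) verbatim --- indeed that hypothesis \emph{is} the statement $g^{\prime}(0^{+})>0$ --- while the positivity $m>0$ is used both to force $g>0$ near $\tau=\infty$ in part (2) and, implicitly, throughout part (1). Everything else is a routine continuity and compactness argument, so I do not expect a serious obstacle beyond this bookkeeping.
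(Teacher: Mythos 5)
Your proof is correct. The essential computational inputs coincide with those in the paper's argument --- the decay $V(x;\tau)\tau\to 0$ as $\tau\to\infty$ in both parts, and the identification of the one-sided derivative of $g(\tau)=V(x;\tau)\tau$ at $\tau=0^{+}$ as $\bigl(r+\tfrac12\|\tilde{\xi}\|^{2}\bigr)\tfrac{\gamma}{\delta}-\tfrac{1-\gamma}{2}\log x$, which is exactly the hypothesis of part (2) --- but you package them differently. The paper differentiates $f(\tau)=V(x;\tau)\tau$ in closed form on all of $(0,\infty)$: in part (1) it shows $f'$ changes sign exactly once (so $f$ is unimodal and the interior maximizer is in fact unique), and in part (2) it shows $f'(\tau)\le 0$ beyond some $\tau_{0}$ together with $\lim_{\tau\to0^{+}}f'(\tau)>0$, then maximizes over $[0,\tau_{0}]$. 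You instead extend $g$ continuously to the compactification $[0,\infty]$, get a maximizer by compactness, and exclude the endpoints via $g(\infty)=0<\max g$ (using $g>0$ for large $\tau$, which needs $r+\tfrac12\|\tilde{\xi}\|^{2}>0$) and $g'(0^{+})>0$. Your route avoids the rather unwieldy closed form of $f'(\tau)$ in part (2), at the cost of not recovering the unimodality information in part (1); since the proposition only asserts existence of $\tau^{*}$, this is a perfectly adequate and arguably cleaner argument. Your Taylor expansion at $\tau=0$ checks out, and your explicit flagging of the standing assumption $r+\tfrac12\|\tilde{\xi}\|^{2}>0$ is consistent with the paper, which invokes the same positivity as a ``fact'' in its proof of part (2).
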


As a lesson from Propositions \ref{optimaltau-1} and \ref{optimaltau-2}, when the agent starts with a low financial situation, he is suggested to choose the optimal period length $\tau^*$ strategically (whenever it is possible) for periodic evaluation to maximize the value function $V(x;\tau)$; when the agent possesses abundant initial wealth, he is instead recommended to seek the optimal period length $\tau^*$ (whenever it is possible) such that the scaled value function $V(x;\tau)\tau$ can attain its maximum. Admittedly, to better discuss the optimal choice of periodicity $\tau^*$ in practice, we also need to incorporate the a financially sound cost function $F(\cdot)$ of the evaluation period $\tau$ into consideration, which can be formulated as an interesting optimization problem in the context of periodic evaluation.

\section{Proofs}\label{sec-proof}

\subsection{Proofs in Section \ref{section4}}

The following result gives the upper and lower bound for the value function $V$ in \eqref{problem}, which will be used in the later analysis. 

\begin{lem}\label{boundlem}
It holds that
\begin{eqnarray}
\label{2.12}
\frac{ e^{(r\alpha-\delta)\tau}}{\alpha(1-e^{-(\delta-r\alpha(1-\gamma))\tau})}
x^{\alpha(1-\gamma)}\leq V(x)\leq
\frac{e^{(\zeta(\alpha)-\delta)\tau}}{\alpha(1-e^{(\zeta(\alpha(1-\gamma))-\delta)\tau})}
x^{\alpha(1-\gamma)},\quad x\in\mathbb{R}_+.
\end{eqnarray}
\end{lem}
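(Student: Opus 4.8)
The plan is to prove the two sides of \eqref{2.12} separately: the lower bound by evaluating the objective along one explicitly chosen admissible strategy, and the upper bound by combining a sharp one-period estimate with a periodic iteration. For the lower bound I would take $\pi\equiv 0$, i.e.\ put all wealth in the bank account; then $X_t=xe^{rt}>0$ and the ratio $X_{T_i}/X_{T_{i-1}}^{\gamma}=x^{1-\gamma}e^{r\tau(i(1-\gamma)+\gamma)}$ is deterministic, so the admissibility requirements in \eqref{set.Ut} (in particular the integrability of the negative part) reduce to the convergence of a geometric series, which holds since Assumption \ref{ass1} implies $\delta>r\alpha(1-\gamma)$. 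Substituting this wealth into the objective of \eqref{problem} and summing the geometric series (the exponent collapses to $\tau(r\alpha-\delta)$) yields exactly the left-hand side of \eqref{2.12}; as $V$ is a supremum over $\mathcal{U}_0(x)$, the lower bound follows.

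The heart of the upper bound is the one-period estimate: for every $p\in(-\infty,1)\setminus\{0\}$ and every componentwise nonnegative, locally square-integrable strategy with unit initial wealth and strictly positive wealth process $X$, one has $\mathbb{E}[\frac{1}{p} X_\tau^{p}]\le\frac{1}{p} e^{\zeta(p)\tau}$. I would prove this by weak duality against the deflator $Z^{\Tilde{\pi}^*}/B$ of \eqref{Z.tilde}: by It\^o's formula the drift of $X_tZ^{\Tilde{\pi}^*}_t/B_t$ equals $-(Z^{\Tilde{\pi}^*}_t/B_t)\,\pi_t^{\mathsf T}\Tilde{\pi}^*\,dt\le 0$, because $\pi_t$ and $\Tilde{\pi}^*$ are componentwise nonnegative, so $X Z^{\Tilde{\pi}^*}/B$ is a nonnegative local supermartingale, hence a supermartingale, giving $\mathbb{E}[X_\tau Z^{\Tilde{\pi}^*}_\tau/B_\tau]\le 1$. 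Combining this with the Fenchel inequality $\frac{1}{p} z^{p}\le zu+\Phi_p(u)$, where $\Phi_p(u):=\sup_{z>0}(\frac{1}{p} z^{p}-zu)$, evaluated at $u=yZ^{\Tilde{\pi}^*}_\tau/B_\tau$ gives $\mathbb{E}[\frac{1}{p} X_\tau^{p}]\le y+\mathbb{E}[\Phi_p(yZ^{\Tilde{\pi}^*}_\tau/B_\tau)]$ for all $y>0$; minimizing the right-hand side over $y$, with $Z^{\Tilde{\pi}^*}_\tau/B_\tau=e^{-r\tau}\exp(-\Tilde{\xi}^{\mathsf T}W_\tau-\frac{1}{2}\|\Tilde{\xi}\|^2\tau)$ and $\|\Tilde{\xi}\|=\min_{\Tilde{\pi}\ge0}\|\xi+\sigma^{-1}\Tilde{\pi}\|$, a routine log-normal moment computation returns exactly $\frac{1}{p} e^{\zeta(p)\tau}$. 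Written out, this inequality reads $\mathbb{E}[X_\tau^{p}]\le e^{\zeta(p)\tau}$ when $p\in(0,1)$ and $\mathbb{E}[X_\tau^{p}]\ge e^{\zeta(p)\tau}$ when $p<0$.

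It remains to iterate this across periods. For any $\pi\in\mathcal{U}_0(x)$ and $i\ge1$, write $\frac{1}{\alpha}(X_{T_i}/X_{T_{i-1}}^{\gamma})^{\alpha}=X_{T_{i-1}}^{\alpha(1-\gamma)}\cdot\frac{1}{\alpha}(X_{T_i}/X_{T_{i-1}})^{\alpha}$ and note that, conditionally on $\mathcal{F}_{T_{i-1}}$, the process $(X_{T_{i-1}+s}/X_{T_{i-1}})_{s\in[0,\tau]}$ is again a unit-initial-wealth, no-short-selling wealth process driven by the fresh increments of $W$, by the flow property of the wealth SDE and the time-homogeneity of the coefficients. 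Applying the one-period estimate with $p=\alpha$ inside the conditional expectation, then multiplying by the strictly positive factor $X_{T_{i-1}}^{\alpha(1-\gamma)}$ (which preserves the direction of the inequality irrespective of the sign of $\alpha$) and taking expectations, gives $\mathbb{E}[\frac{1}{\alpha}(X_{T_i}/X_{T_{i-1}}^{\gamma})^{\alpha}]\le \frac{1}{\alpha} e^{\zeta(\alpha)\tau}\,\mathbb{E}[X_{T_{i-1}}^{\alpha(1-\gamma)}]$. Applying the one-period estimate iteratively with $p=\alpha(1-\gamma)$ yields $\mathbb{E}[X_{T_j}^{\alpha(1-\gamma)}]\le e^{\zeta(\alpha(1-\gamma))j\tau}x^{\alpha(1-\gamma)}$ when $\alpha>0$ and the reverse when $\alpha<0$; in either case, multiplying by $\frac{1}{\alpha} e^{\zeta(\alpha)\tau}$ bounds the $i$-th term of the objective above by $e^{-\delta i\tau}\,\frac{1}{\alpha} e^{\zeta(\alpha)\tau}e^{\zeta(\alpha(1-\gamma))(i-1)\tau}x^{\alpha(1-\gamma)}$. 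Summing over $i\ge1$ — the series converges precisely because $\delta>\zeta(\alpha(1-\gamma))$ by Assumption \ref{ass1} — and taking the supremum over $\pi\in\mathcal{U}_0(x)$ produces the right-hand side of \eqref{2.12}.

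The step I expect to be most delicate is the sign bookkeeping when $\alpha<0$: the constant $\frac{1}{\alpha} e^{\zeta(\alpha)\tau}$ and the iterated moments $\mathbb{E}[X_{T_j}^{\alpha(1-\gamma)}]$ reverse their monotonicity, so one must verify at every step that multiplying and summing the inequalities keeps the orientation claimed. One also has to record that a one-period expectation $\mathbb{E}[X_\tau^{p}]$ with $p<0$ which is a priori $+\infty$ only strengthens the relevant inequalities (so no extra care is needed), and that the reduction ``conditioning on $\mathcal{F}_{T_{i-1}}$ gives a genuine one-period problem'' should be justified rigorously via the flow property rather than merely inferred from the heuristic relation \eqref{ddp}.
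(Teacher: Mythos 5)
Your proof is correct and follows essentially the same route as the paper's: the lower bound from the admissible strategy $\pi\equiv 0$, and the upper bound from the tower property at $\mathcal{F}_{T_{i-1}}$ combined with the one-period estimate $\sup_{\pi}\mathbb{E}[\tfrac{1}{\alpha}X_\tau^{\alpha}]=\tfrac{1}{\alpha}e^{\zeta(\alpha)\tau}$ and a geometric sum. The only difference is that you derive the one-period estimate yourself by weak duality against the deflator $Z^{\Tilde{\pi}^*}/B$ (a correct and self-contained argument), whereas the paper simply invokes the corresponding results of Section 4 in \cite{XS92(b)}; your explicit sign bookkeeping for $\alpha<0$ likewise matches what the paper leaves to the reader as ``similar.''
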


\begin{proof}
Since the proofs for the cases $\alpha\in(0,1)$ and $\alpha\in(-\infty,0)$ are similar, we will only present the proof for the situation where $\alpha\in(0,1)$.
The lower bound of $V(x)$ can be obtained by the fact that $\pi=(\pi_t)_{t\geq0}\equiv 0$ is an admissible portfolio process in $\mathcal{U}_0(x)$.
We next show the upper bound. For any $\pi\in\mathcal{U}_0(x)$ and the associated wealth process $X$, we have
\begin{eqnarray}\label{p2.E<E}
\hspace{-0.4cm}
\mathbb{E}\left[\sum_{i=1}^{\infty}e^{-\delta T_i}U\bigg(\frac{X_{T_i}}{(X_{T_{i-1}})^{\gamma}}\bigg)\right]
\hspace{-0.3cm}&=&\hspace{-0.3cm}
\sum_{i=1}^{\infty}e^{-\delta T_{i}}\mathbb{E}\left[\frac{1}{\alpha}\bigg(\frac{X_{T_i}}{X_{T_{i-1}}}\bigg)^{\alpha}X_{T_{i-1}}^{\alpha(1-\gamma)}\right]
\nonumber\\
\hspace{-0.3cm}&=&\hspace{-0.3cm}
\sum_{i=1}^{\infty}e^{-\delta T_{i}}\mathbb{E}\left[\mathbb{E}\left[\frac{1}{\alpha}\bigg(\frac{X_{T_i}}{X_{T_{i-1}}}\bigg)^{\alpha}\bigg|\mathcal{F}_{T_{i-1}}\right]X_{T_{i-1}}^{\alpha(1-\gamma)}\right]
\nonumber\\
\hspace{-0.3cm}&\leq&\hspace{-0.3cm}
\sum_{i=1}^{\infty}e^{-\delta T_{i}}\mathbb{E}\left[\left(\sup_{\pi\in \mathcal{U}_{T_{i-1}}(1)}\mathbb{E}\left[\left.\frac{1}{\alpha}X_{T_i}^{\alpha}\right|\mathcal{F}_{T_{i-1}}\right]\right)X_{T_{i-1}}^{\alpha(1-\gamma)}\right].
\end{eqnarray}
Note that $\sup_{\pi\in\mathcal{U}_{T_{i-1}}(1)}\mathbb{E}\left[\frac{1}{\alpha}\left.X_{T_i}^{\alpha}\right|\mathcal{F}_{T_{i-1}}\right]$ is the value function of the utility maximization problem on terminal wealth under short-selling prohibition with power utility function, the maturity $\tau$, and the initial wealth of 1. Consequently, we can adopt the results of Section 4 in \cite{XS92(b)} to derive that
\begin{eqnarray}\label{p2.supE<e}
\sup_{\pi\in\mathcal{U}_{T_{i-1}}(1)}\mathbb{E}\left[\left.\frac{1}{\alpha}X_{T_i}^{\alpha}\right|\mathcal{F}_{T_{i-1}}\right]= \frac{1}{\alpha}e^{\zeta(\alpha)\tau}.
\end{eqnarray}
Similarly, we also have that
\begin{eqnarray}\label{p2.supE<e2}
\sup_{\pi\in\mathcal{U}_0(x)}\mathbb{E}\left[\frac{1}{\alpha}X_{T_{i-1}}^{\alpha(1-\gamma)}\right]=\frac{1}{\alpha} x^{\alpha(1-\gamma)}e^{\zeta(\alpha(1-\gamma))(i-1)\tau}.
\end{eqnarray}
Combining \eqref{p2.E<E}-\eqref{p2.supE<e2}, we can conclude that
\begin{eqnarray}
\mathbb{E}\left[\sum_{i=1}^{\infty}e^{-\delta T_i}U\bigg(\frac{X_{T_i}}{(X_{T_{i-1}})^{\gamma}}\bigg)\right]
\hspace{-0.3cm}&\leq&\hspace{-0.3cm}
\sum_{i=1}^{\infty}e^{-\delta T_{i}}\mathbb{E}\left[\left(\sup_{\pi\in \mathcal{U}_{T_{i-1}}(1)}\mathbb{E}\left[\left.\frac{1}{\alpha}X_{T_i}^{\alpha}\right|\mathcal{F}_{T_{i-1}}\right]\right)X_{T_{i-1}}^{\alpha(1-\gamma)}\right]
\nonumber\\
\hspace{-0.3cm}&\leq&\hspace{-0.3cm}
\sum_{i=1}^{\infty}e^{-\delta T_{i}}e^{\zeta(\alpha)\tau}\sup_{\pi\in\mathcal{U}_0(x)}\mathbb{E}\left[\frac{1}{\alpha}X_{T_{i-1}}^{\alpha(1-\gamma)}\right]
\nonumber\\
\hspace{-0.3cm}&=&\hspace{-0.3cm}
\frac{e^{(\zeta(\alpha)-\delta)\tau}}{\alpha(1-e^{(\zeta(\alpha(1-\gamma))-\delta)\tau})}x^{\alpha(1-\gamma)},\nonumber
\end{eqnarray}
which is the desired upper bound in \eqref{2.12}.
\end{proof}

\ \\
\begin{proof}[Proof of Lemma \ref{lem2.1}]
For the first claim, let us first differentiate both sides of \eqref{2.8} to get that
\begin{eqnarray}\label{h_a'}
h_a^{\prime}(x)= x^{\alpha-1}+a(1-\gamma)x^{\alpha(1-\gamma)-1},\quad x\in\mathbb{R}_+,
\end{eqnarray}
and
\begin{eqnarray}
h_a^{\prime\prime}(x)=(\alpha-1) x^{\alpha-2}+a(1-\gamma)(\alpha-1-\alpha\gamma)x^{\alpha(1-\gamma)-2},\quad x\in\mathbb{R}_+.\nonumber
\end{eqnarray}
Recalling that $a\in\mathbb{R}_+$ and $\gamma\in(0,1]$, we have $h_a^{\prime}(x)>0$ and $h_a^{\prime\prime}(x)<0$ for all $x\in\mathbb{R}_+$, implying that the function $h_a(x)$ is strictly concave and strictly increasing on $(0,\infty)$. Furthermore, by \eqref{h_a'}, one can easily get $h_a^{\prime}(0+)=\infty$ and $h_a^{\prime}(\infty)=0$. 

To prove the second claim, for any constant $\varrho\in(1,\infty)$, one can take $\vartheta=\varrho^{\alpha-1}\in(0,1)$. Then 
\begin{align}
    \vartheta h_a^{\prime}(x)&=(\varrho x)^{\alpha-1}+a\alpha(1-\gamma)\varrho^{\alpha-1}x^{\alpha(1-\gamma)-1}
    \nonumber\\
    &=h_a^{\prime}(\varrho x)+a\alpha(1-\gamma)(\varrho x)^{\alpha(1-\gamma)-1}(\varrho^{\alpha\gamma}-1)
    \nonumber\\
    &\geq h_a^{\prime}(\varrho x),\quad x\in\mathbb{R}_+,\nonumber
\end{align}
where the last inequality follows from $\gamma\in(0,1]$ and $\varrho^{\alpha\gamma}\in(1,\infty)$.
Finally, when $\alpha\in(0,1)$, it follows from $h_a(0)=0$ and $h_a^{\prime}(x)>0$ for $x\in\mathbb{R}_+$ that $h_a(x)>0$ for $x\in\mathbb{R}_+$;
on the other hand, taking $\kappa_1=2\max\{1/\alpha,a\}\in(0,\infty)$ and $\rho_1=\alpha\in(0,1)$ yields \eqref{h_a<kappa}. The proof is complete.
\end{proof}

\ \\
\begin{proof}[Proof of Proposition \ref{existence}]
Note that the function $[0,\infty)^n\ni\Tilde{\pi}\mapsto\|\xi+\sigma^{-1}\Tilde{\pi}\|\in[0,\infty)$ is continuous and convex with respect to $\Tilde{\pi}$, and satisfies $\lim_{\|\Tilde{\pi}\|\rightarrow\infty}\|\xi+\sigma^{-1}\Tilde{\pi}\|=\infty$. Hence, it has a unique minimizer, say, $\Tilde{\pi}^*\in[0,\infty)^n$. We next apply an adapted version of Theorem 2.1 of \cite{XS92(b)} to explicitly characterize an optimal solution $\Tilde{\pi}_y\in\Tilde{\mathcal{U}}_{0,\tau}$ to the dual problem \eqref{dual.pro} with initial condition $y$.
We first consider the case when $\alpha\in(0,1)$.
By \eqref{h_a<kappa} and \eqref{phi(y)=h}, one can easily verify that
$$0\leq \Phi_{h_a}(y)\leq \kappa_2(1+y^{-\frac{\rho_{1}}{1-\rho_{1}}}),$$
which, combined with \eqref{dual.pro} and the fact that $\xi$ is uniformly bounded constant vector, gives
\begin{align}
\label{dualfun.finite}
    \Tilde{V}_a(y)&\leq \Tilde{J}_a\big(y,\Tilde{\pi}\equiv\vec{0}\big)=\mathbb{E}\left[\Phi_{h_a}(y\frac{Z^{\vec{0}}_{\tau}}{B_{\tau}})\right]
    \nonumber\\
    &\leq \kappa_{2}\left(1+\left(\frac{y}{B_{\tau}}\right)^{-\frac{\rho_{1}}{1-\rho_{1}}}\mathbb{E}\left[
\exp\left(\frac{\rho_{1}}{1-\rho_{1}}\xi^{\mathsf{T}}W_t+\frac{\rho_{1}}{2-2\rho_{1}}\|\xi\|^2t\right) 
    \right]\right)
    \nonumber\\
    &<\infty, \quad a\in\mathbb{R}_{+}, \,\,y\in\mathbb{R}_{+}.
\end{align}
With the help of \eqref{dualfun.finite}, one can rewrite the control problem \eqref{dual.pro} as \begin{eqnarray}\label{dual.pro.0}
\Tilde{V}_a(y)=\inf_{\Tilde{\pi}\in\Tilde{\mathcal{V}}_{0,\tau}}\Tilde{J}_a(y,\Tilde{\pi}),\quad y\in\mathbb{R}_+,
\end{eqnarray}
where $\Tilde{\mathcal{V}}_{0,\tau}:=\{\pi: \pi\in\Tilde{\mathcal{U}}_{0,\tau}\text{ and } \mathbb{E}\left[\Phi_{h_a}(y\frac{Z^{\Tilde{\pi}}_{\tau}}{B_{\tau}})\right]<\infty\}$ with $\Tilde{\mathcal{U}}_{0,\tau}$ defined in \eqref{tilde.U}. Mimicking similar arguments as those used in \eqref{dualfun.finite}, one can find that $\Tilde{\pi}^*\in \Tilde{\mathcal{V}}_{0,\tau}$.
In addition, following from \eqref{phi(infty)} and \eqref{Phi'}, the function $\mathbb{R}_{+}\ni y\mapsto \Phi_{h_a}(y)$ is non-increasing, lower-bounded and convex. Then, using Lemma 5.1, Theorems 2.1 and 5.2 of \cite{XS92(b)}, one can verify that 
$$\Tilde{J}_a(y,\Tilde{\pi})\geq \Tilde{J}_a(y,\Tilde{\pi}^{*}), \quad \Tilde{\pi}\in\Tilde{\mathcal{V}}_{0,\tau},$$
which together with \eqref{dual.pro.0} yields the desired result.
Note that $\Phi_{h_a}$ is non-negative, continuous, non-increasing and convex with respect to $y\in\mathbb{R}_+$ when $\alpha\in(0,1)$, so is $\Tilde{V}_a$ as a result.
The fact that $\Tilde{V}_a$ is finite has been verified by \eqref{dualfun.finite}. Moreover, it follows from the monotone convergence theorem
as well as \eqref{phi(0)}-\eqref{phi(infty)}
that $$\lim_{y\rightarrow0+}\mathbb{E}\left[\Phi_{h_a}(y\frac{Z^{\Tilde{\pi}^*}_{\tau}}{B_{\tau}})\right]=\infty,\text{ and }\lim_{y\rightarrow\infty}\mathbb{E}\left[\Phi_{h_a}(y\frac{Z^{\Tilde{\pi}^*}_{\tau}}{B_{\tau}})\right]=0.$$ 

We next consider the case when $\alpha\in(-\infty,0)$. Note that one can follow similar arguments of Lemma 5.1 in \cite{XS92(b)} to prove the submartingale property of $(\Phi_{h_a}(ye^{-r\tau}Z^{\tilde{\pi}^*}_{t}))_{t\geq0}$ via defining the bounded, nonincreasing, convex function $\Phi_{\epsilon}$ by
\begin{eqnarray}
\Phi_{\epsilon}(x)=\left\{
\begin{array}{ll}
\Phi_{h_a}(ye^{-r\tau}x),\quad &x\in(0,\epsilon),\\
\Phi_{h_a}(ye^{-r\tau}\epsilon),\quad &x\in[\epsilon,\infty).
\end{array}
   \right.\nonumber
\end{eqnarray}
Then, we can apply the same arguments of Theorem 2.1 and 5.2 in \cite{XS92(b)} to verify that the unique minimizer $\Tilde{\pi}^*$ of the function $[0,\infty)^n\ni\Tilde{\pi}\mapsto\|\xi+\sigma^{-1}\Tilde{\pi}\|$ is an optimal solution to the dual problem \eqref{dual.pro} with initial condition $y$. Note that $\Phi_{h_a}$ is non-positive, continuous, non-increasing and convex with respect to $y\in\mathbb{R}_+$ when $\alpha\in(-\infty,0)$, so is $\Tilde{V}_a$ as a result. This, combined with the fact that $Z^{\Tilde{\pi}^*}$ is a supermartingale, implies
\begin{eqnarray}
\Tilde{V}_a(y)=\mathbb{E}\left[\Phi_{h_a}(y\frac{Z^{\Tilde{\pi}^*}_{\tau}}{B_{\tau}})\right]\geq \Phi_{h_a}\left(y\frac{\mathbb{E}[Z^{\Tilde{\pi}^*}_{\tau}]}{B_{\tau}}\right)\geq\Phi_{h_a}\left(y\frac{\mathbb{E}[Z^{\Tilde{\pi}^*}_{0}]}{B_{\tau}}\right)>-\infty,\nonumber
\end{eqnarray}
which implies that $\Tilde{V}_a$ is finite. Moreover, it follows from the monotone convergence theorem as well as 
\eqref{phi(0)}-\eqref{phi(infty)} that 
$$\lim_{y\rightarrow0+}\mathbb{E}\left[\Phi_{h_a}(y\frac{Z^{\Tilde{\pi}^*}_{\tau}}{B_{\tau}})\right]=0,\text{ and }\lim_{y\rightarrow\infty}\mathbb{E}\left[\Phi_{h_a}(y\frac{Z^{\Tilde{\pi}^*}_{\tau}}{B_{\tau}})\right]=-\infty.$$ 
The proof is complete.
\end{proof}

\ \\
The following result shows that the function $\ell_{a,y}(\cdot)$ is differentiable at 1 for any fixed $a,y\in\mathbb{R}_+$. 
Note that, combining the property \eqref{h_a'>h_a'} of the function $h_a$ and the fact that $\Tilde{\pi}^*$ is the optimal solution for the dual problem \eqref{dual.pro}, we can apply a similar argument as that in the proof of Lemma 4.6 in \cite{XS92(a)} to obtain the next result, and hence, its proof is omitted.

\begin{lem}\label{ell'(1).equ}
For any fixed $a,y\in\mathbb{R}_+$, the function $\ell_{a,y}$ defined by \eqref{ell} is differentiable at 1 and
\begin{eqnarray}\label{ell'}
\ell_{a,y}^{\prime}(1)=-y\mathbb{E}\left[\frac{Z^{\Tilde{\pi}^*}_{\tau}}{B_{\tau}}x_{h_a}^*(y\frac{Z^{\Tilde{\pi}^*}_{\tau}}{B_{\tau}})\right].
\end{eqnarray}
\end{lem}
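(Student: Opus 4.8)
The plan is to differentiate under the expectation in the representation $\ell_{a,y}(\lambda)=\mathbb{E}[\Phi_{h_a}(\lambda y\, Z^{\Tilde{\pi}^*}_{\tau}/B_{\tau})]$ and to read off the pointwise derivative from \eqref{Phi'}. Write $\zeta:=Z^{\Tilde{\pi}^*}_{\tau}/B_{\tau}$; by \eqref{Z.tilde} this equals $e^{-r\tau}\exp\big(-\Tilde{\xi}^{\mathsf{T}}W_{\tau}-\tfrac{1}{2}\|\Tilde{\xi}\|^2\tau\big)$, a lognormal variable, so $\zeta>0$ $\mathbb{P}$-a.s.\ and $\zeta$ has finite moments of every order, positive and negative. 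Since $\Phi_{h_a}$ is twice differentiable with $\Phi_{h_a}'(z)=-x^*_{h_a}(z)$ by \eqref{Phi'}, for a.e.\ $\omega$ the map $\lambda\mapsto\Phi_{h_a}(\lambda y\zeta)$ is differentiable with derivative $-y\zeta\,x^*_{h_a}(\lambda y\zeta)$, which at $\lambda=1$ equals $-y\zeta\,x^*_{h_a}(y\zeta)$; taking expectations and recalling $\zeta=Z^{\Tilde{\pi}^*}_{\tau}/B_{\tau}$ would give exactly \eqref{ell'}, provided the limit in the difference quotient may be moved inside $\mathbb{E}[\cdot]$.

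To legitimize this interchange I would produce an integrable dominating function for the difference quotients $\big(\Phi_{h_a}(\lambda y\zeta)-\Phi_{h_a}(y\zeta)\big)/(\lambda-1)$ as $\lambda$ ranges over the fixed neighbourhood $[\vartheta,\vartheta^{-1}]$ of $1$, where $\vartheta\in(0,1)$ and $\varrho\in(1,\infty)$ are the constants from Lemma \ref{lem2.1}. The key point is that \eqref{h_a'>h_a'} passes to the inverse $x^*_{h_a}=(h_a')^{-1}$: substituting $x=x^*_{h_a}(z)$ into $\vartheta h_a'(x)\ge h_a'(\varrho x)$ gives $\vartheta z\ge h_a'\big(\varrho\,x^*_{h_a}(z)\big)$, and applying the decreasing function $(h_a')^{-1}$ yields
\begin{eqnarray}
x^*_{h_a}(\vartheta z)\le\varrho\,x^*_{h_a}(z),\qquad z\in\mathbb{R}_+.\nonumber
\end{eqnarray}
For $\lambda\in[\vartheta,\vartheta^{-1}]$ the mean value theorem gives some $\lambda_*$ between $\lambda$ and $1$ with $\big(\Phi_{h_a}(\lambda y\zeta)-\Phi_{h_a}(y\zeta)\big)/(\lambda-1)=-y\zeta\,x^*_{h_a}(\lambda_* y\zeta)$; since $\lambda_*\ge\vartheta$ and $x^*_{h_a}$ is decreasing, the displayed bound shows the absolute value of this quotient is at most $\varrho\,y\zeta\,x^*_{h_a}(y\zeta)$, uniformly in $\lambda$. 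Dominated convergence then yields differentiability at $1$ together with formula \eqref{ell'}, once $\mathbb{E}\big[y\zeta\,x^*_{h_a}(y\zeta)\big]<\infty$ is in hand.

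This integrability estimate is the remaining and main step. I would read it off \eqref{phi(y)=h}, which says $\Phi_{h_a}(z)=h_a(x^*_{h_a}(z))-z\,x^*_{h_a}(z)$, i.e.\ $z\,x^*_{h_a}(z)=h_a(x^*_{h_a}(z))-\Phi_{h_a}(z)$. When $\alpha\in(-\infty,0)$ both $h_a$ and $\Phi_{h_a}$ are non-positive, so $0\le z\,x^*_{h_a}(z)\le-\Phi_{h_a}(z)$, whence $\mathbb{E}[y\zeta\,x^*_{h_a}(y\zeta)]\le-\Tilde{V}_a(y)<\infty$ by Proposition \ref{existence}. When $\alpha\in(0,1)$, $\Phi_{h_a}\ge0$ gives $z\,x^*_{h_a}(z)\le h_a(x^*_{h_a}(z))\le\kappa_1\big(1+(x^*_{h_a}(z))^{\rho_1}\big)$ by \eqref{h_a<kappa}; combining this with the elementary bound $x^*_{h_a}(z)\le\kappa\,(1+z^{-1/(1-\alpha)})$ obtained by inverting the explicit expression \eqref{h_a'} for $h_a'$, and with the finiteness of all negative moments of the lognormal $\zeta$, one again gets $\mathbb{E}[y\zeta\,x^*_{h_a}(y\zeta)]<\infty$. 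This last bound for $\alpha\in(0,1)$ is the only genuinely delicate point, and it is precisely where the argument runs parallel to the proof of Lemma~4.6 in \cite{XS92(a)}, which is why the authors omit the details.
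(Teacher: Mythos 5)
Your proposal is correct and follows essentially the same route the paper invokes by reference to Lemma 4.6 of \cite{XS92(a)}: transfer the growth condition \eqref{h_a'>h_a'} to the inverse $x^*_{h_a}$ to get $x^*_{h_a}(\vartheta z)\le\varrho\,x^*_{h_a}(z)$, dominate the difference quotients via the mean value theorem, and conclude by dominated convergence. The integrability of $y\zeta\,x^*_{h_a}(y\zeta)$, which you verify separately for $\alpha<0$ (via $0\le z x^*_{h_a}(z)\le-\Phi_{h_a}(z)$ and the finiteness of $\Tilde{V}_a$) and for $\alpha\in(0,1)$ (via the polynomial bound on $x^*_{h_a}$ and the lognormal moments), is handled correctly, so nothing is missing.
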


\begin{proof}[Proof of Proposition \ref{unique}]
To prove the first claim, let us consider 
\begin{align*}
g(y):=\Tilde{V}_a(y)+xy,\quad a,x,y\in\mathbb{R}_+.
\end{align*}
Thanks to the differentiability of the function $\Phi_{h_a}$, we can apply the dominated convergence theorem and \eqref{Phi'} to obtain 
\begin{eqnarray}\label{g'}
g^{\prime}(y)=-\mathbb{E}\left[\frac{Z^{\Tilde{\pi}^*}_{\tau}}{B_{\tau}}x_{h_a}^*(y\frac{Z^{\Tilde{\pi}^*}_{\tau}}{B_{\tau}})\right]+x.
\end{eqnarray}
Inheriting from the function $x^*_{h_a}$, the function $g^{\prime}(y)$ is increasing with $\lim_{y\rightarrow0+}g^{\prime}(y)=-\infty$ and $\lim_{y\rightarrow\infty}g^{\prime}(y)=x>0$.
Hence, one gets the existence of a unique minimizer $y_x^*\in\mathbb{R}_+$ of $g(y)$.

For the second claim, 
on one hand, by \eqref{H(a)=inf}, one gets that
\begin{eqnarray}
\inf_{\lambda\in\mathbb{R}_+}\left\{\Tilde{J}_a(\lambda y_x^*,\Tilde{\pi}^*)+\lambda xy_x^*\right\}
\hspace{-0.3cm}&=&\hspace{-0.3cm}
\inf_{y\in\mathbb{R}_+}\left\{\Tilde{J}_a(y,\Tilde{\pi}^*)+xy\right\}
\nonumber\\
\hspace{-0.3cm}&\geq&\hspace{-0.3cm}
\inf_{y\in\mathbb{R}_+}\left\{\Tilde{V}_a(y)+xy\right\}=\Tilde{V}_a(y_x^*)+xy_x^*=\Tilde{J}(y_x^*,\Tilde{\pi}^*)+xy_x^*,\nonumber
\end{eqnarray}
which implies that the function $\lambda\mapsto\ell_{a,y_x^*}(\lambda)+\lambda xy_x^*$ attains its minimum at $\lambda=1$. As a result,  
\begin{eqnarray}
\ell_{a,y_x^*}^{\prime}(1)+xy_x^*=0.\nonumber
\end{eqnarray}
On the other hand, we assume that $y^*_x$ satisfies the equation \eqref{ell(1).y*}. Thanks to \eqref{Phi'} and \eqref{g'}, one obtains that
\begin{eqnarray}
\frac{\partial}{\partial y}\left[\Tilde{V}_a(y)+xy\right]\Big|_{y=y_x^*}=-\mathbb{E}\left[\frac{Z^{\Tilde{\pi}^*}_{\tau}}{B_{\tau}}x_{h_a}^*(y\frac{Z^{\Tilde{\pi}^*}_{\tau}}{B_{\tau}})\right]\bigg|_{y=y_x^*}+x=\left(\frac{1}{y}\ell_{a,y}^{\prime}(1)+x\right)\bigg|_{y=y_x^*}=0,\nonumber
\end{eqnarray}
which together with the fact that $g(y)=\Tilde{V}_a(y)+xy$ is strictly convex implies that $y_x^*\in\mathbb{R}_+$ is the minimizer of $\Tilde{V}_a(y)+xy$ on $\mathbb{R}_+$.
\end{proof}

\ \\
\begin{proof}[Proof of Proposition \ref{relationship}]
Using Lemma \ref{ell'(1).equ} and Proposition \ref{unique}, one knows that $\inf_{y\in\mathbb{R}_+}\{\Tilde{V}_a(y)+y\}=\Tilde{V}_a(y^*)+y^*$ with $y^*=y^*_x|_{x=1}$ being characterized in Proposition \ref{unique}.
It clearly holds that
\begin{eqnarray}\label{dX/B}
d\left[\frac{X_t}{B_t}\right]=\frac{\pi^{\mathsf{T}}_t}{B_t}\left[(\mu-r\mathbf{1})dt+\sigma dW_s\right].
\end{eqnarray}
In addition, by \eqref{dX/B} and \eqref{Z.pi}, for any $\Tilde{\pi}\in\Tilde{\mathcal{U}}_{0,\tau}$ and $\pi\in\mathcal{U}_{0,\tau}$, an application of It\^o's formula to the product of the process $Z^{\Tilde{\pi}}$ and $X/B$ yields
\begin{eqnarray}
Z^{\Tilde{\pi}}_t\frac{X_t}{B_t}+\int_0^t\frac{Z^{\Tilde{\pi}}_s}{B_s}\pi^{\mathsf{T}}_s\Tilde{\pi}_sds=x+\int_0^t\frac{Z^{\Tilde{\pi}}_s}{B_s}\left[\pi^{\mathsf{T}}_s\sigma-X_s\left[\xi+{\sigma}^{-1}{\Tilde{\pi}_s}\right]^{\mathsf{T}}\right]dW_s,\nonumber
\end{eqnarray}
which is a non-negative local martingale, hence a supermartingale. As a result, it holds that
\begin{eqnarray}
\mathbb{E}\left[Z^{\Tilde{\pi}}_\tau\frac{X_\tau}{B_\tau}+\int_0^\tau\frac{Z^{\Tilde{\pi}}_s}{B_s}\pi^{\mathsf{T}}_s\Tilde{\pi}_sds\right]\leq x,\nonumber
\end{eqnarray}
which implies that
\begin{eqnarray}\label{E<x.log}
\mathbb{E}\left[Z^{\Tilde{\pi}}_\tau\frac{X_\tau}{B_\tau}\right]\leq x.
\end{eqnarray}
Putting $y=y^*\frac{Z^{\Tilde{\pi}}_{\tau}}{B_{\tau}}$ in \eqref{phi(y)=h}, we have
$$h_a(X_{\tau})\leq\Phi_{h_a}\left(y^*\frac{Z^{\Tilde{\pi}_{\tau}}}{B_{\tau}}\right)+y^*\frac{Z^{\Tilde{\pi}_{\tau}}}{B_{\tau}}X_{\tau}.$$
Taking expectation on both sides in the above equation and using \eqref{E<x.log}, we obtain that 
\begin{eqnarray}\label{5.25}
\mathbb{E}[h_a(X_{\tau})]\leq \mathbb{E}\left[\Phi_{h_a}\left(y^*\frac{Z^{\Tilde{\pi}}_{\tau}}{B_{\tau}}\right)\right]+y^*\mathbb{E}\left[\frac{Z^{\Tilde{\pi}}_{\tau}}{B_{\tau}}X_{\tau}\right]\leq\Tilde{J}(y^*,\Tilde{\pi}^*)+xy^*=\Tilde{V}_a(y^*)+xy^*.
\end{eqnarray}
Hence, by the arbitrariness if $\pi$, we arrive at
\begin{eqnarray}\label{5.13}
\sup_{\pi\in\mathcal{U}_0(1)}\mathbb{E}[h_a(X_{\tau})]\leq \Tilde{V}_a(y^*)+xy^*.
\end{eqnarray}
If we take $X_{\tau}=X^*_{\tau}:=x^*_{h_a}\left(y^*\frac{Z^{\Tilde{\pi}^*}_{\tau}}{B_{\tau}}\right)$ in \eqref{5.25}, then by \eqref{phi(y)=h}, it holds that
$$\mathbb{E}[h_a(X_{\tau})]=\mathbb{E}\left[\Phi_{h_a}\left(y^*\frac{Z^{\Tilde{\pi}^*}_{\tau}}{B_{\tau}}\right)\right]+y^*\mathbb{E}\left[\frac{Z^{\Tilde{\pi}^*}_{\tau}}{B_{\tau}}X_{\tau}\right]=\Tilde{V}_a(y^*)+xy^*.$$
This, combining with \eqref{5.13}, implies that the first and second claim hold true.

To prove the third claim, define a martingale by $$M_t:=\mathbb{E}\left[\left.\frac{Z^{\Tilde{\pi}^*}_{\tau}}{B_{\tau}}x^*_{h_a}\left(y^*\frac{Z^{\Tilde{\pi}^*}_{\tau}}{B_{\tau}}\right)\right|\mathcal{F}_t\right],\quad t\in[0,\tau].$$
From proposition \ref{unique} and \eqref{ell'}, one knows that $y^*$ satisfies $\ell^{\prime}_{a,y^*}(1)+y^*=0$, which yields that
$$M_0=\mathbb{E}\left[\frac{Z^{\Tilde{\pi}^*}_{\tau}}{B_{\tau}}x^*_{h_a}\left(y^*\frac{Z^{\Tilde{\pi}^*}_{\tau}}{B_{\tau}}\right)\right]=-\frac{\ell^{\prime}_{a,y^*}(1)}{y^*}=1.$$ Hence, by martingale representation theorem, $M_t$ admits the expression that
\begin{eqnarray}
\label{mart.repre.theo.}
M_t=1+\int_0^t\eta^{\mathsf{T}}_sdW_s,\quad t\in[0,\tau],
\end{eqnarray}
where $\eta$ is a $\{\mathcal{F}_t\}$-progressively measurable process satisfying $\int_0^{\tau}\|\eta_t\|^2dt<\infty$ almost surely. Let us define $\hat{X}_t$ as \eqref{hat.X}. It suffices to verify that $\hat{X}_0=1$ and $\hat{X}_{\tau}=X^*_{\tau}$ as well as $d\hat{X}_t=[r\hat{X}_t+(\pi_t^*)^{\mathsf{T}}(\mu-r\mathbf{1})]dt+(\pi_t^*)^{\mathsf{T}}\sigma dW_t$. To this end, we observe that
\begin{eqnarray}
d\left[\frac{B_{t}}{Z^{\Tilde{\pi}^*}_t}\right]=\frac{B_{t}}{Z^{\Tilde{\pi}^*}_t}\left[\left(r+\|\xi+\sigma^{-1}{\Tilde{\pi}^*_t}\|^2\right)dt+(\xi+\sigma^{-1}{\Tilde{\pi}^*_t})^{\mathsf{T}}dW_t\right],\quad t\in[0,\tau].\nonumber
\end{eqnarray}
Therefore,
\begin{eqnarray}\label{dhat.X}
d\hat{X}_t=d\left[\frac{B_t}{Z^{\Tilde{\pi}^*}_t}M_t\right]
\hspace{-0.3cm}&=&\hspace{-0.3cm}
\hat{X}_t\left[\left(r+\|\xi+\sigma^{-1}{\Tilde{\pi}^*_t}\|^2\right)dt+(\xi+\sigma^{-1}{\Tilde{\pi}^*_t})^{\mathsf{T}}dW_t\right]
\nonumber\\
\hspace{-0.3cm}&&\hspace{-0.3cm}
+\eta^{\mathsf{T}}_t\frac{\hat{X}_t}{M_t}dW_t
+\eta^{\mathsf{T}}_t\frac{\hat{X}_t}{M_t}(\xi+\sigma^{-1}{\Tilde{\pi}^*_t})dt
\nonumber\\
\hspace{-0.3cm}&=&\hspace{-0.3cm}
r\hat{X}_tdt+(\pi_t^*)^{\mathsf{T}}(\mu-r\mathbf{1})dt+(\pi_t^*)^{\mathsf{T}}{\Tilde{\pi}^*_t}dt+(\pi^*_t)^{\mathsf{T}}\sigma dW_t.
\end{eqnarray}
Recall that the function $h_a$ satisfies \eqref{h_a'>h_a'} and $\Tilde{\pi}^*$ is the optimal solution to the dual problem \eqref{dual.pro}. Then, by virtue of \eqref{dhat.X} and the Strong Duality Theorem 4.8 in \cite{XS92(a)}, it holds that $(\pi^*_t)^{\mathsf{T}}\Tilde{\pi}^*_t=0$, which together with \eqref{dhat.X} implies the third claim.
\end{proof}

\ \\
\begin{proof}[Proof of Proposition \ref{fixppower}]
Since the proofs for the cases $\alpha\in(0,1)$ and $\alpha\in(-\infty,0)$ are similar, we will only present the proof for the situation where $\alpha\in(0,1)$.
Consider any $a_1,a_2>0$. By Proposition \ref{relationship}, there exists an optimizer $\pi^{*}$ and the resulting wealth process $X^{*}$ such that
\begin{eqnarray}
\Psi(a_1)
\hspace{-0.3cm}&=&\hspace{-0.3cm}
\alpha\sup_{\pi\in{\mathcal{U}}_{0}(1)}\mathbb{E}\left[e^{-\delta\tau}U(X_{\tau})+\frac{1}{\alpha}a_1e^{-\delta\tau}X_{\tau}^{\alpha(1-\gamma)}\right]
\nonumber\\
\hspace{-0.3cm}&=&\hspace{-0.3cm}
\alpha\mathbb{E}\left[e^{-\delta\tau}U(X_{\tau}^{*})+\frac{1}{\alpha}a_1e^{-\delta\tau}(X_{\tau}^{*})^{\alpha(1-\gamma)}\right].\nonumber
\end{eqnarray}
Then, it holds that
\begin{eqnarray}
\label{2.25}
\Psi(a_1)-\Psi(a_2)
\hspace{-0.3cm}&=&\hspace{-0.3cm}
\alpha\mathbb{E}\left[e^{-\delta\tau}U(X_{\tau}^{*})+\frac{1}{\alpha}a_1e^{-\delta\tau}(X_{\tau}^{*})^{\alpha(1-\gamma)}\right]-\alpha\sup_{\pi\in{\mathcal{U}}_{0}(1)}\mathbb{E}\left[e^{-\delta\tau}U(X_{\tau})+\frac{1}{\alpha}a_2e^{-\delta\tau}X_{\tau}^{\alpha(1-\gamma)}\right]
\nonumber\\
\hspace{-0.3cm}&\leq&\hspace{-0.3cm}
\alpha\mathbb{E}\left[e^{-\delta\tau}U(X_{\tau}^{*})+\frac{1}{\alpha}a_1e^{-\delta\tau}(X_{\tau}^{*})^{\alpha(1-\gamma)}\right]-\alpha\mathbb{E}\left[e^{-\delta\tau}U(X_{\tau}^{*})+\frac{1}{\alpha}a_2e^{-\delta\tau}(X_{\tau}^{*})^{\alpha(1-\gamma)}\right]
\nonumber\\
\hspace{-0.3cm}&\leq&\hspace{-0.3cm}
(a_1-a_2)e^{-\delta\tau}\sup_{\pi\in{\mathcal{U}}_{0}(1)}\mathbb{E}[X_{\tau}^{\alpha(1-\gamma)}]
\nonumber\\
\hspace{-0.3cm}&=&\hspace{-0.3cm}
(a_1-a_2)e^{-(\delta-\zeta(\alpha(1-\gamma)))\tau}
\nonumber\\
\hspace{-0.3cm}&\leq&\hspace{-0.3cm}
e^{-(\delta-\zeta(\alpha(1-\gamma)))\tau}|a_1-a_2|.
\end{eqnarray}
Here, in the second equality we have used the fact that
$$\sup_{\pi\in{\mathcal{U}}_{0}(1)}\mathbb{E}[X_{\tau}^{\alpha(1-\gamma)}]=e^{\zeta(\alpha(1-\gamma))\tau},$$
because it is the value function of the utility maximization on terminal wealth under the prohibition of short-selling the maturity $\tau$ and the initial wealth of 1. Due to \eqref{2.25} and the standing Assumption \ref{ass1}, the existence of the unique fixed-point $A^*$ for $\Psi$ immediately follows from the Banach contraction theorem.

Next, noting that $\pi\equiv0$ is admissible to problem \eqref{problem3}, we readily obtain that
\begin{eqnarray}\label{lower}
\Psi(a)\geq e^{(r\alpha-\delta)\tau}+ae^{-(\delta-r\alpha(1-\gamma))\tau}.
\end{eqnarray}
On the other hand, we have
\begin{eqnarray}\label{upper}
\Psi(a)
\hspace{-0.3cm}&=&\hspace{-0.3cm}
\alpha\sup_{\pi\in{\mathcal{U}}_{0}(1)}\mathbb{E}\left[\frac{1}{\alpha}e^{-\delta\tau}X_{\tau}^{\alpha}+\frac{1}{\alpha}ae^{-\delta\tau}X_{\tau}^{\alpha(1-\gamma)}\right]
\nonumber\\
\hspace{-0.3cm}&\leq&\hspace{-0.3cm}
\sup_{\pi\in{\mathcal{U}}_{0}(1)}\mathbb{E}\left[e^{-\delta\tau}X_{\tau}^{\alpha}\right]+a\sup_{\pi\in{\mathcal{U}}_{0}(1)}\mathbb{E}\left[e^{-\delta\tau}X_{\tau}^{\alpha(1-\gamma)}\right]
\nonumber\\
\hspace{-0.3cm}&=&\hspace{-0.3cm}
e^{(\zeta(\alpha)-\delta)\tau}+ae^{-(\delta-\zeta(\alpha(1-\gamma)))\tau}.
\end{eqnarray}
In view of \eqref{lower}, it is easy to see that
$$A^*=\Psi(A^*)\geq e^{(r\alpha-\delta)\tau}+A^*e^{-(\delta-r\alpha(1-\gamma))\tau},$$
which gives the lower bound of $A^*$.

By \eqref{upper}, one can also derive that
$$A^*=\Psi(A^*)\leq e^{(\zeta(\alpha)-\delta)\tau}+A^*e^{-(\delta-\zeta(\alpha(1-\gamma)))\tau},$$
which leads to the upper bound of $A^*$. The proof is complete.
\end{proof}

\ \\
\begin{proof}[Proof of Theorem \ref{thm4.1}]
For any admissible process $\pi\in\mathcal{U}_0(x)$ and the associated wealth process $X$, define a discrete-time stochastic process $D=(D_n)_{n\geq 0}$ by
$$D_n:=\sum_{i=1}^ne^{-\delta T_i}U\bigg(\frac{X_{T_i}}{X_{T_{i-1}}^{\gamma}}\bigg)+\frac{1}{\alpha}A^*e^{-\delta T_n}X_{T_n}^{\alpha(1-\gamma)}.$$
Then
\begin{eqnarray}\label{Dn+1}
D_{n+1}
\hspace{-0.3cm}&=&\hspace{-0.3cm}
D_{n}+e^{-\delta T_{n+1}}U\bigg(\frac{X_{T_{n+1}}}{X_{T_{n}}^{\gamma}}\bigg)-\frac{1}{\alpha}A^*e^{-\delta T_n}X_{T_n}^{\alpha(1-\gamma)}+\frac{1}{\alpha}A^*e^{-\delta T_{n+1}}X_{T_{n+1}}^{\alpha(1-\gamma)}
\nonumber\\
\hspace{-0.3cm}&=&\hspace{-0.3cm}
D_n+e^{-\delta T_n}\left[e^{-\delta\tau}\left(U\bigg(\frac{X_{T_{n+1}}}{X^{\gamma}_{T_{n}}}\bigg)+\frac{1}{\alpha}A^*X^{\alpha(1-\gamma)}_{T_{n+1}}\right)-\frac{1}{\alpha}A^*X^{\alpha(1-\gamma)}_{T_n}\right].
\end{eqnarray}
Taking the conditional expectation on both sides of \eqref{Dn+1}, one gets
\begin{eqnarray}\label{3.4}
\mathbb{E}[D_{n+1}|\mathcal{F}_{T_n}]=D_n+e^{-\delta T_n}X^{\alpha(1-\gamma)}_{T_n}\left[e^{-\delta\tau}\mathbb{E}\left[\left.U\bigg(\frac{X_{T_{n+1}}}{X_{T_n}}\bigg)+\frac{1}{\alpha}A^*\bigg(\frac{X_{T_{n+1}}}{X_{T_n}}\bigg)^{\alpha(1-\gamma)}\right|\mathcal{F}_{T_n}\right]-\frac{1}{\alpha}A^*\right].
\end{eqnarray}
Furthermore, it holds that
\begin{eqnarray}\label{3.5}
\hspace{-0.3cm}&&\hspace{-0.3cm}\mathbb{E}\left[U\bigg(\frac{X_{T_{n+1}}}{X_{T_n}}\bigg)+\frac{1}{\alpha}A^*\bigg(\frac{X_{T_{n+1}}}{X_{T_n}}\bigg)^{\alpha(1-\gamma)}\bigg|\mathcal{F}_{T_n}\right]
\nonumber\\
\hspace{-0.3cm}&\leq&\hspace{-0.3cm}\sup_{\pi\in{\mathcal{U}}_{T_n}(1)}\mathbb{E}\left[U(X_{T_{n+1}})+\frac{1}{\alpha}A^*X_{T_{n+1}}^{\alpha(1-\gamma)}\bigg|\mathcal{F}_{T_n}\right]=\frac{1}{\alpha}H(A^*).
\end{eqnarray}
Recall that $A^*$ is the fixed-point of $\Psi(a)=e^{-\delta\tau}H(a)$. This, together with \eqref{3.4} and \eqref{3.5}, implies
\begin{eqnarray}\label{3.6}
\mathbb{E}[D_{n+1}|\mathcal{F}_{T_n}]\leq D_n+e^{-\delta T_n}X_{T_n}^{\alpha(1-\gamma)}\left[e^{-\delta\tau}\frac{1}{\alpha}H(A^*)-\frac{1}{\alpha}A^*\right]=D_n.
\end{eqnarray}
Hence, $(D_n)_{n\geq 0}$ is a $\{\mathcal{F}_{T_n}\}$-supermartingale, which yields that
\begin{eqnarray}
\frac{1}{\alpha}A^*x^{\alpha(1-\gamma)}=D_0\geq\mathbb{E}\left[\sum_{i=1}^ne^{-\delta T_i}U\bigg(\frac{X_{T_i}}{X_{T_{i-1}}^{\gamma}}\bigg)+\frac{1}{\alpha}A^*e^{-\delta T_n}X_{T_n}^{\alpha(1-\gamma)}\right].\nonumber
\end{eqnarray}
As a result, 
\begin{eqnarray}
\mathbb{E}\left[\sum_{i=1}^ne^{-\delta T_i}U\bigg(\frac{X_{T_i}}{X_{T_{i-1}}^{\gamma}}\bigg)\right]
\hspace{-0.3cm}&\leq&\hspace{-0.3cm}
\frac{1}{\alpha}A^*x^{\alpha(1-\gamma)}+e^{-\delta T_n}A^*
\sup_{X\in\mathcal{U}_0(x)}
\mathbb{E}\left[\frac{1}{\alpha}X^{\alpha(1-\gamma)}_{T_n}\right]
\nonumber\\
\hspace{-0.3cm}&=&\hspace{-0.3cm}
\frac{1}{\alpha}A^*x^{\alpha(1-\gamma)}+\frac{1}{\alpha}e^{-(\delta-\zeta(\alpha(1-\gamma))) T_n}A^*x^{\alpha(1-\gamma)},\nonumber
\end{eqnarray}
where we have used again the fact that $\sup_{X\in\mathcal{U}_0(x)}\mathbb{E}\big[\frac{1}{\alpha}X^{\alpha(1-\gamma)}_{T_n}\big]=\frac{1}{\alpha}x^{\alpha(1-\gamma)}e^{\zeta(\alpha(1-\gamma)) T_n}$. By Assumption \ref{ass1} (i.e., $\delta>\zeta(\alpha(1-\gamma))\vee0$) and the monotone convergence theorem, we have
\begin{eqnarray}
\mathbb{E}\left[\sum_{i=1}^{\infty}e^{-\delta T_i}U\bigg(\frac{X_{T_i}}{X_{T_{i-1}}^{\gamma}}\bigg)\right]\leq \frac{1}{\alpha}A^*x^{\alpha(1-\gamma)},\nonumber
\end{eqnarray}
and hence
\begin{eqnarray}
V(x)=\sup_{\pi\in\mathcal{U}_0(x)}\mathbb{E}\left[\sum_{i=1}^{\infty}e^{-\delta T_i}U\bigg(\frac{X_{T_i}}{X_{T_{i-1}}^{\gamma}}\bigg)\right]\leq \frac{1}{\alpha}A^*x^{\alpha(1-\gamma)}.\nonumber
\end{eqnarray}
For the reverse inequality, it suffices to show the existence of some admissible process $X^*$ such that $$\mathbb{E}\left[\sum_{i=1}^{\infty}e^{-\delta T_i}U\left(\frac{X^*_{T_i}}{(X^*_{T_{i-1}})^{\gamma}}\right)\right]=\frac{1}{\alpha}A^*x^{\alpha(1-\gamma)}.$$
By Proposition \ref{relationship}, under the choice of $\frac{X_{T_{n+1}}}{X_{T_n}}=x^*_{h_{A^*}}\left(y^*_n\frac{Z^{\Tilde{\pi}^*}_{T_{n+1}}/B_{T_{n+1}}}{Z^{\Tilde{\pi}^*}_{T_n}/B_{T_n}}\right)$, it holds that
\begin{eqnarray}
\hspace{-0.3cm}&&\hspace{-0.3cm}
\mathbb{E}\left[\left.U\bigg(\frac{X_{T_{n+1}}}{X_{T_n}}\bigg)+\frac{1}{\alpha}A^*\bigg(\frac{X_{T_{n+1}}}{X_{T_n}}\bigg)^{\alpha(1-\gamma)}\right|\mathcal{F}_{T_n}\right]
\nonumber\\
\hspace{-0.3cm}&=&\hspace{-0.3cm}
\sup_{\pi\in{\mathcal{U}}_{T_n}(1)}\mathbb{E}\left[U(X_{T_{n+1}})+\frac{1}{\alpha}A^*X_{T_{n+1}}^{\alpha(1-\gamma)}\bigg|\mathcal{F}_{T_n}\right]=\alpha H(A^*).\nonumber
\end{eqnarray}
Here, $y^*_n$ is the unique solution to the equation
\begin{eqnarray}
\ell^{(n)\,\prime}_{A^*,y}(1)+y=0,\nonumber
\end{eqnarray}
where the function $\mathbb{R}_+\ni\lambda\mapsto\ell_{A^*,y}^{(n)}(\lambda)\in[0,\infty)$ is given by 
\begin{align*}
\ell^{(n)}_{A^*,y}(\lambda) &:=
\mathbb{E}\left[\left.\Phi_{h_{A^*}}\left(\lambda y\frac{Z^{\Tilde{\pi}^*}_{T_{n+1}}/B_{T_{n+1}}}{Z^{\Tilde{\pi}^*}_{T_{n}}/B_{T_{n}}}\right)\right|\mathcal{F}_{T_{n}}\right]
\nonumber\\
&=\mathbb{E}\left[\left.h_{A^*}\left(x^*_{h_{A^*}}\left(\lambda y\frac{Z^{\Tilde{\pi}^*}_{T_{n+1}}/B_{T_{n+1}}}{Z^{\Tilde{\pi}^*}_{T_{n}}/B_{T_{n}}}\right)\right)-x^*_{h_{A^*}}\left(\lambda y\frac{Z^{\Tilde{\pi}^*}_{T_{n+1}}/B_{T_{n+1}}}{Z^{\Tilde{\pi}^*}_{T_{n}}/B_{T_{n}}}\right)\lambda y\frac{Z^{\Tilde{\pi}^*}_{T_{n+1}}/B_{T_{n+1}}}{Z^{\Tilde{\pi}^*}_{T_{n}}/B_{T_{n}}}\right|\mathcal{F}_{T_{n}}\right].
\end{align*}
Furthermore, by Lemma \ref{ell'(1).equ}, one has that 
\begin{eqnarray}
\ell_{a,y}^{(n)\,\prime}(1)=-y\mathbb{E}\left[\left.\frac{Z^{\Tilde{\pi}^*}_{T_{n+1}}/B_{T_{n+1}}}{Z^{\Tilde{\pi}^*}_{T_{n}}/B_{T_{n}}}x_{h_a}^*\left(y\frac{Z^{\Tilde{\pi}^*}_{T_{n+1}}/B_{T_{n+1}}}{Z^{\Tilde{\pi}^*}_{T_{n}}/B_{T_{n}}}\right)\right|\mathcal{F}_{T_{n}}\right].\nonumber
\end{eqnarray}
In view of $Z^{\Tilde{\pi}^*}$ in \eqref{Z.tilde} and the stationary increment property of Brownian motion, we can derive that
\begin{eqnarray}\label{3.11}
\ell_{A^*,y^*_n}^{(n)\,\prime}(1)
\hspace{-0.3cm}&=&\hspace{-0.3cm}
-\mathbb{E}\left[y_n^*\frac{Z^{\Tilde{\pi}^*}_{T_{n+1}}/B_{T_{n+1}}}{Z^{\Tilde{\pi}^*}_{T_{n}}/B_{T_{n}}}x_{h_{A^*}}^*\left(y_n^*\frac{Z^{\Tilde{\pi}^*}_{T_{n+1}}/B_{T_{n+1}}}{Z^{\Tilde{\pi}^*}_{T_{n}}/B_{T_{n}}}\right)\Bigg|\mathcal{F}_{T_n}\right]
\nonumber\\
\hspace{-0.3cm}&=&\hspace{-0.3cm}
-\mathbb{E}\left[y_n^*\frac{Z^{\Tilde{\pi}^*}_{T_1}}{B_{T_1}}x_{h_{A^*}}^*\left(y_n^*\frac{Z^{\Tilde{\pi}^*}_{T_1}}{B_{T_1}}\right)\Bigg|\mathcal{F}_{T_0}\right]
=
-\mathbb{E}\left[y^*\frac{Z^{\Tilde{\pi}^*}_{\tau}}{B_{\tau}}x_{h_{A^*}}^*\left(y^*\frac{Z^{\Tilde{\pi}^*}_{\tau}}{B_{\tau}}\right)\right]=-y^*,
\end{eqnarray}
and $y^*_n=y^*_1=y^*$ for all $n\geq 1$.

Next, let us define a sequence of random variables recursively as follows
\begin{eqnarray}
Q_n:=\left\{
\begin{array}{ll}
xx^*_{h_{A^*}}\left(y^*\frac{Z^{\Tilde{\pi}^*}_{\tau}}{B_{\tau}}\right), & n=1, \\
     Q_{n-1}x^*_{h_{A^*}}\left(y^*\frac{Z^{\Tilde{\pi}^*}_{T_i}/B_{T_{i}}}{Z^{\Tilde{\pi}^*}_{T_{i-1}}/B_{T_{i-1}}}\right),& n=2,3,....
\end{array}
\right.\nonumber
\end{eqnarray}
Then, by \eqref{3.11}, it holds that
\begin{eqnarray}
\mathbb{E}\left[\left.\frac{Z^{\Tilde{\pi}^*}_{T_{n+1}}}{B_{T_{n+1}}}Q_{{n+1}}\right|\mathcal{F}_{T_n}\right]=\mathbb{E}\left[\left.\frac{Z^{\Tilde{\pi}^*}_{T_{n+1}}}{B_{T_{n+1}}}Q_{n}x^*_{h_{A^*}}\left(y^*\frac{Z^{\Tilde{\pi}^*}_{T_{n+1}}/B_{T_{n+1}}}{Z^{\Tilde{\pi}^*}_{T_{n}}/B_{T_{n}}}\right)\right|\mathcal{F}_{T_n}\right]=\frac{Z^{\Tilde{\pi}^*}_{T_n}}{B_{T_n}}Q_{n},\nonumber
\end{eqnarray}
which implies that $\big(\frac{Z^{\Tilde{\pi}^*}_{T_n}}{B_{T_n}}Q_{n}\big)_{n\geq 1}$ is a $\{\mathcal{F}_{T_n}\}$-martingale. Using standard arguments of martingale representation theorem, there exists a $\pi^*\in\mathcal{U}_0(x)$ such that the associated wealth process $X^*_{T_n}=Q_n$ for all $n$. Define
$$D^*_n:=\sum_{i=1}^ne^{-\delta T_i}U\bigg(\frac{X^*_{T_i}}{{(X^*_{T_{i-1}})}^{\gamma}}\bigg)+\frac{1}{\alpha}A^*e^{-\delta T_n}{(X^*_{T_n})}^{\alpha(1-\gamma)}, \quad n\geq 0.$$
Using the same arguments leading to \eqref{3.6}, it is easy to conclude that
\begin{eqnarray}
\mathbb{E}[D^*_{n+1}|\mathcal{F}_{T_n}]=D^*_n+e^{-\delta T_n}{(X^*_{T_n})}^{\alpha(1-\gamma)}\left[e^{-\delta\tau}\frac{1}{\alpha}H(A^*)-\frac{1}{\alpha}A^*\right]=D^*_n, \quad n\geq 0,\nonumber
\end{eqnarray}
which yields that $(D^*_n)_{n\geq0}$ is a $\{\mathcal{F}_{T_n}\}$-martingale. Hence, we have
\begin{eqnarray}
\mathbb{E}\left[\sum_{i=1}^{n}e^{-\delta T_i}U\bigg(\frac{X^*_{T_i}}{(X^*_{T_{i-1}})^{\gamma}}\bigg)\right]
\hspace{-0.3cm}&=&\hspace{-0.3cm}
\frac{1}{\alpha}A^*x^{\alpha(1-\gamma)}-\frac{1}{\alpha}A^*\mathbb{E}\left[e^{-\delta T_n}(X^*_{T_n})^{\alpha(1-\gamma)}\right].\nonumber
\end{eqnarray}
By Assumption \ref{ass1} and the monotone convergence theorem, we have
\begin{eqnarray}
\mathbb{E}\left[\sum_{i=1}^{\infty}e^{-\delta T_i}U\bigg(\frac{X^*_{T_i}}{(X^*_{T_{i-1}})^{\gamma}}\bigg)\right]=\frac{1}{\alpha}A^*x^{\alpha(1-\gamma)},\quad x\in\mathbb{R}_+.\nonumber
\end{eqnarray}
Finally, the representation of the optimal wealth process $X^*$ and the optimal portfolio process $\pi^*$ defined by \eqref{X.rep} and \eqref{pi.rep} can be verify by applying similar arguments used in the proof of Proposition \ref{relationship}, and hence, we omit it.
The proof is complete.
\end{proof}

\ \\
\begin{proof}[Proof of Corollary \ref{coro4.1}]
In view of $\gamma=1$ and Theorem \ref{thm4.1}, one knows that the value function of the problem \eqref{problem} is given by $V(x)=\frac{1}{\alpha}A^*$ for all $x\in\mathbb{R}_+$ with $A^*$ being the unique fixed-point of the function $\Psi$ defined in \eqref{Psi}. More specifically, we have
$$A^*=e^{-\delta\tau}H(A^*)=e^{-\delta\tau}\alpha\sup_{\pi\in\mathcal{U}_{0}(1)}\mathbb{E}\left[\frac{1}{\alpha}X_{\tau}^{\alpha}+\frac{1}{\alpha}A^*\right]=e^{-\delta\tau}A^*+e^{-\delta\tau}e^{\zeta(\alpha)\tau},$$
which implies that
$A^*=\frac{e^{(\zeta(\alpha)-\delta)\tau}}{1-e^{-\delta\tau}}$. The first claim holds.

For the second claim, when $\gamma=1$, the function $h_{A^*}$ reduces to $h_{A^*}(x)=\frac{1}{\alpha}x^{\alpha}$ with its inverse function being $x^*_{h_{A^*}}(y)=y^{\frac{1}{\alpha-1}}$. Then, we solve the equation $\ell^{\prime}_{A^*,y}(1)+y=0$ and derive its solution $$y^*=\left(\mathbb{E}\left[\left(\frac{Z^{\Tilde{\pi}^*}_{\tau}}{B_{\tau}}\right)^{\frac{\alpha}{\alpha-1}}\right]\right)^{1-\alpha}=e^{\zeta(\alpha)\tau}.$$
Using Theorem \ref{thm4.1}, we have
\begin{eqnarray}
X^*_{T_i}=X^*_{T_{i-1}}e^{\frac{\zeta(\alpha)}{\alpha-1}\tau}\left(\frac{Z^{\Tilde{\pi}}_{T_i}/B_{T_i}}{Z^{\Tilde{\pi}}_{T_{i-1}}/B_{T_{i-1}}}\right)^{\frac{1}{\alpha-1}}.\nonumber
\end{eqnarray}
Furthermore, one can use the results of Section 3 and Section 4 in \cite{XS92(b)} to obtain that the optimal wealth process $X^*$
for one-period terminal wealth utility maximization problem $\sup_{\pi\in\mathcal{U}_0(1)}\mathbb{E}[\frac{1}{\alpha}X_{\tau}^{\alpha}]$ can be expressed as
\begin{eqnarray}
X^*_{t}=xe^{\frac{\zeta(\alpha)}{\alpha-1}\tau}\left(\frac{Z^{\Tilde{\pi}}_{t}}{B_{t}}\right)^{\frac{1}{\alpha-1}},\quad t\in[0,\tau].\nonumber
\end{eqnarray}
and 
\begin{eqnarray}
\pi^*_t=\frac{1}{1-\alpha}X^*_t(\sigma^{\mathsf{T}})^{-1}[\xi+\sigma^{-1}\Tilde{\pi}^*],\quad t\in[0,\tau].\nonumber
\end{eqnarray}
Then, by applying similar argument as used in the proof of Theorem \ref{thm4.1}, we arrive at the desired results.
\end{proof}

\ \\
\begin{proof}[Proof of Proposition \ref{taupower}]
Following from Corollary \ref{coro4.1}, when $\gamma=1$, the value function of the problem \eqref{problem} is given by 
$$V(x;\tau)=\frac{1}{\alpha}A^*=\frac{e^{(\zeta(\alpha)-\delta)\tau}}{\alpha(1-e^{-\delta\tau})},\quad (x,\tau)\in\mathbb{R}_+\times\mathbb{R}_+.$$ Let 
\begin{eqnarray}\label{g(tau)}
g(\tau):=A^*\tau=\frac{e^{(\zeta(\alpha)-\delta)\tau}\tau}{1-e^{-\delta\tau}},\quad \tau\in\mathbb{R}_+.
\end{eqnarray}
Differentiating both sides of \eqref{g(tau)} gives
\begin{eqnarray}
g^{\prime}(\tau)=
\frac{e^{(\zeta(\alpha)-\delta)\tau}[(1-e^{-\delta\tau})(\zeta(\alpha)\tau+1)-\delta\tau]}{(1-e^{-\delta\tau})^2},\quad\tau\in\mathbb{R}_+.\nonumber
\end{eqnarray}
One can verify that, under the assumption $\frac{\delta}{2}<\zeta(\alpha)<\delta$, it holds that
$$\lim_{\tau\rightarrow0+}g(\tau)=\frac{1}{\delta}>0,\quad
\lim_{\tau\rightarrow\infty}g(\tau)=0,\quad \text{and} \quad \lim_{\tau\rightarrow0+}g^{\prime}(\tau)=\frac{2\zeta(\alpha)-\delta}{2\delta}>0.$$
It therefore follows that $g(\tau)$ attains its maximum at $\tau^*\in(0,\infty)$, which completes the proof.
\end{proof}
\subsection{Proofs in Section \ref{section3}}
\begin{proof}[Proof of Proposition \ref{existence.log}] 
We again apply an adapted version of Theorem 2.1 of \cite{XS92(b)} to explicitly characterize an optimal solution $\Tilde{\pi}_y\in\Tilde{\mathcal{U}}_{0,\tau}$ to the dual problem \eqref{dual.pro.log} with initial condition $y$. Actually, instead of checking $\mathbb{E}\left[-\log(Z_{t}^{\tilde{\pi}})\right]<\infty$ (the same type of condition is used in Lemma 5.1 of \cite{XS92(b)}), we can simply check the following stronger condition as
\begin{align}
    \hspace{-0.3cm}
    \mathbb{E}\left[\left|-\log(Z_{t}^{\tilde{\pi}})\right|\right]
    &=\mathbb{E}\left[\left|\int_0^t\left[\xi+{\sigma}^{-1}{\Tilde{\pi}_s}\right]^{\mathsf{T}}dW_s+\frac{1}{2}\int_0^t\left\|\xi+{\sigma}^{-1}{\Tilde{\pi}_s}\right\|^2ds\right|\right]
    \nonumber\\
    &\leq \left[\mathbb{E}\left[\left(\int_0^t\left[\xi+{\sigma}^{-1}{\Tilde{\pi}_s}\right]^{\mathsf{T}}dW_s\right)^{2}\right]\right]^{1/2}+\frac{1}{2}\mathbb{E}\left[\int_0^t\left\|\xi+{\sigma}^{-1}{\Tilde{\pi}_s}\right\|^2ds\right]
    \nonumber\\
    &= \left[\mathbb{E}\left[\int_0^t\left\|\xi+{\sigma}^{-1}{\Tilde{\pi}_s}\right\|^{2} ds\right]\right]^{1/2}+\frac{1}{2}\mathbb{E}\left[\int_0^t\left\|\xi+{\sigma}^{-1}{\Tilde{\pi}_s}\right\|^2ds\right]<\infty,\quad t\geq 0,
    \label{checkcond.}
\end{align}
where, in the last inequality, we have taken into account the facts that $\left\|\xi+{\sigma}^{-1}{\Tilde{\pi}_s}\right\|^2\leq 2\left\|\xi\right\|^2+2\left\|{\sigma}^{-1}{\Tilde{\pi}_s}\right\|^2\leq 2\left\|\xi\right\|^2+2\lambda_{\max}\left\|{\Tilde{\pi}_s}\right\|^2$ ($\lambda_{\max}>0$ denotes the largest eigenvalue of the positive-definite matrix $(\sigma^{-1})^{\mathsf{T}}\sigma^{-1}$) and $\mathbb{E}\left[\int_0^t\|\Tilde{\pi}_u\|^2du\right]<\infty$ (see \eqref{tilde.U}). Combining \eqref{checkcond.}, the non-increasing and convex property of the function $(0,\infty)\ni x\mapsto -\log x$ (note that we do not need the lower-boundedness of this function as Lemma 5.1 of \cite{XS92(b)} does), one can follow similar arguments of Lemma 5.1 of \cite{XS92(b)} to prove the submartingale property of $(-\log(Z_{t}^{\tilde{\pi}}))_{t\geq 0}$.
Then, we can follow the same arguments of Theorems 2.1 and 5.2 in \cite{XS92(b)} to verify that the unique minimizer $\Tilde{\pi}^*$ of the function $[0,\infty)^n \ni \Tilde{\pi}\mapsto\|\xi+\sigma^{-1}\Tilde{\pi}\|$ is an optimal solution to the dual problem \eqref{dual.pro.log} with initial condition $y$.

Moreover, one can apply a similar argument as used in the proofs of Proposition \ref{unique} and \ref{relationship} to verify the second and third claims hold true, and hence, we omit the proof.
\end{proof}

\begin{lem}
\label{lem3.1}
Suppose that
\begin{align}
\label{standing ass.}
    \sum_{i=1}^{\infty}e^{-\delta T_i}\mathbb{E}\left[\left(\log\left(\frac{X_{T_i}}{\left(X_{T_{i-1}}\right)^{\gamma}}\right)\right)_{-}\right]<\infty.
\end{align} We have
    \begin{align}
    \label{|logX|up}
        \mathbb{E}\left[\left|\log X_{T_n}\right|\right]\leq \left|\log x\right|+n\left(\frac{1}{2}\|\Tilde{\xi}\|^2+r\right)\tau
+2\sum_{i=1}^{n}\mathbb{E}\left[\left(\log \frac{X_{T_i}}{(X_{T_{i-1}})^{\gamma}}\right)_{-}\right], \quad n\geq 1,
    \end{align}
and
    \begin{align}
    \label{lim=0}
\lim\limits_{n\rightarrow\infty}e^{-\delta T_{n}}\mathbb{E}\left[\left|\log X_{T_n}\right|\right]=0.
    \end{align}
\end{lem}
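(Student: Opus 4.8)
The plan is to prove \eqref{|logX|up} first, by combining a one-period growth estimate with the ratio recursion $\log X_{T_i}=\gamma\log X_{T_{i-1}}+\log\!\big(X_{T_i}/(X_{T_{i-1}})^{\gamma}\big)$, and then to deduce \eqref{lim=0} from \eqref{|logX|up} together with $\delta>0$ and the standing assumption \eqref{standing ass.}. For \emph{Step 1}, fix $\pi\in\mathcal{U}_0(x)$ with wealth process $X$. Exactly as in the derivation of \eqref{E<x.log} in the proof of Proposition \ref{relationship} (using the constant-vector density $Z^{\Tilde{\pi}^*}$ of \eqref{Z.tilde}, which is well defined for all $t\ge0$, and $\pi^i\ge0$, $\Tilde{\pi}^{*,i}\ge0$), the process $Z^{\Tilde{\pi}^*}_tX_t/B_t$ is dominated by a non-negative local martingale started at $x$, hence $\mathbb{E}\big[Z^{\Tilde{\pi}^*}_{T_n}X_{T_n}/B_{T_n}\big]\le x$. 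Applying the elementary inequality $\log a\le ab-1-\log b$ (for $a,b>0$) with $a=X_{T_n}$ and $b=x^{-1}Z^{\Tilde{\pi}^*}_{T_n}/B_{T_n}$, taking expectations, and using $\mathbb{E}[\log Z^{\Tilde{\pi}^*}_t]=-\tfrac12\|\Tilde{\xi}\|^2t$ and $\log B_t=rt$, I get
\begin{align*}
\mathbb{E}[\log X_{T_n}]\;\le\;\frac1x\,\mathbb{E}\!\left[\frac{Z^{\Tilde{\pi}^*}_{T_n}}{B_{T_n}}X_{T_n}\right]-1+\log x+\Big(r+\tfrac12\|\Tilde{\xi}\|^2\Big)T_n\;\le\;\log x+n\Big(r+\tfrac12\|\Tilde{\xi}\|^2\Big)\tau,
\end{align*}
which in particular gives $\mathbb{E}[(\log X_{T_n})_+]<\infty$.

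For \emph{Step 2}, write $L_n:=\log X_{T_n}$ and $R_i:=\log\big(X_{T_i}/(X_{T_{i-1}})^{\gamma}\big)=L_i-\gamma L_{i-1}$, so that iterating the recursion yields $L_n=\gamma^n\log x+\sum_{i=1}^n\gamma^{n-i}R_i$. Since $a\mapsto a_-$ is subadditive and positively homogeneous and $0<\gamma^k\le1$ for $k\ge0$ (here $\gamma\in(0,1]$), this gives $(L_n)_-\le(\log x)_-+\sum_{i=1}^n(R_i)_-$; taking expectations and invoking \eqref{standing ass.} yields $\mathbb{E}[(L_n)_-]\le(\log x)_-+\sum_{i=1}^n\mathbb{E}[(R_i)_-]<\infty$. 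Hence $\mathbb{E}[L_n]$ is well defined and finite, and $\mathbb{E}[|L_n|]=\mathbb{E}[L_n]+2\mathbb{E}[(L_n)_-]$; substituting the bounds from Steps 1 and 2 and using $\log x+2(\log x)_-=|\log x|$ gives precisely \eqref{|logX|up}.

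For \emph{Step 3}, multiply \eqref{|logX|up} by $e^{-\delta T_n}$, recalling $T_n=n\tau$. Since $\delta>0$, the first two terms satisfy $e^{-\delta n\tau}|\log x|\to0$ and $n\tau\big(r+\tfrac12\|\Tilde{\xi}\|^2\big)e^{-\delta n\tau}\to0$. For the last term, set $c_i:=e^{-\delta T_i}\mathbb{E}[(R_i)_-]\ge0$, so $\sum_{i\ge1}c_i<\infty$ by \eqref{standing ass.} and $e^{-\delta T_n}\sum_{i=1}^n\mathbb{E}[(R_i)_-]=\sum_{i=1}^n e^{-\delta(n-i)\tau}c_i$. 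Given $\varepsilon>0$, choose $N$ with $\sum_{i>N}c_i<\varepsilon$; then for $n>N$,
\begin{align*}
\sum_{i=1}^n e^{-\delta(n-i)\tau}c_i\;\le\;e^{-\delta(n-N)\tau}\sum_{i=1}^N c_i+\sum_{i=N+1}^n c_i\;\le\;e^{-\delta(n-N)\tau}\sum_{i=1}^N c_i+\varepsilon,
\end{align*}
so the $\limsup$ of the left side is at most $\varepsilon$; letting $\varepsilon\downarrow0$ proves \eqref{lim=0}.

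The argument is essentially bookkeeping, and the one delicate point is that the coefficient in \eqref{|logX|up} carries $\|\Tilde{\xi}\|$ rather than the cruder $\|\xi\|$; this is exactly why Step 1 must use the constrained density $Z^{\Tilde{\pi}^*}$ and its supermartingale property under $\pi^i\ge0$ instead of the plain state price density $Z$. A secondary point is that all expectations above must be well defined under only the one-sided integrability \eqref{standing ass.}, which is why the bound on $(L_n)_-$ is established before invoking the identity $\mathbb{E}[|L_n|]=\mathbb{E}[L_n]+2\mathbb{E}[(L_n)_-]$.
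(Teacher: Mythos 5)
Your proof is correct and follows essentially the same route as the paper: an upper bound on $\mathbb{E}[\log X_{T_n}]$ via the constrained density $Z^{\Tilde{\pi}^*}$ (the paper cites the one-period duality result \eqref{3.15} and iterates, while you re-derive the same bound directly over $[0,T_n]$ from the supermartingale property and the Fenchel inequality), combined with the identical subadditivity estimate $(\log X_{T_n})_-\le(\log x)_-+\sum_{i=1}^n\big(\log\frac{X_{T_i}}{(X_{T_{i-1}})^\gamma}\big)_-$ obtained from the recursion and $\gamma\in(0,1]$. Your Step 3 simply spells out the discrete-convolution argument that the paper dismisses as a ``direct consequence,'' which is a welcome addition but not a different method.
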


\begin{proof} By \eqref{Vtilde} and \eqref{3.15}, we know that
\begin{align}
\label{ElogXup}
    \mathbb{E}\left[\log X_{T_n}\right]\leq \log x+n\left(\frac{1}{2}\|\Tilde{\xi}\|^2+r\right)\tau.
\end{align}
Using the facts that $(a+b)_{-}\leq a_{-}+b_{-}$ and $\gamma\in(0,1]$, one gets that
\begin{align}
\label{3.23.v}
    \mathbb{E}\left[\left(\log X_{T_n}\right)_{-}\right]
    &=\mathbb{E}\left[\left(\log \frac{X_{T_n}}{(X_{T_{n-1}})^{\gamma}}+\gamma\log X_{T_{n-1}}\right)_{-}\right]\nonumber\\
    &\leq \mathbb{E}\left[\left(\log \frac{X_{T_n}}{(X_{T_{n-1}})^{\gamma}}\right)_{-}+\gamma\left(\log X_{T_{n-1}}\right)_{-}\right]
    \nonumber\\
    &\leq \mathbb{E}\left[\left(\log \frac{X_{T_n}}{(X_{T_{n-1}})^{\gamma}}\right)_{-}\right]+\mathbb{E}\left[\left(\log X_{T_{n-1}}\right)_{-}\right].
\end{align}
Repeating the above arguments of \eqref{3.23.v} $n$ times yields that
\begin{align}
\label{logX-up}
    \mathbb{E}\left[\left(\log X_{T_n}\right)_{-}\right]
    &\leq \sum_{i=1}^{n}\mathbb{E}\left[\left(\log \frac{X_{T_i}}{(X_{T_{i-1}})^{\gamma}}\right)_{-}\right]+\left(\log x\right)_{-}.
\end{align}
Combining \eqref{ElogXup} and \eqref{logX-up} leads to 
\begin{align}
\label{logX+up}
    \mathbb{E}\left[\left(\log X_{T_n}\right)_{+}\right]
    &=\mathbb{E}\left[\log X_{T_n}\right]
    +\mathbb{E}\left[\left(\log X_{T_n}\right)_{-}\right]
    \nonumber\\
    &\leq 
    \log x+n\left(\frac{1}{2}\|\Tilde{\xi}\|^2+r\right)\tau
+\sum_{i=1}^{n}\mathbb{E}\left[\left(\log \frac{X_{T_i}}{(X_{T_{i-1}})^{\gamma}}\right)_{-}\right]+\left(\log x\right)_{-}
\nonumber\\
    &=\left(\log x\right)_{+}+n\left(\frac{1}{2}\|\Tilde{\xi}\|^2+r\right)\tau
+\sum_{i=1}^{n}\mathbb{E}\left[\left(\log \frac{X_{T_i}}{(X_{T_{i-1}})^{\gamma}}\right)_{-}\right].
\end{align}
Putting together \eqref{logX-up} and \eqref{logX+up} gives \eqref{|logX|up}. In addition, \eqref{lim=0} is a direct consequence of \eqref{standing ass.} and \eqref{|logX|up}.
\end{proof}

\ \\
\begin{proof}[Proof of Theorem \ref{thm3.1}]
For any admissible portfolio process $\pi\in\mathcal{U}_0(x)$ and the resulting wealth process $X$, we define a discrete-time stochastic process $D=(D_n)_{n=1,2,...}$ by
$$D_n:=\sum_{i=1}^ne^{-\delta T_i}\log \bigg(\frac{X_{T_i}}{X_{T_{i-1}}^{\gamma}}\bigg)+e^{-\delta T_n}(A^*+C^*\log X_{T_n}),$$
with $D_0:=A^*+C^*\log x$. 
Similar to Theorem \ref{thm4.1}, one has
$(D_n)_{n\geq 0}$ is a $\{\mathcal{F}_{T_n}\}$-supermartingale and 
\begin{align}
\label{supmarleadsto}
&A^*+C^*\log x=D_0
\nonumber\\
\geq
&\mathbb{E}\left[\sum_{i=1}^ne^{-\delta T_i}\log\bigg(\frac{X_{T_i}}{X_{T_{i-1}}^{\gamma}}\bigg)+e^{-\delta T_n}(A^*+C^*\log X_{T_n})\right]
\nonumber\\
=&\mathbb{E}\left[\sum_{i=1}^ne^{-\delta T_i}\left(\log\bigg(\frac{X_{T_i}}{X_{T_{i-1}}^{\gamma}}\bigg)\right)_{+}-\sum_{i=1}^ne^{-\delta T_i}\left(\log\bigg(\frac{X_{T_i}}{X_{T_{i-1}}^{\gamma}}\bigg)\right)_{-}+e^{-\delta T_n}(A^*+C^*\log X_{T_n})\right].
\end{align}
Thanks to \eqref{standing ass.}, it holds that
\begin{align}
    \mathbb{E}\left[\sum_{i=1}^ne^{-\delta T_i}\left(\log\bigg(\frac{X_{T_i}}{X_{T_{i-1}}^{\gamma}}\bigg)\right)_{-}\right]
    &=\sum_{i=1}^ne^{-\delta T_i}\mathbb{E}\left[\left(\log\bigg(\frac{X_{T_i}}{X_{T_{i-1}}^{\gamma}}\bigg)\right)_{-}\right]
    \nonumber\\
    &\leq \sum_{i=1}^{\infty}
    e^{-\delta T_i}\mathbb{E}\left[\left(\log\bigg(\frac{X_{T_i}}{X_{T_{i-1}}^{\gamma}}\bigg)\right)_{-}\right]<\infty,\quad n\geq 1.\nonumber
\end{align}
Hence, one can rewrite \eqref{supmarleadsto} as
\begin{align*}
A^*+C^*\log x
\geq& \mathbb{E}\left[\sum_{i=1}^ne^{-\delta T_i}\left(\log\bigg(\frac{X_{T_i}}{X_{T_{i-1}}^{\gamma}}\bigg)\right)_{+}\right]-\mathbb{E}\left[\sum_{i=1}^ne^{-\delta T_i}\left(\log\bigg(\frac{X_{T_i}}{X_{T_{i-1}}^{\gamma}}\bigg)\right)_{-}\right]
\nonumber\\
 &+
\mathbb{E}\left[e^{-\delta T_n}(A^*+C^*\log X_{T_n})\right],\quad n\geq 1,\nonumber
\end{align*}
which, together with \eqref{standing ass.}, Lemma \ref{lem3.1}, the monotone convergence theorem and the Fubini's theorem, leads to
\begin{align*}
\mathbb{E}\left[\sum_{i=1}^{\infty}e^{-\delta T_i}\left(\log\bigg(\frac{X_{T_i}}{X_{T_{i-1}}^{\gamma}}\bigg)\right)_{+}\right]
&\leq 
A^*+C^*\log x+ \mathbb{E}\left[\sum_{i=1}^{\infty}e^{-\delta T_i}\left(\log\bigg(\frac{X_{T_i}}{X_{T_{i-1}}^{\gamma}}\bigg)\right)_{-}\right]
\nonumber\\
&=A^*+C^*\log x+ \sum_{i=1}^{\infty}e^{-\delta T_i}\mathbb{E}\left[\left(\log\bigg(\frac{X_{T_i}}{X_{T_{i-1}}^{\gamma}}\bigg)\right)_{-}\right]
\nonumber\\
&<\infty,\quad x\in\mathbb{R}_+.\nonumber
\end{align*}
Therefore, the random variable $\sum_{i=1}^{\infty}e^{-\delta T_i}\left(\log\bigg(\frac{X_{T_i}}{X_{T_{i-1}}^{\gamma}}\bigg)\right)_{+}$ is integrable. Recalling that, by \eqref{standing ass.} and the Fubini's theorem, the random variable $\sum_{i=1}^{\infty}e^{-\delta T_i}\left(\log\bigg(\frac{X_{T_i}}{X_{T_{i-1}}^{\gamma}}\bigg)\right)_{-}$ is also integrable. In addition, one can verify that
\begin{align}\label{Control function}
    &\left|\sum_{i=1}^{n}e^{-\delta T_i}\left(\log\bigg(\frac{X_{T_i}}{X_{T_{i-1}}^{\gamma}}\bigg)\right)\right|\nonumber\\
\leq & \sum_{i=1}^{n}e^{-\delta T_i}\left(\log\bigg(\frac{X_{T_i}}{X_{T_{i-1}}^{\gamma}}\bigg)\right)_{+}
    +
    \sum_{i=1}^{n}e^{-\delta T_i}\left(\log\bigg(\frac{X_{T_i}}{X_{T_{i-1}}^{\gamma}}\bigg)\right)_{-}
    \nonumber\\
    \leq
&    \sum_{i=1}^{\infty}e^{-\delta T_i}\left(\log\bigg(\frac{X_{T_i}}{X_{T_{i-1}}^{\gamma}}\bigg)\right)_{+}
    +
    \sum_{i=1}^{\infty}e^{-\delta T_i}\left(\log\bigg(\frac{X_{T_i}}{X_{T_{i-1}}^{\gamma}}\bigg)\right)_{-},\quad n\geq 1,
\end{align}
where the random variable on the right hand side of \eqref{Control function} is integrable that
\begin{align}
\label{control func.integrable}
\mathbb{E}\left[\sum_{i=1}^{\infty}e^{-\delta T_i}\left(\log\bigg(\frac{X_{T_i}}{X_{T_{i-1}}^{\gamma}}\bigg)\right)_{+}
    +
    \sum_{i=1}^{\infty}e^{-\delta T_i}\left(\log\bigg(\frac{X_{T_i}}{X_{T_{i-1}}^{\gamma}}\bigg)\right)_{-}\right]<\infty.
\end{align}
By \eqref{control func.integrable}, we know that
\begin{align*}
    &\sum_{i=1}^{\infty}e^{-\delta T_i}\left|\log\bigg(\frac{X_{T_i}}{X_{T_{i-1}}^{\gamma}}\bigg)\right|
\nonumber\\
=&
     \sum_{i=1}^{\infty}e^{-\delta T_i}\left(\log\bigg(\frac{X_{T_i}}{X_{T_{i-1}}^{\gamma}}\bigg)\right)_{+}
    +
    \sum_{i=1}^{\infty}e^{-\delta T_i}\left(\log\bigg(\frac{X_{T_i}}{X_{T_{i-1}}^{\gamma}}\bigg)\right)_{-}<\infty\,\,  \text{a.s.}.\nonumber
\end{align*}
Because the absolute convergent series is conditionally convergent, one has
\begin{align}
\label{convergence.a.s.}
    \sum_{i=1}^{n}e^{-\delta T_i}\log\bigg(\frac{X_{T_i}}{X_{T_{i-1}}^{\gamma}}\bigg)\longrightarrow
    \sum_{i=1}^{\infty}e^{-\delta T_i}\log\bigg(\frac{X_{T_i}}{X_{T_{i-1}}^{\gamma}}\bigg) \text{ a.s. when}\ n\rightarrow\infty.
\end{align}
With the help of \eqref{Control function}, \eqref{control func.integrable} and \eqref{convergence.a.s.}, one can apply Lemma \ref{lem3.1} and the dominated convergence theorem to the inequality of \eqref{supmarleadsto} to conclude that
\begin{eqnarray}
\mathbb{E}\left[\sum_{i=1}^{\infty}e^{-\delta T_i}\log\bigg(\frac{X_{T_i}}{X_{T_{i-1}}^{\gamma}}\bigg)\right]\leq A^*+C^*\log x,\nonumber
\end{eqnarray}
and thus it follows that
\begin{eqnarray}
V(x)=\sup_{\pi\in\mathcal{U}_0(x)}\mathbb{E}\left[\sum_{i=1}^{\infty}e^{-\delta T_i}\log\bigg(\frac{X_{T_i}}{X_{T_{i-1}}^{\gamma}}\bigg)\right]\leq A^*+C^*\log x,\quad x\in\mathbb{R}_+.\nonumber
\end{eqnarray}
For the reverse inequality, it suffices to show the existence of some admissible process $X^*$ such that $$\mathbb{E}\left[\sum_{i=1}^{\infty}e^{-\delta T_i}\log\left(\frac{X^*_{T_i}}{(X^*_{T_{i-1}})^{\gamma}}\right)\right]=A^*+C^*\log x,$$
as the proof of Theorem \ref{thm4.1}.
To prove the second claim, we only need to verify that ${X}^*_t=x\frac{B_{t}}{Z^{\Tilde{\pi}^*}_{t}}$ satisfies 
$d{X}^*_t=[r{X}^*_t+(\pi^*_t)^{\mathsf{T}}(\mu-r\mathbf{1})]dt+(\pi^*_t)^{\mathsf{T}}\sigma dW_t$. To this end, we observe that
\begin{eqnarray}
d\left[\frac{B_{t}}{Z^{\Tilde{\pi}^*}_t}\right]=\frac{B_{t}}{Z^{\Tilde{\pi}^*}_t}\left[\left(r+\|\xi+\sigma^{\mathsf{-1}}{\Tilde{\pi}^*_t}\|^2\right)dt+(\xi+\sigma^{-1}{\Tilde{\pi}^*_t})^{\mathsf{T}}dW_t\right],\quad t\in[0,\tau].\nonumber
\end{eqnarray}
Therefore
\begin{eqnarray}\label{dX*}
d{X}^*_t=d\left[x\frac{B_t}{Z^{\Tilde{\pi}^*}_t}\right]
\hspace{-0.3cm}&=&\hspace{-0.3cm}
{X}^*_t\left[\left(r+\|\xi+\sigma^{-1}{\Tilde{\pi}^*_t}\|^2\right)dt+(\xi+\sigma^{-1}{\Tilde{\pi}^*_t})^{\mathsf{T}}dW_t\right]
\nonumber\\
\hspace{-0.3cm}&=&\hspace{-0.3cm}
r{X}^*_tdt+(\pi_t^*)^{\mathsf{T}}(\mu-r\mathbf{1})+(\pi^*_t)^{\mathsf{T}}{\Tilde{\pi}^*_t}dt+(\pi^*_t)^{\mathsf{T}}\sigma dW_t.
\end{eqnarray}
Furthermore, one can verify that for some $\vartheta\in(0,1)$, $\varrho\in(1,\infty)$ and any $x\in\mathbb{R}_+$, it holds that $\vartheta \log^{\prime}(x)\geq \log^{\prime}(\varrho x)$. Recall that $\Tilde{\pi}^*$ is the optimal solution to the dual problem \eqref{dual.pro.log}. Then, using Theorem 4.8 of \cite{XS92(a)},
one can verify that $(\pi^*_t)^{\mathsf{T}}\Tilde{\pi}^*_t=0$, which together with \eqref{dX*} verifies the second claim.
\end{proof}

\ \\
\begin{proof}[Proof of Proposition \ref{optimaltau-1}]
Note that
\begin{eqnarray}\label{3.50}
\hspace{-0.5cm}
V(x;\tau)
\hspace{-0.3cm}&=&\hspace{-0.3cm}
\frac{1-\gamma}{e^{\delta\tau}-1}\left[\left(\frac{e^{\delta\tau}-\gamma}{1-\gamma}\frac{\delta\tau}{e^{\delta\tau}-1}-1\right)\frac{r+\frac{1}{2}\|\Tilde{\xi}\|^2}{\delta}+\log x+\frac{r+\frac{1}{2}\|\Tilde{\xi}\|^2}{\delta}\right],\quad(x,\tau)\in\mathbb{R}_+^{2},
\end{eqnarray}
which, together with the fact that $\lim_{\tau\rightarrow0+}\left(\frac{e^{\delta\tau}-\gamma}{1-\gamma}\frac{\delta\tau}{e^{\delta\tau}-1}-1\right)=0$ and the assumption $\frac{r+\frac{1}{2}\|\Tilde{\xi}\|^2}{\delta}+\log x<0$, implies that 
\begin{eqnarray}
\lim_{\tau\rightarrow0+}V(x;\tau)\frac{e^{\delta\tau}-1}{1-\gamma}=\frac{r+\frac{1}{2}\|\Tilde{\xi}\|^2}{\delta}+\log x<0.\nonumber
\end{eqnarray}
The above inequality and the fact that $\lim\limits_{\tau\rightarrow0+}\frac{e^{\delta\tau}-1}{1-\gamma}=0$ give rise to
\begin{eqnarray}\label{V(tau).1}
\lim_{\tau\rightarrow0+}V(x;\tau)=-\infty.
\end{eqnarray}
In the sequel, we verify that there exists some $\tau_0\in\mathbb{R}_+$ such that $V(x;\tau_0)>0$. In fact, one knows that
$$\lim_{\tau\rightarrow\infty}\left(\frac{e^{\delta\tau}-\gamma}{1-\gamma}\frac{\delta\tau}{e^{\delta\tau}-1}-1\right)=\infty,$$
which together with $r+\frac{1}{2}\|\Tilde{\xi}\|^2>0$ and \eqref{3.50} yields the existence of some $\tau_0\in\mathbb{R}_+$ satisfying
$$V(x;\tau_0)\frac{e^{\delta\tau_0}-1}{1-\gamma}>0,$$
and hence $V(x;\tau_0)>0$.
This, together with \eqref{V(tau).1} and the fact that $\lim_{\tau\rightarrow\infty}V(x;\tau)=0$, implies the desired result. 
\end{proof}

\ \\
\begin{proof}[Proof of Proposition \ref{optimaltau-2}]
For any fixed $x\in\mathbb{R}_+$, let us consider
\begin{eqnarray}\label{f(tau).2}
f(\tau):=V(x;\tau)\tau=\frac{e^{\delta\tau}-\gamma}{(e^{\delta\tau}-1)^2}\left(r+\frac{1}{2}\|\Tilde{\xi}\|^2\right)\tau^2+\frac{(1-\gamma)\tau}{e^{\delta\tau}-1}\log x,\quad \tau\in\mathbb{R}_+.
\end{eqnarray}
For the first claim with $\gamma=1$, the above equation can be reduced to
\begin{eqnarray}\label{f(tau).1}
f(\tau)=\frac{\tau^2}{e^{\delta\tau}-1}\left(r+\frac{1}{2}\|\Tilde{\xi}\|^2\right),\quad \tau\in\mathbb{R}_+.
\end{eqnarray}
Differentiating the both sides of \eqref{f(tau).1} gives
\begin{eqnarray}
f^{\prime}(\tau)=\frac{(2e^{\delta\tau}-2-\tau\delta e^{\delta\tau})\tau}{(e^{\delta\tau}-1)^2}\left(r+\frac{1}{2}\|\Tilde{\xi}\|^2\right),\quad\tau\in\mathbb{R}_+.\nonumber
\end{eqnarray}
One can verify that the function $\mathbb{R}_{+}\ni\tau\mapsto 2e^{\delta\tau}-2-\tau\delta e^{\delta\tau}$ is increasing on $(0,1/\delta)$ and decreasing on $(1/\delta,\infty)$ with $\lim_{\tau\rightarrow0+}(2e^{\delta\tau}-2-\tau\delta e^{\delta\tau})=0$ and $\lim_{\tau\rightarrow\infty}(2e^{\delta\tau}-2-\tau\delta e^{\delta\tau})=-\infty$. Hence, it follows that there exists some $\tau^*\in(1/\delta,\infty)$ such that $f^{\prime}(\tau)$ is positive on $(0,\tau^*)$ and negative on $(\tau^*,\infty)$, implying that $f(\tau)$ is increasing on $(0,\tau^*)$ and decreasing on $(\tau^*,\infty)$ with $\lim_{\tau\rightarrow0+}f(\tau)=\lim_{\tau\rightarrow\infty}f(\tau)=0$. 
Therefore, the first claim holds true. 

For the second claim, we assume $\gamma\in(0,1)$ and $\left(r+\frac{1}{2}\|\Tilde{\xi}\|^2\right)\frac{\gamma}{\delta}-\frac{1-\gamma}{2}\log x>0$. Differentiating the both sides of \eqref{f(tau).2} gives
\begin{eqnarray}
f^{\prime}(\tau)
\hspace{-0.3cm}&=&\hspace{-0.3cm}
\frac{e^{2\delta\tau}(2-\delta\tau)\tau+e^{\delta\tau}[-\delta\tau+2\delta\gamma\tau-2(1+\gamma)]\tau+2\gamma\tau}{(e^{\delta\tau}-1)^3}\left(r+\frac{1}{2}\|\Tilde{\xi}\|^2\right)
\nonumber\\
\hspace{-0.3cm}&&\hspace{-0.3cm}
+\frac{e^{\delta\tau}-1-\tau\delta e^{\delta\tau}}{(e^{\delta\tau}-1)^2}(1-\gamma)\log x,\nonumber
\end{eqnarray}
which combined with the facts that $r+\frac{1}{2}\|\Tilde{\xi}\|^2>0$, $\delta>0$ and $\gamma\in(0,1)$ yields the existence of some $\tau_0\in(0,\infty)$ such that $f^{\prime}(\tau)\leq0$ for all $\tau\in[\tau_0,\infty)$. 
Therefore, there exists some $\tau^*\in[0,\tau_0]$ satisfying $f(\tau^*)\geq f(\tau)$ for $\tau\in[0,\infty)$. One can also verify that $\tau^*\in(0,\tau_0]$.
In fact, according to the assumption in the second claim, it can be seen that 
\begin{align}
\lim_{\tau\rightarrow0+}f^{\prime}(\tau)&=\frac{\gamma}{\delta}\left(r+\frac{1}{2}\|\Tilde{\xi}\|^2\right)-\frac{1-\gamma}{2}\log x>0,\nonumber
\end{align}
which implies that $f(\tau)$ attains its maximum at $\tau^*\in(0,\tau_0]$. The proof is complete.
\end{proof}

\ \\
\ \\
\textbf{Acknowledgements}:  W. Wang is supported by the National Natural Science Foundation of China under no. 12171405 and no. 11661074.
X. Yu is supported by the Hong Kong RGC General Research Fund (GRF) under grant no. 15304122.

\ \\

\end{document}